\DeclareMathOperator*{\argmax}{\arg \max}
\DeclareMathOperator*{\argmin}{\arg \min}
\DeclareMathOperator{\sgn}{sgn}
\newcommand{\nash}[1][]{\ifthenelse{\equal{#1}{}}{\ensuremath{\mathit{NASH}}\xspace}{\ensuremath{\mathit{NASH}(#1)}\xspace}}
\newcommand{\util}[1][]{\ifthenelse{\equal{#1}{}}{\ensuremath{\mathit{UTIL}}\xspace}{\ensuremath{\mathit{UTIL}(#1)}\xspace}}
\newcommand{\cut}[1][]{\ifthenelse{\equal{#1}{}}{\ensuremath{\mathit{CUT}}\xspace}{\ensuremath{\mathit{CUT}(#1)}\xspace}}
\newcommand{\anticut}[1][]{\ifthenelse{\equal{#1}{}}{\ensuremath{\mathit{ANTICUT}}\xspace}{\ensuremath{\mathit{ANTICUT}(#1)}\xspace}}
\newcommand{\unc}[1][]{\ifthenelse{\equal{#1}{}}{\ensuremath{\mathit{UNC}}\xspace}{\ensuremath{\mathit{UNC}(#1)}\xspace}}
\newcommand{\ceg}[1][]{\ifthenelse{\equal{#1}{}}{\ensuremath{\mathit{CEG}}\xspace}{\ensuremath{\mathit{CEG}(#1)}\xspace}}
\newcommand{\egal}[1][]{\ifthenelse{\equal{#1}{}}{\ensuremath{\mathit{EGAL}}\xspace}{\ensuremath{\mathit{EGAL}(#1)}\xspace}}
\newcommand{\edr}[1][]{\ifthenelse{\equal{#1}{}}{\ensuremath{\mathit{EDR}}\xspace}{\ensuremath{\mathit{EDR}(#1)}\xspace}}
\newcommand{\Nash}[1][]{\ifthenelse{\equal{#1}{}}{\ensuremath{\mathit{Nash}}\xspace}{\ensuremath{\mathit{Nash}(#1)}\xspace}}
\newcommand{\Egal}[1][]{\ifthenelse{\equal{#1}{}}{\ensuremath{\mathit{Egal}}\xspace}{\ensuremath{\mathit{Egal}(#1)}\xspace}}
\newcommand{\Util}[1][]{\ifthenelse{\equal{#1}{}}{\ensuremath{\mathit{Util}}\xspace}{\ensuremath{\mathit{Util}(#1)}\xspace}}
\newcommand{\gwelfare}{\ensuremath{g\text{-welfare}}\xspace}
\newcommand{\supp}{\mathrm{supp}}
\renewcommand{\epsilon}{\varepsilon}
\newcommand{\deltat}[1][t]{\delta^{#1}}
\tikzset{>={Latex[width=2mm,length=2mm]}}
\theoremstyle{definition}
\newtheorem{definition}{Definition}
\newtheorem{example}{Example}
\theoremstyle{plain}
\newtheorem{lemma}{Lemma}
\newtheorem{proposition}{Proposition}
\newtheorem{corollary}{Corollary}
\newtheorem{observation}{Observation}
\newtheoremstyle{bfnote}
{}{}%
{}{}%
{\bfseries}{.}%
{ }%
{\thmname{#1}\thmnumber{ #2}\thmnote{\textnormal{ (#3)}}}
\theoremstyle{bfnote}
\newtheorem{remark}{Remark}
\polandurl\url{https://pl.wikipedia.org/wiki/Przekazywanie_1%25_podatku_dochodowego_na_rzecz_organizacji_po%C5%BCytku_publicznego_w_Polsce}

\date{} 

\sloppy

\allowdisplaybreaks

\title{\vspace{-4ex}Coordinating Charitable Donations with Leontief Preferences}

\setkomafont{author}{\normalfont\small}

\author{Felix Brandt\\ TUM, Germany \and Matthias Greger\\ Paris-Dauphine, France \and Erel Segal-Halevi\\Ariel Univ., Israel \and Warut Suksompong\\ NUS, Singapore}

\begin{document}

\maketitle

\begin{abstract}

We consider the problem of funding public goods that are complementary in nature. Examples include charities handling different needs (e.g., protecting animals vs.~providing healthcare), charitable donations to different individuals, or municipal units handling different issues (e.g., security vs. transportation).
We model these complementarities by assuming Leontief preferences; that is, each donor seeks to maximize an individually weighted minimum of all contributions across the charities. 
Decentralized funding may be inefficient due to a lack of coordination among the donors; centralized funding may be undesirable as it ignores the preferences of individual donors.
We present a mechanism that combines the advantages of both methods. The mechanism efficiently distributes each donor's contribution so that no subset of donors has an incentive to redistribute their donations. Moreover, it is group-strategyproof, 
satisfies desirable monotonicity properties,
maximizes Nash welfare, returns a unique Lindahl equilibrium, and can be implemented via natural best-response spending dynamics.
\end{abstract}

\textbf{Keywords:} Mechanism design, collective decision making, Leontief preferences, public good markets, spending dynamics

\section{Introduction}

Private charity, given by individual donors to underprivileged people in their vicinity, has existed long before institutionalized charity via municipal or governmental organizations.
Its main advantage is transparency---the donors know exactly where their money goes to, which may increase their willingness to donate.
A significant disadvantage of private charity is the lack of coordination: donors may donate to certain people or charities without knowing that these recipients have already received ample money from other donors.
Centralized charity via governments or municipalities is potentially more efficient, but if not done carefully, it may disrespect the will of the donors.

\begin{example}
\label{exm:2donors-4charities}
Suppose there are four charities, each of which cares for a distinct set of needy individuals. 
There are two donors: one is willing to contribute $\$900$ and supports charities $A$, $B$, and $C$, whereas the second donor is willing to contribute $\$100$ and supports charities $C$ and $D$. 

\begin{itemize}
\item 
A central organization may collect the donors' contributions and divide them equally among the four charities so that each receives $\$250$. 
While this outcome is the most balanced possible for the charities, it goes against the will of the first donor since $\$150$ of her contribution is used to support charity~$D$.
\item 
By contrast, without coordination, each donor may split her individual contribution equally between the charities she approves. 
As a result, charities~$A$ and $B$ receive $\$300$ each, charity~$C$ receives $\$350$, and charity~$D$ receives $\$50$. 
However, if the second donor knew that charity~$C$ would already receive $\$300$ from the first donor, she would probably prefer to donate more to charity~$D$, for which she is the only contributor.
\item 
Our suggested mechanism would give $\$300$ to each of charities~$A$, $B$, and $C$, and $\$100$ to charity~$D$.
This distribution can be presented as a set of recommendations to the individual donors: the first donor should distribute her contribution uniformly over charities $A$, $B$, and $C$, whereas the second donor should transfer all her contribution to charity~$D$. Importantly, each donor's contribution only goes to charities that the donor approves. Subject to that, the donations are divided as equally as possible. 
\end{itemize}
\end{example}
In the example, we assumed that donors want to \emph{maximize the minimum amount} given to a charity they approve. 
This can be formalized by endowing each donor with a utility function mapping each distribution to the smallest amount of money allocated to one of the donor's approved charities. For example, for the distribution $(300,300,300,100)$, the first agent's utility is $300$ and the second agent's utility is $100$.

Since charitable giving is often driven by egalitarian considerations, minimum-based preferences are reasonable whenever different charities support different individuals. 
Another context where these assumptions are sensible is when charities have different purposes, e.g., nature conservation, education, and healthcare. Each agent prioritizes some of these goals. However, she is not indifferent to how money is distributed among her approved charities but rather sympathizes with the least advantaged ones.

Beyond charitable giving, our model is relevant whenever a group of agents has to fund a set of public goods that are \emph{complementary} in nature. For example, in the context of municipal spending, security, transportation, and public spaces are three complementary public goods, because in order to enjoy the outdoors, all three issues must be adequately handled. Hence, some donors would naturally like to maximize the minimum spending on these three issues.%
\footnote{
As another example, consider a communication network and a set of agents, each of whom intends to transmit a signal along an individual path in the network. Their utilities are given by the quality of the signal at the last node on their path, which equals the minimal transmission quality of an edge along that path. Our mechanism can be used to coordinate agents' investments to improve the transmission quality of edges.
}

Furthermore, our model allows donors to attribute different values than merely $1$ and~$0$ (which indicate approval and disapproval, respectively) to different charities. 
If a donor~$i$ values a charity~$x$ at $v_{i,x}$, then 
$i$'s utility from a distribution $\delta$ equals $\min_x \delta(x)/v_{i,x}$, where the minimum is taken over all charities $x$ for which $v_{i,x}>0$.
Such utilities are known as \emph{Leontief utilities} \citep[see, e.g.,][]{Vari92a,MWG95a} and are often studied in resource allocation problems \citep[e.g.,][]{CoVa04a,Nico04a,GZH+11a,LiXu13a}. 
Whenever $v_{i,x}\in\{0,1\}$ for all agents~$i$ and charities~$x$, we refer to this as (Leontief) utility functions with \emph{binary weights}.%
\footnote{One can further assume that, subject to maximizing the minimum amount given to an approved charity, the donors want to maximize the second-smallest amount, then the third-smallest amount, and so on. Our results carry over to this class of preferences; see \Cref{sec: conclusion}.}

Given each donor's contribution and utility function, our goal is to distribute the money among the charities in a way that respects the individual donors' preferences. For that, we define a game that we call the \emph{distribution game}, in which the strategy space of each donor is the set of all possible distributions of her individual contribution, and the collective outcome is the sum of all individual distributions.
The idea of ``respecting the donors' preferences'' can then be captured by assuring each donor that her individual contribution is distributed optimally, given the individual distributions of the other donors. In other words, the vector of individual distributions is a (pure-strategy) \emph{Nash equilibrium} of the distribution game; the sum of all individual distributions maximizes each donor's utility, given that the other donors follow their own prescriptions.
One can check that in \Cref{exm:2donors-4charities} there is a unique Nash equilibrium, in which the first donor uniformly distributes her contribution among $A$, $B$, and $C$, whereas the second donor allocates her entire contribution to $D$. The sum of these individual distributions is the (unique) equilibrium distribution $(300,300,300,100)$.

Since utility functions are concave, the existence of an equilibrium distribution can be deduced from classic results in equilibrium theory \citep{Debr52a,ArDe54a,Rose65a}. 
Our first main result (\Cref{thm:unique-eq-distribution})  is that each preference profile admits a \emph{unique} equilibrium distribution.%
\footnote{
In general, the Nash equilibrium need not be unique: there may be several vectors of individual distributions, all of which are Nash equilibria of the distribution game. But in all these Nash equilibria, the individual distributions sum up to the same unique equilibrium distribution.
}
Moreover, we prove that the unique equilibrium distribution coincides with the unique distribution that maximizes the product of individual utilities weighted by their contributions (\emph{Nash welfare}), which implies that it is Pareto efficient and can be computed via convex programming. The equilibrium distribution can be viewed as the market equilibrium of a pure public good market, as well as a Lindahl equilibrium with personalized prices.

In \Cref{exm:2donors-4charities}, the equilibrium distribution $(300,300,300,100)$ also maximizes the minimum utility of all agents (\emph{egalitarian welfare}) subject to each donor only contributing to her approved charities. We show that this is true in general when weights are binary. This result extends to an infinite class of welfare measures ``in between'' Nash welfare and egalitarian welfare. 
Moreover, for the case of binary weights, we show that the equilibrium distribution coincides with the distribution that allocates individual contributions to approved charities such that the minimum contribution to charities is maximized lexicographically. This allows for simpler computation via linear programming. 

Based on existence and uniqueness, we can define the \emph{equilibrium distribution rule (\edr)}---the mechanism that returns the unique equilibrium distribution of a given profile. 
This rule gives rise to a second kind of game, a \emph{revelation game}, in which the strategy of each donor consists of her reported utility function and contribution, and the collective outcome is the corresponding equilibrium distribution.

Our second set of results shows that \edr{} exhibits remarkable axiomatic properties:
\begin{itemize}
\item \emph{Group-strategyproofness}: agents and coalitions thereof are never better off by misrepresenting their preferences and 
are strictly better off by contributing more money (\Cref{thm:strategyproof}).
In particular, truthful reporting is a dominant strategy for all players in the revelation game. Moreover, as the equilibrium distribution maximizes Nash welfare, this game implements the Nash welfare function in dominant strategies.

\item \emph{Preference-monotonicity}: the amount donated to a charity can only increase when agents increase their valuation for the charity.
\item \emph{Contribution-monotonicity}: the amount donated to a charity can only increase when agents increase their contributions.
\end{itemize}

We further show that equilibrium distributions are the limit distributions of natural spending dynamics based on best responses. This can be leveraged in settings where a central infrastructure is unavailable or donors are reluctant to completely reveal their preferences. One could envision a scenario where each donor can announce her budget distribution on an online platform without making a commitment, allowing her to adapt it after viewing the donations made by others. We prove that when donors spend their money myopically optimally in each round, the relative overall distribution of donations converges to the equilibrium distribution. Hence, socially desirable outcomes can be attained even without a central infrastructure as long as charities are transparent about the donations they receive. This scenario also allows for occasional changes in the agents' preferences and contributions as the process keeps converging towards an equilibrium distribution of the current profile.\\

Apart from private charity, our results are also applicable to donation programs---prominent examples include \emph{AmazonSmile} and government programs (e.g., \emph{cinque per mille} run by the Italian Revenue Agency and \emph{mechanizm 1\%} in Poland).
In these programs, participants can redirect a portion of their payments (purchase price and income tax, respectively) to charitable organizations of their choice.\footnote{
These programs only allow each participant to choose exactly one charitable organization. However, as \citet{BBP+19a} argued, permitting them to indicate support for multiple organizations can increase the efficiency of the distribution.}
AmazonSmile ran from 2013 to 2023 and was used to allocate a total of \$400 million.
In 2022, a record \euro 510 
million were distributed via cinque per mille in Italy, and Poland recently increased the donatable quota of personal income tax from $1\%$ to $1.5 \%$.

In contrast to private charity, participants of donation programs do not have the option of taking their money out of the system, which means that the important issue lies in finding a desirable distribution of the contributions rather than in incentivizing the participants to donate in the first place.
A common criticism of the Polish program is that large organizations accumulate most of the donations, whereas some locally popular charities are left almost empty-handed.\footnote{See \polandurl \xspace and the references therein.}
The same phenomenon is well-known in healthcare, where organizations helping people with rare diseases find it difficult to attract donors. 
This issue is alleviated by our assumption of Leontief preferences, as illustrated by the following example.

\begin{example}
\label{exm:health}
Suppose there are ten donors, each donating $\$30$. Donor $i$ assigns value $2$ to a charity $A$ 
that supports patients with a common disease, 
and assigns value $1$ to a charity $B_i$
that supports patients with some rare diseases. 

If each donor is forced to select a single charity to donate to (as in the redirection programs mentioned above), then $A$ will receive all donations, and $B_1,\dots,B_{10}$ will receive none.
When donors can distribute their donations independently, without coordination, they will likely split their donations in the ratio $2:1$. As a result, $A$ will get $\$ 200$ and each $B_i$ will get only $\$ 10$. 

By contrast, our suggested mechanism will give $\$ 50$ to $A$ and $\$ 25$ to each $B_i$, which is the unique equilibrium. This distribution adequately reflects the donors' preferences, as each of them supports $A$ twice as much as $B_i$.
\end{example}

The remainder of this paper is structured as follows. After discussing related work in \Cref{sec: relwork}, we formally introduce our model in \Cref{sec: model}. \Cref{sec:equilibrium-distribution} lays the foundation for the proposed distribution rule by showing existence and uniqueness of equilibrium distributions via Nash welfare maximality as well as characterizing Pareto efficient distributions and pointing out connections to public good markets and Lindahl equilibrium. Subsequently, we define \edr as the rule that always returns the equilibrium distribution and examine its axiomatic properties in \Cref{sec:equilibrium-rule}.
In \Cref{sec: br-dynamics}, we explore natural spending dynamics that converge towards the equilibrium distribution.
The special case of Leontief utilities with binary weights 
allows for alternative characterizations of \edr, algorithms based on linear programming, and representations via a wide class of welfare functions; this
is covered in \Cref{sec:binary}. 
The paper concludes in \Cref{sec: conclusion} with a summary of our results and a brief discussion of alternative utility models such as linear, Cobb-Douglas, and leximin Leontief utilities (see also \Cref{app:otherutilityfunctions}). 
Almost all proofs are deferred to the Appendix.

\section{Related work}\label{sec: relwork}
\subsection{Public goods}
A well-studied problem related to the one we study in this paper is that of 
\emph{private provision of public goods} \citep[see, e.g.,][]{Samu54a,BBV86a,Vari94a,Falk96a,FFGW00a}. 
In this stream of research, each agent decides how much money she wants to contribute to funding a public good. Typically, this leads to under-provision of the public good in equilibrium, resulting in inefficient outcomes. 
In contrast, we consider agents who have already set aside a budget to support public charities, either voluntarily or compulsorily (as part of their taxes or payments to a central authority). 
This trivializes the problem of funding a \emph{single} public good. However, with two or more public goods, allocating donations among them may lead to inefficiencies, which our mechanism seeks to remedy.

Utilitarian welfare maximizing outcomes can be implemented by well-known strategyproof mechanisms such as the Vickrey-Clarke-Groves (VCG) mechanism.
However, VCG fails to be budget-balanced: 
it  collects money from the agents and has to `burn' that money in order to maintain strategyproofness.
By contrast, in our setting, the monetary contribution of each agent is fixed and independent of the agent's preferences. The entire contribution goes to charities approved by the agent and
the central issue is to distribute contributions fairly. 
As shown in \Cref{sec:sp}, strategyproofness can be achieved without imposing additional payments on the agents.

\subsection{Participatory budgeting}
A rapidly growing stream of research explores \emph{participatory budgeting} \citep[e.g.,][]{AzSh21a}, which allows citizens to jointly decide how the budget of a municipality should be spent in order to realize projects of public interest.
In contrast to charities, the projects considered in participatory budgeting typically come with a fixed cost (e.g., constructing a new bridge), and each project can be either fully funded or not at all.
Moreover, a standard assumption in participatory budgeting papers is that the money is owned by the municipality rather than by the agents themselves.

Perhaps the first work to consider charitable giving from a mechanism design perspective is due to 
\citet{CoSa04a,CoSa11a}. They let agents incentivize other agents to donate more by devising ``matching offers'', where a donation is made conditional on how much and to which charities other agents donate. They introduce an expressive bidding language for such offers and study the computational complexity of the resulting market clearing problem.

The work most closely related to ours is that of \citet{BBPS21a,BBP+19a}, who initiated the axiomatic study of donor coordination mechanisms. In their model, the utility of each donor is defined as the weighted \emph{sum} of contributions to charities, where the weights correspond to the donor's inherent utilities for a unit of contribution to each charity.
Under this assumption, the only efficient distribution in \Cref{exm:2donors-4charities} is to allocate the entire donation of $\$1000$ to charity~$C$, since this distribution gives the highest possible utility, $1000$, to all donors.
This distribution makes sense when the charities are perfect \emph{substitutes}, that is, they care for the same needs of the same people. In that case, when a donor assigns the same utility to several charities, she is completely indifferent to how money is distributed among these charities.
However, this distribution leaves charities $A$, $B$, and $D$ with no money at all, which may not be desirable when the charities are complementary, that is, they care for different needs or different people. 
By contrast, in our model of \emph{minimum-based} utilities, charities are perfect \emph{complements}: donors want their money to be evenly distributed among charities they like equally much. Fine-grained preferences over charities can be expressed by setting weights for Leontief utility functions. It can be argued that this assumption better reflects the spirit of charity by not leaving anyone behind. The modified definition of utility functions critically affects the nature of elementary concepts such as efficiency or strategyproofness and fundamentally changes the landscape of attractive mechanisms.

The main result of \citet{BBP+19a} shows that, in their model of linear utilities, the Nash product rule incentivizes agents to contribute their entire budget, even when attractive outside options are available. However, the Nash product rule fails to be strategyproof \citep{ABM20a} and violates simple monotonicity conditions \citep{BBPS21a}. 
In fact, a sweeping impossibility by \citet{BBPS21a} shows that, even in the simple case of binary valuations, no distribution rule that spends money on at least one approved charity of each agent can simultaneously satisfy efficiency and strategyproofness. This confirms a conjecture by \citet{BMS05a} and demonstrates the severe limitations of donor coordination with linear utilities. 
Interestingly, as we show in this paper, Leontief utilities allow for much more positive results.

\subsection{Nash product rule}
Originating from the \emph{Nash bargaining solution} \citep{Nash50b}, the Nash product rule can be interpreted as a tradeoff between maximizing utilitarian and egalitarian welfare, a recurring idea when it comes to finding efficient \emph{and} fair solutions.
When allocating divisible \emph{private} goods to agents with linear utility functions, the Nash product rule returns the set of all \emph{competitive equilibria from equal incomes} \citep{EiGa59a}; thus, it results in an efficient and \emph{envy-free} allocation \citep{Fole67a}. 
The connection between market equilibria and Nash welfare maximizers has been extended to various single-seller markets \citep{JaVa10a}. The most prominent of these are \emph{Fisher markets}, in which a set of divisible goods is available in limited quantities, and each buyer has a fixed budget at her disposal. The problem is to find equilibrium prices that clear the market while maximizing each buyer's utility. 

\citet{Eise61a} shows that when buyers have homogeneous utility functions (which include linear, Cobb-Douglas, and Leontief utility functions), market equilibria for Fisher markets can be found by maximizing the Nash product of individual utility functions \citep[see also][]{CoVa07a}.
While this resembles our \Cref{thm:unique-eq-distribution}, there are some important differences between Fisher markets and our public good markets. First of all, in contrast to Fisher markets, the relationship between equilibria and Nash welfare maximizers in public good markets does not extend to all homogeneous utility functions. In fact, for linear utility functions, all public good equilibrium distributions may be inefficient \citep[][Proposition~1]{BBP+19a}. Even for the special case of Leontief preferences, the relationship between equilibria and Nash welfare maximizers seems to be of a different nature. 
In contrast to our setting (see \Cref{sec:existuniquecomp}), Fisher market equilibria with Leontief preferences may involve irrational numbers and thus cannot be computed exactly \citep{CoVa04a}.\footnote{Interestingly, Nash welfare maximizers for Fisher markets with linear utilities are always rational-valued, whereas this is not true for our public good markets \citep[][Table~2]{BBP+19a}.} 
Moreover, mechanisms that maximize Nash welfare in private good settings do not share the desirable properties of \edr. For example, \citet{GZH+11a} show that the Nash product rule violates strategyproofness in a simple resource allocation setting with Leontief preferences.
To the best of our knowledge, there is no previous work on Leontief preferences in the context of \emph{public} goods.

\subsection{Many-to-many matchings}
A natural special case of our model is that of Leontief utilities with \emph{binary weights}, where agents only approve or disapprove charities, and the utility of each agent is given by the minimal amount transferred to any of her approved charities. Under the assumption that agents only contribute to charities they approve and that all individual contributions are equal, this can be interpreted as a (many-to-many) matching problem on a bipartite graph where agents (and their contributions) need to be assigned to charities with unlimited capacity.   
\citet{BoMo04a} propose a solution to such matching problems that maximizes egalitarian welfare of the charities (rather than the agents). The intriguing connection between these two types of egalitarianism is addressed in \Cref{sec:binary}. \citeauthor{BoMo04a} also show that their solution constitutes a competitive equilibrium from equal incomes (from the charity managers' point of view).

\section{The model}
\label{sec: model}

Let $N$ be a set of $n$ \emph{agents}. 
Each agent $i$ contributes an amount $C_i \ge 0$ of some divisible and homogeneous resource.
For every subset of agents $N'\subseteq N$, we denote $C_{N'} \coloneqq \sum_{i \in N'}C_i$. The sum of all contributions, $C_N$, is called the \emph{endowment}.
Further, consider a set $A$ of $m$ potential recipients of the contributions, which we refer to as \emph{charities}.

An \emph{individual distribution} for agent $i$ is a function~$\delta_i$ assigning a nonnegative real number to each charity, such that $\sum_{x\in A} \delta_i(x) = C_i$. 
A \emph{collective distribution} is a function~$\delta$ assigning a nonnegative real number to each charity, such that $\sum_{x\in A} \delta(x) = C_N$. 
Often, we will just say ``distribution'' when it is clear from the context whether it is an individual or a collective distribution.

Given a collective distribution $\delta$,
the support $\{x\in A\colon \delta(x) > 0\}$ of $\delta$ is denoted by $\supp(\delta)$. For a subset of charities $A'\subseteq A$, we define $\delta(A') \coloneqq \sum_{x \in A'}\delta(x)$ as the total amount allocated by $\delta$ to charities in $A'$.
The set of all collective distributions is denoted by $\Delta(C_N)$.

For every $i\in N$ and $x\in A$, there is a real number $v_{i,x} \geq 0$ that represents the value of charity $x$ to agent $i$. 
We assume that each agent $i$ has at least one charity $x$ for which $v_{i,x}>0$.
For every agent $i\in N$, we define $A_i \coloneqq \{x\in A \colon v_{i,x}>0 \}$ as the set of charities to which $i$ attributes a positive value. 

The utility that agent $i$ derives from a collective distribution $\delta$ is denoted by $u_i(\delta)$ and is given by the Leontief utility function:%

\begin{align*}
u_i(\delta) = \min_{x\in A_i}\frac{\delta(x)}{v_{i,x}}\text.
\end{align*}
Note that, 
for every charity $x\in A$ and every agent $i\in N$,
\begin{align*}
\delta(x)\geq v_{i,x}\cdot u_i(\delta).
\end{align*}

If all $v_{i,x}$ are in $\{0,1\}$, 
we have Leontief utilities with \emph{binary weights}.
A \emph{profile} $P$ consists of $\{C_i\}_{i\in N}$ and $\{v_{i,x}\}_{i\in N,\, x\in A}$.
Throughout this paper, agents with contribution zero do not have any influence on the outcome and can thus be treated as agents who choose not to participate in the mechanism.

Finally, a \emph{distribution rule} $f$ maps every profile of agents' utility functions to a collective distribution $\delta\in \Delta(C_N)$.

\section{Equilibrium distributions}
\label{sec:equilibrium-distribution}

The endowment to be distributed consists of the contributions of individual agents. In order to formalize which distributions are in equilibrium, we therefore need to define how (collective) distributions can be decomposed into individual distributions. 
\begin{definition}[Decomposition]\label{def:decomposition}
A \emph{decomposition} of a collective distribution $\delta$ is a vector of individual distributions $(\delta_i)_{i \in N}$ with
\begin{align}
\label{eq:dec-dx}
&\sum_{i \in N}\delta_i(x) = \delta(x) && \text{ for all } x\in A;
\\
\label{eq:dec-ci-A}
&\sum_{x \in A}\delta_i(x) = C_i && \text{ for all } i\in N.
\end{align}
\end{definition}
Clearly, each collective distribution admits at least one decomposition.
We aim for a decomposition in which no agent can increase her utility by changing $\delta_i$, given $C_i$ and the distributions $\delta_j$ for $j\neq i$. In other words, we look for a pure strategy Nash equilibrium of the game in which the strategy space of each agent $i$ is the set of individual distributions $\delta_i$ satisfying~\eqref{eq:dec-ci-A}.

\begin{samepage}
\begin{definition}[Equilibrium distribution]
\label{def:eqdistribution}

A vector of individual distributions $(\delta_i)_{i \in N}$ is \emph{in equilibrium} if for every agent $j$
and for every individual distribution  $\delta_j'$ satisfying 
$\sum_{x \in A}\delta_j'(x) = C_j$, 
\begin{align*}
u_j\left(\sum_{i \in N}\delta_i\right) \geq u_j\left(\sum_{i \in N}\delta_i-\delta_j+\delta_j'\right).
\end{align*}

A collective distribution is an \emph{equilibrium distribution} if it admits a decomposition into individual distributions that are in equilibrium.
\end{definition}
\end{samepage}

Given a vector of individual distributions $(\delta_i)_{i \in N}$, the collective distribution $\delta$ is understood to be $\delta\coloneqq\sum_{i \in N}\delta_i$.
 
The present section is devoted to establishing that every profile admits a \emph{unique} equilibrium distribution (\Cref{thm:unique-eq-distribution}), which also maximizes Nash welfare. 
All proofs are deferred to \Cref{app:equilibriumsection}.

Once we have established uniqueness, we can define the \emph{equilibrium distribution rule} as the rule that selects for each profile the corresponding equilibrium distribution.
In \Cref{sec:equilibrium-rule}, we will prove that this rule satisfies desirable strategic and monotonicity properties.

\subsection{Critical charities}
The following definition will be used to characterize the Nash equilibria of the distribution game.
Given a collective distribution $\delta$, 
we define the set of agent $i$'s \emph{critical charities}
\begin{align*}
T_{\delta,i} \coloneqq 
\arg \min_{x\in A_i} \frac{\delta(x)}{v_{i,x}}.
\end{align*}

Each charity $x\in T_{\delta,i}$ is critical for agent $i$ in the sense that the utility of $i$ would decrease if the amount allocated to $x$ were to decrease.
Every agent has at least one critical charity.
For every agent $i$ and charity $x$ such that either $v_{i,x} > 0$ or $\delta(x)>0$, the following equivalences hold:
\begin{equation}
\label{eq:critical-implications}
\begin{split}
x\in T_{\delta,i} &~~~\Leftrightarrow~~~ 
\delta(x) = v_{i,x}\cdot u_i(\delta);
\\
x\not\in T_{\delta,i} &~~~\Leftrightarrow~~~
\delta(x) >   v_{i,x}\cdot u_i(\delta).
\end{split}
\end{equation}
For every group of agents $N'\subseteq N$, we denote by $T_{\delta, N'}$ the set of charities critical to at least one member of $N'$.

The following lemma shows that a vector of individual distributions is in equilibrium if and only if each agent contributes only to her critical charities.
\begin{restatable}{lemma}{eqiffcritical}
\label{lem:eq-iff-critical}
A vector of individual distributions $(\delta_i)_{i \in N}$ is in equilibrium 
if and only if 
$\delta_i(x)=0$ for every charity $x\not\in T_{\delta,i}$.
Equivalently, it has to satisfy the following equality instead of \eqref{eq:dec-ci-A} in \Cref{def:decomposition}:
\begin{align}
\label{eq:dec-ci-critical}
&\sum_{x \in T_{\delta,i}}\delta_i(x) = C_i && \text{ for all } i\in N.
\end{align}
\end{restatable}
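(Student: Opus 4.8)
The plan is to prove both implications of the equivalence and then observe that the two stated formulations of the condition coincide. The latter is immediate: since each $\delta_i$ is nonnegative and sums to $C_i$ over all of $A$ by \eqref{eq:dec-ci-A}, the equality \eqref{eq:dec-ci-critical} holds if and only if $\delta_i$ places no mass outside $T_{\delta,i}$. The substantive content concerns agent $i$'s best-response problem. Fixing a decomposition, I write $\delta_{-i}(x) := \delta(x)-\delta_i(x)$ for the mass the other agents place on $x$, which stays fixed while $i$ deviates to some $\delta_i'$. Since funding charities outside $A_i$ never raises $u_i$, an optimal deviation only funds charities in $A_i$, and to guarantee utility at least $t$ it must satisfy $\delta_{-i}(x)+\delta_i'(x)\ge v_{i,x}t$ for every $x\in A_i$. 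Hence the cheapest cost to reach utility $t$ is
\begin{align*}
\kappa_i(t) := \sum_{x\in A_i}\max\bigl(0,\, v_{i,x}t-\delta_{-i}(x)\bigr),
\end{align*}
and $i$'s best response attains exactly the largest $t$ with $\kappa_i(t)\le C_i$. The evaluation of $\kappa_i$ at $t^\ast := u_i(\delta)$ is driven by the critical-charity characterization \eqref{eq:critical-implications}.

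For the \emph{if} direction, I assume the decomposition is supported on critical charities. By \eqref{eq:critical-implications}, for $x\in T_{\delta,i}$ we have $\delta(x)=v_{i,x}t^\ast$, so $v_{i,x}t^\ast-\delta_{-i}(x)=\delta_i(x)\ge 0$, while for $x\notin T_{\delta,i}$ the strict slack $\delta(x)>v_{i,x}t^\ast$ together with $\delta_i(x)=0$ makes the bracket negative. Thus $\kappa_i(t^\ast)=\sum_{x\in T_{\delta,i}}\delta_i(x)=C_i$ by \eqref{eq:dec-ci-critical}, whereas for any $t>t^\ast$ each critical term strictly increases, giving $\kappa_i(t)>C_i$. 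Hence $i$'s best response equals $t^\ast=u_i(\delta)$, so no agent can profitably deviate and $\delta$ is in equilibrium.

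For the \emph{only if} direction, I argue by contradiction: suppose $\delta$ is in equilibrium via a decomposition $(\delta_i)$, yet some agent $i$ funds a non-critical charity $y$, that is, $\delta_i(y)>0$ with $y\notin T_{\delta,i}$. I would construct a profitable deviation by shifting a small amount $\epsilon\le\delta_i(y)$ of mass off $y$ and onto the critical charities in proportion to the weights $v_{i,x}$, which keeps the deviation feasible (nonnegative and summing to $C_i$) and raises every critical ratio from $t^\ast$ to $t^\ast+\Delta$, where $\Delta=\epsilon/\sum_{x\in T_{\delta,i}}v_{i,x}>0$. Choosing $\epsilon$ small enough that $y$'s ratio stays strictly above $t^\ast$ (the remaining non-critical charities are untouched and already exceed $t^\ast$ by \eqref{eq:critical-implications}), every charity in $A_i$ then has ratio above $t^\ast$ after the shift, so $u_i$ strictly increases---contradicting equilibrium. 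Therefore the equilibrium decomposition is supported on critical charities.

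I expect the \emph{only if} construction to be the main obstacle: I must verify both feasibility of the shifted deviation and, more delicately, that $\epsilon$ can be taken small enough relative to the smallest slack among the non-critical charities so that the perturbation yields a \emph{strict} improvement rather than merely a tie. The \emph{if} direction, by contrast, reduces to a monotonicity computation for $\kappa_i$ once \eqref{eq:critical-implications} has been substituted in.
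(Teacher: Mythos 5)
Your proof is correct and follows essentially the same route as the paper: the ``only if'' direction is the same perturbation argument (shift a small $\epsilon$ from a non-critical charity onto the critical ones---you split it in proportion to the weights $v_{i,x}$, the paper splits it equally, which is immaterial), and your ``if'' direction, though dressed up as a best-response cost computation $\kappa_i(t)$, reduces to the paper's one-line observation that any deviation must withdraw money from some charity already sitting exactly at level $v_{i,y}\cdot u_i(\delta)$. The two worries you flag at the end (feasibility of the shifted deviation, and strictness of the improvement) are both handled by your own construction as written, since the untouched non-critical charities in $A_i$ have ratios strictly above $t^\ast$ by \eqref{eq:critical-implications} and the minimum of finitely many quantities each exceeding $t^\ast$ exceeds $t^\ast$.
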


\Cref{lem:eq-iff-critical} implies that,
in an equilibrium distribution, every charity that receives a positive amount is critical for at least one agent.

Another implication of
\Cref{lem:eq-iff-critical} 
is that equilibrium distributions are stable even with respect to deviations by coalitions.
\begin{corollary}
\label{rem:strong-eq}
In equilibrium,
no \emph{group of agents} can deviate without making at least one of its members worse off. 
\end{corollary}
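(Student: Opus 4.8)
The plan is to fix an equilibrium distribution $\delta$ together with the decomposition $(\delta_i)_{i\in N}$ guaranteed by \Cref{lem:eq-iff-critical}, so that each agent $i$ spends her entire contribution $C_i$ only on her critical charities $T_{\delta,i}$. Consider an arbitrary coalition $N'\subseteq N$ that replaces the distributions $(\delta_i)_{i\in N'}$ by alternatives $(\delta_i')_{i\in N'}$ with $\sum_{x\in A}\delta_i'(x)=C_i$, while every agent outside $N'$ keeps her distribution fixed; let $\delta'$ be the resulting distribution. I want to show that if the deviation harms no member of $N'$, i.e.\ $u_i(\delta')\ge u_i(\delta)$ for all $i\in N'$, then in fact $u_i(\delta')=u_i(\delta)$ for all $i\in N'$, so that the deviation benefits nobody. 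This is precisely the asserted strong equilibrium property.

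The argument rests on a two-sided estimate of the total amount received by $T_{\delta,N'}$, the charities critical to at least one member of $N'$. For the lower bound, every charity $x\in T_{\delta,N'}$ is critical for some $i\in N'$, whence $\delta'(x)\ge v_{i,x}\,u_i(\delta')\ge v_{i,x}\,u_i(\delta)=\delta(x)$, using the inequality $\delta'(x)\ge v_{i,x}u_i(\delta')$ valid for any distribution, the assumption $u_i(\delta')\ge u_i(\delta)$, and the critical-charity identity in \eqref{eq:critical-implications}. Summing over $x\in T_{\delta,N'}$ yields $\delta'(T_{\delta,N'})\ge\delta(T_{\delta,N'})$. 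For the upper bound, note that before the deviation each $i\in N'$ spends all of $C_i$ inside $T_{\delta,i}\subseteq T_{\delta,N'}$, so the coalition contributes exactly $C_{N'}$ to $T_{\delta,N'}$; after the deviation it contributes at most $\sum_{i\in N'}\delta_i'(A)=C_{N'}$; and the agents outside $N'$ contribute the same amount to $T_{\delta,N'}$ in both cases. Hence $\delta'(T_{\delta,N'})\le\delta(T_{\delta,N'})$.

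Combining the two bounds forces $\delta'(T_{\delta,N'})=\delta(T_{\delta,N'})$, and since $\delta'(x)\ge\delta(x)$ was established for each individual $x\in T_{\delta,N'}$, equality must hold charity by charity: $\delta'(x)=\delta(x)$ for all $x\in T_{\delta,N'}$. Consequently, for each $i\in N'$ I obtain $u_i(\delta')=\min_{x\in A_i}\delta'(x)/v_{i,x}\le\min_{x\in T_{\delta,i}}\delta'(x)/v_{i,x}=\min_{x\in T_{\delta,i}}\delta(x)/v_{i,x}=u_i(\delta)$, where the inequality uses $T_{\delta,i}\subseteq A_i$ and the last equality uses that $T_{\delta,i}$ is the argmin set. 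Together with $u_i(\delta')\ge u_i(\delta)$ this gives $u_i(\delta')=u_i(\delta)$, so no member of $N'$ strictly gains.

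I expect the only delicate point to be the bookkeeping in the upper bound: one must invoke \Cref{lem:eq-iff-critical} to know that the chosen decomposition places each agent's entire spending on her own critical charities, and observe that the outsiders' mass on $T_{\delta,N'}$ is held fixed, so the coalition's total mass on $T_{\delta,N'}$ can only weakly decrease. Everything else is a direct application of the Leontief inequality $\delta(x)\ge v_{i,x}u_i(\delta)$ and the equivalences in \eqref{eq:critical-implications}.
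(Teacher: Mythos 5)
Your proof is correct, and it is a substantially more detailed argument than the paper's own justification, which is a single sentence: any deviation decreases the contribution to a critical charity of at least one group member, who is then strictly worse off by \eqref{eq:critical-implications}. Your route inverts this: you assume the deviation harms nobody and run a mass-conservation argument on $T_{\delta,N'}$ --- the lower bound $\delta'(x)\ge\delta(x)$ for each $x\in T_{\delta,N'}$ from the Leontief inequality together with \eqref{eq:critical-implications}, and the upper bound $\delta'(T_{\delta,N'})\le\delta(T_{\delta,N'})$ from the fact that, in the decomposition of \Cref{lem:eq-iff-critical}, the coalition's entire budget $C_{N'}$ already sits on $T_{\delta,N'}$ while outsiders are held fixed --- to conclude that every critical charity of the coalition keeps exactly its equilibrium funding and hence every member's utility is unchanged. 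The one discrepancy is with the literal wording: the corollary asserts that \emph{every} deviation makes some member worse off, whereas you prove that no deviation can leave all members weakly better off with one strictly better off. These differ only on deviations that merely reshuffle individual contributions without changing the aggregate on the coalition's critical charities (e.g., two members swapping equal amounts on a shared critical charity), for which the literal statement is arguably too strong anyway; your version is precisely the strong-equilibrium property the paper actually invokes later, e.g., to deduce efficiency in \Cref{cor:equilibrium-is-efficient}, so nothing needed is lost.
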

This is because \emph{any} deviation decreases the contribution to a critical charity of at least one group member. 
This equilibrium notion is slightly stronger than \emph{strong equilibrium} by \citet{Auma59a}. 

\subsection{Efficient distributions}

One of the main objectives of a centralized distribution rule is economic efficiency.

\begin{definition}[Efficiency]
Given a profile $P$, a distribution $\delta \in \Delta(C_N)$ is \emph{(Pareto) efficient} if there does not exist another distribution $\delta' \in \Delta(C_N)$ that \emph{(Pareto) dominates}~$\delta$, i.e., $u_i(\delta') \ge u_i(\delta)$ for all $i \in N$ and $u_i(\delta') > u_i(\delta)$ for at least one $i \in N$. 
A distribution rule is efficient if it returns an efficient distribution for every profile $P$.  
\end{definition}

\begin{observation}
\label{cor:equilibrium-is-efficient}
\Cref{rem:strong-eq} implies that every equilibrium distribution is efficient, since any Pareto improvement yields a beneficial deviation for the group of all agents.
\end{observation}

The following lemma characterizes efficient collective distributions.

\begin{restatable}{lemma}{efficientiffcritical}
\label{lem:efficient-iff-critical}
A collective distribution $\delta$ is efficient if and only if every charity  $x\in \supp(\delta)$ is critical for some agent. 
\end{restatable}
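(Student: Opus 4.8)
The plan is to prove the two directions separately, using the Leontief characterization \eqref{eq:critical-implications} as the main tool: for every agent $i$ and charity $x$ with $v_{i,x}>0$ one has $\delta(x)\ge v_{i,x}\cdot u_i(\delta)$, with equality exactly when $x\in T_{\delta,i}$. For the ``if'' direction I would argue by contradiction via a mass-counting argument. Suppose every $x\in\supp(\delta)$ is critical for some agent, yet some $\delta'$ Pareto dominates $\delta$. Fix a charity $x\in\supp(\delta)$ and an agent $i$ with $x\in T_{\delta,i}$; then $\delta(x)=v_{i,x}\cdot u_i(\delta)$, whereas $\delta'(x)\ge v_{i,x}\cdot u_i(\delta')\ge v_{i,x}\cdot u_i(\delta)=\delta(x)$, where the middle inequality uses $u_i(\delta')\ge u_i(\delta)$. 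Thus $\delta'(x)\ge\delta(x)$ on $\supp(\delta)$, and trivially off it, so $\delta'\ge\delta$ pointwise. Since both distributions sum to $C_N$, this forces $\delta'=\delta$, contradicting that some agent is strictly better off. This direction is short and needs no construction.

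For the ``only if'' direction I would prove the contrapositive: if some $y\in\supp(\delta)$ is critical for no agent, I construct a Pareto improvement. Since $\delta(y)>0$ and $y\notin T_{\delta,i}$ for all $i$, \eqref{eq:critical-implications} gives $\delta(y)>v_{i,y}\cdot u_i(\delta)$ whenever $v_{i,y}>0$, so there is slack: for sufficiently small $\epsilon>0$, decreasing $\delta(y)$ by $\epsilon$ leaves every agent's utility unchanged, as the minimum defining each $u_i$ is attained elsewhere. I would then pick any agent $j$ and redistribute the freed mass $\epsilon$ among $j$'s critical charities $T_{\delta,j}$ in proportion to the values $v_{j,x}$, i.e., add $\epsilon\cdot v_{j,x}/\sum_{x'\in T_{\delta,j}}v_{j,x'}$ to each $x\in T_{\delta,j}$ (note $y\notin T_{\delta,j}$). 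This raises the ratio $\delta(x)/v_{j,x}$ by the same amount on all of $j$'s critical charities, so for small $\epsilon$ the binding minimum of agent $j$ strictly increases, while adding money never decreases any other agent's utility. The resulting distribution still sums to $C_N$ and is nonnegative, so it Pareto dominates $\delta$.

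The main obstacle is this construction in the ``only if'' direction. Because each $u_i$ is a minimum over $A_i$, pouring the freed mass into a single charity need not raise any agent's utility when agents have several critical charities; the key idea is to spread the freed $\epsilon$ across \emph{all} critical charities of a fixed agent in proportion to their values, which lifts that agent's binding minimum uniformly. Verifying that $\epsilon$ can be chosen small enough to simultaneously preserve the slack at $y$ and keep the non-critical charities of $j$ non-binding is the only quantitative point, and it is routine.
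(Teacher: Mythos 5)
Your proof is correct and takes essentially the same route as the paper's: the ``if'' direction is the same mass-counting argument (for a charity $x$ critical for agent $i$, $\delta'(x)\ge v_{i,x}\,u_i(\delta')\ge v_{i,x}\,u_i(\delta)=\delta(x)$, and equal totals then force $\delta'=\delta$), and the ``only if'' direction builds a Pareto improvement from the slack at a supported charity that is critical for no one. The only difference is cosmetic: the paper spreads the freed amount equally over \emph{all} other charities so that every agent strictly gains, whereas you spread it over one agent's critical charities in proportion to her values so that she strictly gains and nobody loses; both versions yield the required Pareto domination.
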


The next lemma shows that
every efficient utility vector is generated by at most one collective distribution.
\begin{restatable}{lemma}{efficientunique}
\label{lem:efficientunique}
Let $\delta$ and $\delta'$ be efficient distributions inducing the same utility vector, that is, $u_i(\delta)=u_i(\delta')$ for all $i \in N$. Then, $\delta=\delta'$.
\end{restatable}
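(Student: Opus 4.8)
The plan is to argue by contradiction, exploiting the characterization of efficient distributions from \Cref{lem:efficient-iff-critical} together with the fact that both distributions place the same total mass $C_N$ on the charities. Suppose, for contradiction, that $\delta \ne \delta'$ even though they are both efficient and satisfy $u_i(\delta)=u_i(\delta')$ for every $i\in N$. Since $\sum_{x\in A}\delta(x)=C_N=\sum_{x\in A}\delta'(x)$, there must exist at least one charity $y$ at which the two distributions disagree with $\delta'(y)<\delta(y)$; in particular $\delta(y)>0$, so $y\in\supp(\delta)$.

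The key step is to invoke efficiency of $\delta$ via \Cref{lem:efficient-iff-critical}: because $y\in\supp(\delta)$, there is some agent $i_y\in N$ for whom $y$ is critical, i.e., $y\in T_{\delta,i_y}$. By the critical-charity equivalence \eqref{eq:critical-implications}, this means $\delta(y)=v_{i_y,y}\cdot u_{i_y}(\delta)$, and in particular $v_{i_y,y}>0$ so that $y\in A_{i_y}$. I would then bound the utility that $i_y$ derives from $\delta'$ using the definition of the Leontief utility, evaluating it at the single coordinate $y$:
\begin{align*}
u_{i_y}(\delta')\le\frac{\delta'(y)}{v_{i_y,y}}<\frac{\delta(y)}{v_{i_y,y}}=u_{i_y}(\delta),
\end{align*}
where the strict inequality uses $\delta'(y)<\delta(y)$ and $v_{i_y,y}>0$, and the final equality uses that $y$ is critical for $i_y$ in $\delta$.

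This strict inequality $u_{i_y}(\delta')<u_{i_y}(\delta)$ directly contradicts the hypothesis that the two distributions induce the same utility vector, completing the argument. Notably, there is essentially no hard obstacle here: the entire content of the computation mirrors the $\Leftarrow$ direction of \Cref{lem:efficient-iff-critical}, and the only extra ingredient is the observation that equal total mass forces a charity where $\delta'$ falls strictly below $\delta$. The one point requiring a little care is ensuring that the agent $i_y$ witnessing criticality satisfies $v_{i_y,y}>0$ (so that the Leontief ratio $\delta'(y)/v_{i_y,y}$ is well defined and the bound $u_{i_y}(\delta')\le\delta'(y)/v_{i_y,y}$ is legitimate), which follows because $y\in T_{\delta,i_y}\subseteq A_{i_y}$.
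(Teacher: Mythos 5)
Your proof is correct and follows essentially the same route as the paper's: both rest on the chain $\delta(y)=v_{i_y,y}\cdot u_{i_y}(\delta)=v_{i_y,y}\cdot u_{i_y}(\delta')\leq \delta'(y)$ obtained from \Cref{lem:efficient-iff-critical}, \eqref{eq:critical-implications}, and the Leontief bound, combined with equal total mass. The only cosmetic difference is that you argue by contradiction at a single charity where $\delta'$ falls below $\delta$, whereas the paper states the inequality $\delta(x)\leq\delta'(x)$ directly for every charity and then sums.
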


Consequently, an efficient distribution rule is still well-defined when mapping a profile to an attainable utility vector instead of a distribution.

The following example illustrates the definitions and results we have so far. 

\begin{example}
Suppose there are three charities and two agents with the following valuations and contributions.
\[
\begin{array}{cccc@{\hskip 2em}c}
    \toprule
          & a & b & c & C_i \\ 
         \midrule
         v_1       & 1 & 1 & 0 & 1\\
         v_2       & 1 & 2 & 1 & 3\\
        \bottomrule
\end{array}
\]
Then, the unique equilibrium distribution is $\delta^*=(1,2,1)$ with $u_1(\delta^*)=u_2(\delta^*)=1$. Thus, charity $a$ is critical for Agent 1 while charity $b$ is not. All three charities are critical for Agent 2. $\delta^*$ can be decomposed into individual distributions $(1,0,0)+(0,2,1)$, which are in equilibrium. 

Not all decompositions of $\delta^*$ are in equilibrium.
For example, the decomposition $(0,1,0)+(1,1,1)$ allows Agent 1 to gain more utility by transferring some contribution from $b$ to $a$.

Apart from $\delta^*$, another efficient distribution is $\delta=(2,2,0)$, as $a$ and $b$ are critical for Agent 1
and $c$ is critical for Agent 2.
However, $0.5 \, \delta^* + 0.5 \, \delta=(1.5,2,0.5)$ is not efficient because $b$ is not critical for any agent. This shows that the set of efficient distributions fails to be convex, just like in the case of linear utilities \citep[see][]{BMS05a}.
\end{example}

\subsection{Existence, uniqueness, and computation}
\label{sec:existuniquecomp}
A common way to obtain efficient distributions is to maximize a welfare function. Formally, for any strictly increasing function $g$ on $\mathbb{R}_{\geq 0}$, 
we say that a collective distribution~$\delta$ is \emph{$g$-welfare-maximizing} if it maximizes the weighted sum $\sum_{i\in N} C_i\cdot g( u_i(\delta))$. Clearly, any such distribution is efficient.
Whenever $g$ is strictly concave, there is a \emph{unique} $g$-welfare-maximizing distribution; the straightforward proof is given in \Cref{sub:fwelfare}.

We focus on the special case in which $g$ is the logarithm function. 
The \emph{Nash welfare} of a distribution $\delta$ is defined as the sum of logarithms of the agents' utilities, weighted by their contributions:
\begin{align*}
\Nash(\delta) \coloneqq \sum_{i\in N} C_i\cdot \log u_i(\delta).
\end{align*}
The \emph{Nash rule} selects a collective distribution $\delta$ that maximizes $\Nash(\cdot)$ or, equivalently, the weighted product of the agents' utilities $\prod_{i \in N}u_i^{C_i}$ (with the convention that $0 \log 0=0$ and $0^0=1$).
In \Cref{app:proofuniqueeq}, we show that a collective distribution is an equilibrium distribution if and only if it maximizes Nash welfare. This implies the existence and uniqueness of equilibrium distributions.\footnote{An alternative \emph{non-constructive existence proof} can be obtained by leveraging a theorem by \citet{Rose65a}, who shows the existence of pure equilibria in $n$-player games 
with bounded, closed, and convex strategy sets when the payoff function is continuous in strategy profiles and concave in individual strategies. In fact, this already follows from \citeauthor{Debr52a}'s social equilibrium existence theorem \citep{Debr52a}.} 

\begin{restatable}{theorem}{uniqueeqdistribution}
\label{thm:unique-eq-distribution}
\label{thm:effcomputatibilityeq} 
Every profile admits a \emph{unique} equilibrium distribution. This distribution maximizes Nash welfare. 
\end{restatable}

The equilibrium distribution maximizes a weighted sum of logarithms and can thus be approximated arbitrarily well by considering the corresponding convex optimization problem. 
For linear utilities, \citet{BBP+19a} show that it is impossible to compute the Nash-optimal distribution exactly, even for binary valuations, since this distribution may involve irrational numbers.\footnote{In Fisher markets with linear utility functions, there exists a rational-valued equilibrium distribution, which can be computed \emph{exactly} in polynomial time via the ellipsoid algorithm and Diophantine approximation \citep{Jain07a,Vazi12a}.
}
By contrast, for Leontief utilities, the Nash-optimal distribution is rational whenever the agents' valuations and contributions are rational. This is the case because, given the sets of critical charities for each agent, the equilibrium distribution can be computed using linear programming. 
In an earlier version of this paper, we left open the question of whether these sets and thus the equilibrium distribution can be computed exactly in polynomial time.
In recent follow-up work, \citet{BFGS25a} answered this question by providing a pseudo-polynomial time algorithm. If all contributions are identical, the running time of their algorithm is even polynomial in the binary encoding length of the input. In the special case of binary weights, we show that the equilibrium distribution can be computed using a polynomial number of linear programs; see \Cref{sec:binary}.

\subsection{Public good markets}

We defined equilibrium distributions via Nash equilibria that arise in a strategic game where agents' strategies correspond to their individual distributions.
One can also see them as market equilibria in a stylized public good market. 
Consider a market in which each charity corresponds to a divisible public good in unlimited supply. 
Each agent has a budget of $C_i$ to spend on public goods and preferences on how the endowment $C_N$ should be distributed among the goods. Each public good is available at the same unit price.
A market equilibrium is a vector of individual distributions such that the distribution of each agent maximizes the agent's utility subject to her budget constraints.%
\footnote{This observation holds for general utility functions, not just Leontief.}

In the above market, all agents pay the same prices, but purchase different bundles.
\citet{Lind19a} introduced a ``dual'' market notion, in which each agent pays a personalized price, but purchases the same (public) bundle.
The following definition is adapted from Definition~5 by \citet{FGM16a}.
A distribution $\delta$ is a \emph{Lindahl equilibrium} if there exist personal price functions $p_1,\ldots,p_n\in \mathbb{R}_{\geq 0}^A$ such that the following two conditions hold.
\begin{enumerate}
     \item 
     \label{Lindahlcond1}
     For every agent $i \in N$, $\delta \in \arg \max_{z \in \mathbb{R}^A_{\geq 0}}u_i(z)$ subject to $\sum_{x \in A}p_{i}(x)z(x)=C_i$
     (i.e., $\delta$ maximizes the utility of $i$ among all collective bundles that $i$ can afford);
    \item 
    \label{Lindahlcond2} 
    For every good $x \in \supp(\delta)$, $\sum_{i \in N}p_{i}(x)=1$ 
    (i.e., the agents collectively pay $1$ for each unit of $x$);
    and for every good $x \not \in \supp(\delta)$, $\sum_{i \in N}p_{i}(x)\leq 1$
    (i.e., the collective price of an unproduced good is not higher than that of a produced good).
\end{enumerate}
Note that the maximization in the first case is over all $\mathbb{R}_{\geq 0}^A$ and not just over $\Delta(C_N)$.

 Compared to the definition by \citet{FGM16a}, we turned the inequality from the first condition into an equality, as agents are incentivized to spend their entire contributions for every utility function considered in this paper.
Furthermore, \citet{FGM16a} require $\delta$ to maximize a hypothetical profit of the form $\sum_{x \in A}(\sum_{i \in N}p_{i}(x)-1)\cdot z(x)$ where $z \in \mathbb{R}^A_{\geq 0}$ in the second condition. This is equivalent to our constraints on the sum of prices for every charity.\footnote{Additionally, \citet{FGM16a} do not require a Lindahl equilibrium to be a feasible distribution but only a nonnegative vector. This allows for additional (not sensible) solutions where arbitrary amounts of money are allocated to charities that are zero-valued by all agents (set individual prices for those charities to zero), but does not further expand the set of Lindahl equilibria.}

\citet{FGM16a} and \citet{GuPe20a} have established relationships between Nash welfare and Lindahl equilibrium for certain types of utility functions.
For example, \citet[][Theorem 2]{FGM16a} show that, if all agents have the same budget $C_N/n$ and all utilities are \emph{scalar-separable} and \emph{non-satiating}, then 
a Lindahl equilibrium can be computed efficiently via convex programming by maximizing a concave potential function that generalizes Nash welfare.
We prove an analogous result for Leontief utilities (which are neither scalar-separable nor non-satiating) and non-uniform budgets.

\begin{restatable}{proposition}{Lindahleq}
\label{prop:Lindahleq}
With Leontief utilities,
a collective distribution is an equilibrium distribution if and only if it is a Lindahl equilibrium.
\end{restatable}
Note that, since the equilibrium distribution is unique, the proposition also implies that the Lindahl equilibrium is unique.

For general utility functions, the equivalence between equilibrium distributions and Lindahl equilibria does not hold (e.g., in \Cref{rem:cobbdouglas} for Cobb-Douglas utilities in \Cref{app:otherutilityfunctions}, the Lindahl equilibrium $\delta$ Pareto dominates the equilibrium distribution.)

\section{The equilibrium distribution rule}
\label{sec:equilibrium-rule}
Based on \Cref{thm:unique-eq-distribution}, we define the \emph{equilibrium distribution rule (\edr)}
as the mechanism that, for each profile, returns the unique equilibrium distribution for this profile. In this section, we investigate axiomatic properties of \edr. All proofs are deferred to \Cref{app:axioms}.

\subsection{Strategyproofness and participation}\label{sec:sp}
A distribution rule is \emph{group-strategyproof} if no coalition of agents can gain utility by misreporting their valuations or decreasing their contribution. This incentivizes truthful reports and allows for a correct estimation of agents' utilities under different distributions. 
Furthermore, a group-strategyproof rule ensures that every agent donates the maximal possible contribution, thereby guaranteeing maximal gains from coordination.

It turns out that no group of agents has an incentive to misreport their valuations under \edr.

\begin{restatable}{theorem}{strategyproof}
\label{thm:strategyproof}
\edr is group-strategyproof, i.e.,
for all distinct profiles $P,P'$ and groups $G \subseteq N$ with 
$C'_G\leq C_G$ and $v_{i,x}=v'_{i,x}$ for all $i\in N\setminus G$, $u_i(\edr(P))= u_i(\edr(P'))$ for all $i\in G$ or there is some $i\in G$ such that $u_i(\edr(P))> u_i(\edr(P'))$.
\end{restatable}

The proof of \Cref{thm:strategyproof} can be used to show that an agent receives \emph{strictly} more utility when she increases her contribution. Since \edr returns a Nash welfare maximizing distribution, we can give an explicit lower bound on the utility gain when an agent increases her contribution.  

\begin{restatable}{proposition}{participation}
\label{thm:participation}
Under \edr, if the contribution of some agent increases by $Z>0$, then her utility increases by a factor of at least $(C_N+Z)/C_N$.
\end{restatable}

We are not aware of other settings in which the Nash product rule is strategyproof. \Cref{thm:strategyproof} strongly relies on Leontief preferences.\footnote{If, for example, preferences are given by Cobb-Douglas utility functions, the mechanism that always returns the equilibrium distribution can be manipulated (see \Cref{rem:cobbdouglas} in \Cref{app:otherutilityfunctions}).}

\subsection{Monotonicity conditions}

An important property from the perspective of charity managers is \emph{preference-monotonicity}, which requires that for every agent $i$ and charity $x \in A$, $\delta(x)$ weakly increases when $v_{i,x}$ increases.
In other words, a charity can only receive more donations when agents assign it a higher value.

For linear utilities, strategyproofness implies preference-monotonicity \citep{BBPS21a}. 
This does not hold for Leontief utilities, even when valuations are binary. 
Nevertheless, we still have the following.

\begin{restatable}{proposition}{prefmon}
\label{thm:prefmon}
\edr is preference-monotonic, i.e., for every two profiles $P$ and $P'$ which are identical except that $v_{i,x}' > v_{i,x}$ for one agent $i$ and one charity $x$,
$\edr(P')(x) \ge \edr(P)(x)$.
\end{restatable}

Again, this is in contrast to the linear utility case where the Nash product rule violates preference-monotonicity \citep{BBPS21a}.

To complement \Cref{thm:prefmon}, we observe two other effects of increasing the valuation $v_{i,x}$ of one agent~$i$ for one charity $x$:

\begin{itemize}
\item 
The maximal Nash welfare cannot increase; otherwise, the equilibrium distribution in the new profile would also have a higher Nash welfare than the equilibrium distribution of the original profile, with respect to the original valuations, contradicting \Cref{thm:unique-eq-distribution}.
However, the maximal Nash welfare can remain constant if $x$ is not among agent $i$'s critical charities.
\item 
Similarly, the utility of agent $i$ under the equilibrium distribution cannot increase: if agent $i$'s utility with the new valuation is larger under the new equilibrium distribution, this implies that her utility with the original valuation is also larger in the new equilibrium distribution and thus, there exists a beneficial manipulation (reporting exactly that new valuation instead). This would contradict strategyproofness of \edr (\Cref{thm:strategyproof}). 
However, agent $i$'s utility might remain constant if $x$ is not among her critical charities.
\end{itemize}

For some applications, it is desirable that increased contributions do not result in the redistribution of funds that have already been allocated. For example, if agents arrive over time or increase their contributions over time, ideally, the mechanism only needs to take care of the additional contributions.
This would allow a deployment of the mechanism as an incremental process in which charities can make immediate use of the donations they receive. 

\begin{restatable}{proposition}{contributionmon}
\label{thm:contribution-monotonicity}
\edr is contribution-monotonic, i.e., for every two profiles $P$ and $P'$ where $P'$ can be obtained from $P$ by increasing the contribution of one agent (possibly from $0$), $\edr(P')(x) \ge \edr(P)(x)$ for all charities $x \in A$.  
\end{restatable}

\section{Spending dynamics converging to equilibrium}\label{sec: br-dynamics} 

Thus far, we have assumed the existence of a central authority that collects the preferences of all agents and then either distributes the endowment among the charities or recommends to each agent how to distribute her individual contribution. In this section, we show that equilibrium distributions can also be attained in multi-round processes without a central authority, simply by letting agents spend their contribution one after another in a myopically optimal way. 
Agents need not reveal their preferences explicitly, but they have to be able to observe the donations made in previous rounds.

To this end, we consider infinite processes in which agents repeatedly play best responses against the strategies of the other agents in previous rounds. We first analyze a redistribution dynamics where the endowment remains fixed, and agents can redistribute their contribution whenever it is their turn. It turns out that the distribution converges to the equilibrium distribution under a very mild condition on the sequence of agents. We then consider a continuous spending dynamics in which there is a constant flow of contributions from each agent (for example, when each donor~$i$ has set aside a monthly budget~$C_i$ to spend on charitable activities). We focus on the case of round-robin sequences and show that the collective distribution of the last $n$ rounds and, thus, the relative overall distribution, converges to the equilibrium distribution when agents can only observe the distribution given by the last $n-1$ rounds. 

Our convergence results can be leveraged to make statements in more flexible settings where the set of participating agents, as well as their preferences and contributions, can change over time. The finite number of donations that have been made up to a certain point will always be outweighed by the infinite number of donations that follow. Hence, even with occasional changes to the profile, the relative overall distribution keeps converging towards an equilibrium distribution of the current profile.

All proofs for this section are deferred to \Cref{app: dynamics-leontief}.

\subsection{Redistribution dynamics}
\label{sub:redistribution-dynamics}

Let us first consider a dynamics in which the endowment remains fixed and agents repeatedly redistribute their contributions after observing the current collective distribution. 

Formally, denote by $\delta^*$ the equilibrium distribution and by $\deltat$ the distribution at round~$t$ (along with its associated decomposition), e.g., $\deltat[0]$ equals the null vector as no agent $i \in N$ has yet distributed her contribution $C_i$. 
In each round~$t$, allow one agent $i_t$ to (re-)distribute her entire contribution in such a way that her utility is maximized for the new distribution $\deltat[t+1]$, i.e.,
\begin{align*}
\delta^\mathit{best}_{i_t} &\coloneqq \arg\max_{\delta_{i_t} \in \Delta(C_{i_t})}u_{i_t}\left(\delta_{i_t} + \sum_{j \neq i_t}\deltat_j\right);
\\
\deltat[t+1] &\coloneqq \delta^\mathit{best}_{i_t} + \sum_{j \neq i_t}\deltat_j.
\end{align*}

\begin{restatable}{lemma}{uniquebestresponse}
\label{lem:uniquebestresponse}
For every round $t$ and agent $i_t$,
there is a unique best response $\delta^\mathit{best}_{i_t}$.
\end{restatable}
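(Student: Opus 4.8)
The plan is to treat agent $i_t$'s optimization as a classical water-filling problem, and to show that although several distributions might in principle reach the optimal utility \emph{value}, the budget constraint forces the optimal \emph{distribution} to be unique. Write $\sigma := \sum_{j \neq i_t}\delta^t_j$ for the fixed background distribution contributed by the other agents, so that $i_t$ chooses $\delta_{i_t} \in \Delta(C_{i_t})$ to maximize $u_{i_t}(\sigma + \delta_{i_t}) = \min_{x\in A_{i_t}} (\sigma(x)+\delta_{i_t}(x))/v_{i_t,x}$. If $C_{i_t}=0$ the only feasible distribution is the zero vector and the claim is trivial, so assume $C_{i_t}>0$.

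The key device is the ``cost-to-reach-level-$t$'' function $g(t) := \sum_{x\in A_{i_t}} \max\bigl(0,\, v_{i_t,x}\, t - \sigma(x)\bigr)$, which records the minimal amount agent $i_t$ must spend to guarantee utility at least $t$: to reach level $t$ one must top each valued charity $x$ up from $\sigma(x)$ to at least $v_{i_t,x}\, t$, and any surplus budget may be dumped arbitrarily without affecting the utility. I would first note that $g$ is continuous and non-decreasing with $g(0)=0$, and that a feasible $\delta_{i_t}$ attaining utility at least $t$ exists \emph{if and only if} $g(t)\le C_{i_t}$. Consequently the optimal utility equals $U^* := \max\{\,t : g(t)\le C_{i_t}\,\}$ (the maximum is attained by continuity of $g$ and because $g(t)\to\infty$). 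The crucial refinement is that attaining $U^*$ consumes the \emph{entire} budget, i.e.\ $g(U^*)=C_{i_t}$: clearly $g(U^*)\le C_{i_t}$, and if the inequality were strict then continuity of $g$ together with $C_{i_t}>0$ would give $g(U^*+\epsilon) \le C_{i_t}$ for small $\epsilon>0$, contradicting maximality of $U^*$.

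Uniqueness then follows by a short counting argument. Any best response $\delta_{i_t}$ attains utility exactly $U^*$, so $\sigma(x)+\delta_{i_t}(x) \ge v_{i_t,x}\, U^*$ for every $x\in A_{i_t}$, whence $\delta_{i_t}(x) \ge \max\bigl(0,\, v_{i_t,x}\, U^* - \sigma(x)\bigr)$. Summing these lower bounds over $x\in A_{i_t}$ yields $\sum_{x\in A_{i_t}}\delta_{i_t}(x) \ge g(U^*) = C_{i_t}$. But $\delta_{i_t}$ is nonnegative with total mass $C_{i_t}$, so all of these inequalities must hold with equality and $\delta_{i_t}(x)=0$ for every $x\notin A_{i_t}$. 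Hence $\delta_{i_t}(x)=\max\bigl(0,\, v_{i_t,x}\, U^* - \sigma(x)\bigr)$ on $A_{i_t}$ and $0$ elsewhere, a distribution completely determined by $\sigma$ and $C_{i_t}$. I expect the only delicate point to be the step $g(U^*)=C_{i_t}$ (that reaching the optimum exhausts the whole budget); once this is secured, the summation collapses the pointwise inequalities to equalities and the best response is pinned down. This self-contained argument is cleaner than invoking concavity, which guarantees existence of a maximizer but not uniqueness of the maximizing distribution.
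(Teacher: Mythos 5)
Your proof is correct, and it takes a genuinely different route from the paper's. The paper argues by exchange: it first notes (by compactness and continuity) that a best response exists, then observes that a best response must spend all of $C_{i_t}$ on the charities that are critical for $i_t$ in the resulting distribution, so any other response necessarily shortchanges one of those critical charities and strictly lowers her utility. You instead derive an explicit water-filling formula: the cost function $g(t)$ pins down the optimal level $U^*$, the budget-exhaustion step $g(U^*)=C_{i_t}$ is the crux (and you prove it correctly via continuity and maximality of $U^*$), and the summation argument then collapses the pointwise lower bounds $\delta_{i_t}(x)\ge\max\bigl(0,\,v_{i_t,x}U^*-\sigma(x)\bigr)$ to equalities, forcing $\delta_{i_t}$ to vanish outside $A_{i_t}$ and to equal the water-filling profile on $A_{i_t}$. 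What your approach buys is a closed-form description of the best response and a self-contained existence proof (the explicit distribution attains $U^*$, so no appeal to Weierstrass is needed); what the paper's approach buys is brevity, since it reuses the critical-charity machinery of its Lemma~4.2 and avoids introducing the auxiliary function $g$. Both yield the same conclusion, and your formula is in fact consistent with the paper's observation that the best response is supported exactly on the charities that become critical after the move.
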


Before turning to the main result on the convergence of the dynamics, we illustrate the process using the initial example discussed in the introduction.
\begin{example}
Consider \Cref{exm:2donors-4charities} and suppose Donor 2 is the first in the sequence. Her best response, given the initial distribution $(0,0,0,0)$, is to split her donation of \$100 between $C$ and $D$, so the distribution becomes $(0,0,50,50)$.
	Next, Donor 1 plays a best response, which splits the donation of \$900 unequally, giving $316.\overline{6}$ to $A$, $316.\overline{6}$ to $B$, and $266.\overline{6}$ to $C$. The distribution becomes $(316.\overline{6},316.\overline{6},316.\overline{6},50)$.
	Then, Donor 2 plays a best response, which moves all of her donation to $D$. 
	The distribution becomes $(316.\overline{6},316.\overline{6},266.\overline{6},100)$.
	Finally, Donor~1 plays a best response, which moves $16.\overline{6}$ from each of $A$ and $B$ to $C$. The distribution then becomes $(300,300,300,100)$, which equals the equilibrium distribution. Consequently, agents' individual distributions form a Nash equilibrium.
In this particular case, the equilibrium distribution is attained after a finite number of rounds, but in general, we can only prove convergence in the limit.
\end{example}

Such dynamics are related to the ones studied in nonatomic congestion games \citep[see, e.g.,][]{Milc04a}. At a high level, each donor can be interpreted as a population of mass $C_i$ where each (infinitesimally small) agent or ``dollar'' wants to move to a charity $x$ with the smallest ratio $\delta(x)/v_{i,x}$, where $\delta(x)$ corresponds to the current mass of dollars at~$x$. While equilibria coincide (if no dollar wants to move, no donor has an incentive to redistribute her budget and vice versa), best responses of a donor are more complex as they can change the mass on more than two charities at once. Still, these connections turn out to be very useful.

We define a real-valued function $\Phi$ on the set of strategy vectors such that, whenever some agent deviates to a best response, $\Phi$ strictly increases.%
\footnote{
The definition of $\Phi$ is inspired by the definition of \emph{ordinal potential functions}, which  were originally introduced to prove the existence of pure strategy Nash equilibria in congestion games \citep{Rose73a,MoSh96a} and have since then been widely used to prove convergence to equilibrium \citep[e.g.,][]{Milc96b,Milc00a,Milc04a}.
However, an ordinal potential function increases whenever a player plays a \emph{better} response, whereas our $\Phi$ increases only when a player plays a \emph{best} response. 
In fact, the game we consider does not admit an ordinal potential function, as there can be cycles of better responses.

\citet{Voor00a} defines a \emph{best-response potential function}, which is maximized whenever a player plays a best response; our $\Phi$ increases when a player plays a best response, but we do not know if it is maximized.

Note that Nash welfare need \emph{not} monotonically increase for best-response sequences.
}

\begin{align}
\label{eq:potential}
    \Phi(\delta_1,\ldots,\delta_n)\coloneqq\sum_{i \in N}\sum_{x \in A_i}\delta_i(x)\log \left(\frac{v_{i,x}}{\delta(x)}\right)
\end{align}
Note that $\Phi$ is well-defined, as $\delta(x)=0$ implies $\delta_i(x)=0$ for all $i\in N$, and $x\in A_i$ implies $v_{i,x}>0$.

Observe that an agent's best response can be described by the following continuous process: 
as long as the agent spends a positive amount on a non-critical charity, transfer money from such a charity to all critical charities equally until either \emph{(i)} at least one more charity becomes critical or \emph{(ii)} the agent no longer spends a positive amount on a non-critical charity.
This process can be interpreted as a sequence of transfers, where each transfer of amount $\epsilon>0$ goes from a charity $x$ with higher weighted distribution to a charity $y$ $\left(\frac{\delta(x)}{v_{i,x}} > \frac{\delta(y)}{v_{i,y}}\right)$ such that after the transfer, the weighted distribution of the former charity remains at least as high as that of the latter: $\frac{\delta(x)-\epsilon}{v_{i,x}} \geq \frac{\delta(y)+\epsilon}{v_{i,y}}$.
The following lemma proves that each such transfer strictly increases $\Phi$.

\begin{restatable}{lemma}{lempotentialpairwise}
\label{lem:potential-pairwise}
Let $(\delta_1,\ldots,\delta_i,\ldots,\delta_n)$ be any vector of individual distributions,
$i\in N$ any agent,
and $\delta_i'$ a distribution derived from $\delta_i$ when agent $i$ moves an amount $\epsilon>0$ from some charity $x$ to another charity $y$ with
$\frac{\delta(x)-\epsilon}{v_{i,x}} \geq \frac{\delta(y)+\epsilon}{v_{i,y}}$.
Then, the move strictly increases $\Phi$, that is,
$\Phi(\delta_1,\ldots,\delta'_i,\ldots,\delta_n)>\Phi(\delta_1,\ldots,\delta_i,\ldots,\delta_n)$.
\end{restatable}

As every best response is a sum of pairwise transfers as in \Cref{lem:potential-pairwise},%
\footnote{
\label{ftn:better-response}
This argument does \emph{not} hold for any better response. For example, suppose an agent values three charities equally, the current collective distribution is $3,7,9$, and the agent changes it to $4,5,10$. This better response (which is not a best response) cannot be derived from transfers as in \Cref{lem:potential-pairwise}.
}
we obtain the following lemma as a corollary.
\begin{restatable}{lemma}{lempotential}
\label{lem:potential}
    For any best-response sequence $\mathcal{S}$, it holds that $\Phi(\deltat[t+1])>\Phi(\deltat[t])$ for all $t$.
\end{restatable}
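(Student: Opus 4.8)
The plan is to show that a single best-response move strictly increases $\Phi$. Suppose agent $i_t$ moves in round $t$, changing only her own individual distribution from $\deltat_{i_t}$ to $\delta^{\mathit{best}}_{i_t}$, while all other agents' distributions stay fixed. Write $\delta := \deltat[t]$ for the distribution before the move and $\hat\delta := \deltat[t+1]$ for the distribution after. Since only agent $i_t$'s contribution is reallocated, the difference $\Phi(\hat\delta_1,\ldots,\hat\delta_n) - \Phi(\delta_1,\ldots,\delta_n)$ can be organized into two effects: the change in agent $i_t$'s \emph{own} summand $\sum_{x\in A_{i_t}}\delta_{i_t}(x)\log(v_{i_t,x}/\delta(x))$, and the change induced in \emph{every} agent's summand through the shared denominator $\delta(x)$, since the aggregate $\delta(x)$ changes only on charities that $i_t$ touches.

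The key algebraic step I would carry out is to split $\Phi$ as $\sum_{i}\sum_{x\in A_i}\delta_i(x)\log v_{i,x} - \sum_{x\in A}\delta(x)\log\delta(x)$, using $\sum_{i:x\in A_i}\delta_i(x) = \delta(x)$ to collapse the denominator terms (here I need that $\delta_i(x)>0$ implies $x\in A_i$, which holds since an agent never gains utility by funding a charity she values at $0$). In this form the second term $-\sum_x \delta(x)\log\delta(x)$ is a strictly concave function of the distribution vector, while the first term is linear in agent $i_t$'s individual distribution. The move replaces $\deltat_{i_t}$ by the utility-maximizing response, and I would exploit the characterization from \Cref{lem:uniquebestresponse}: after the move, $i_t$ spends her entire contribution only on her critical charities in $\hat\delta$. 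The plan is to express $\Phi(\hat\delta)-\Phi(\delta)$ as a sum over the affected charities and bound it below, comparing the linear gain in the $\log v$ term against the concave denominator penalty, much as in the proof of \Cref{lem:eqisnash}, where $-\sum_x \delta^*(x)\log\delta(x)$ is maximized precisely at the current aggregate.

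Concretely, I expect to fix agent $i_t$ and treat $\Phi$ as a function of her choice $\delta_{i_t}$ alone (others fixed). Its gradient with respect to $\delta_{i_t}(x)$ equals $\log v_{i_t,x} - \log\delta(x) - (\text{terms from the self-contribution in the denominator})$, and the structure should reveal that the best response for the Leontief utility $u_{i_t}$ is exactly the maximizer of this restricted $\Phi$. Because the best response is unique (\Cref{lem:uniquebestresponse}) and differs from the current strategy (otherwise $i_t$ was already optimal and no ``move'' occurs in the relevant sense), strict concavity of the denominator term forces the increase to be strict: $\Phi(\hat\delta) > \Phi(\delta)$.

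The main obstacle I anticipate is the self-referential denominator: when $i_t$ reallocates her own money, the aggregate $\delta(x)$ in the $\log(v_{i,x}/\delta(x))$ terms changes, so $\Phi$ is \emph{not} simply linear in $\delta_{i_t}$, and one must carefully verify that the induced change in the denominator term (affecting all agents' summands, not just $i_t$'s) still nets out in favor of the best-responding agent. The cleanest resolution is the decomposition $\Phi = \sum_i\sum_{x\in A_i}\delta_i(x)\log v_{i,x} - \sum_x \delta(x)\log\delta(x)$, after which the restricted optimization in $\delta_{i_t}$ becomes a concave program whose unique maximizer I must show coincides with the Leontief best response; matching these two maximizers is the technical heart of the argument, and I would verify it via the first-order (KKT) conditions, showing both are characterized by equalizing $\delta(x)/v_{i_t,x}$ across the charities that receive positive funding from $i_t$.
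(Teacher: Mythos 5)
Your proposal is correct in outline but takes a genuinely different route from the paper. The paper decomposes the best-response move into a sequence of elementary $\epsilon$-transfers, each from a charity with higher weighted allocation $\delta(x)/v_{i_t,x}$ to one with lower, and verifies by direct computation (using $\log(1+z)>\frac{z}{1+z}$) that every such transfer strictly increases $\Phi$. You instead write $\Phi=\sum_i\sum_{x\in A_i}\delta_i(x)\log v_{i,x}-\sum_{x}\delta(x)\log\delta(x)$, observe that as a function of $\delta_{i_t}$ alone this is strictly concave, and argue via KKT that its maximizer coincides with the Leontief best response, so that a nontrivial best-response move strictly increases $\Phi$. This matching does go through, but only over the simplex of individual distributions supported on $A_{i_t}$: for $x\notin A_{i_t}$ the $\log v_{i_t,x}$ term is absent while $-\delta(x)\log\delta(x)$ still rewards funding under-funded charities, so over the full simplex $\Delta(C_{i_t})$ the two maximizers can differ. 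The restriction is harmless here because both the old individual distribution (null or itself an earlier best response) and the new one are supported on $A_{i_t}$, but you should say so explicitly, and you should also record the complementary-slackness half of the characterization ($\delta(x)/v_{i_t,x}$ weakly above the common value on approved charities receiving nothing from $i_t$), since equalization on the funded charities alone does not pin down the maximizer. As for what each approach buys: yours is conceptually cleaner and in fact shows that $\Phi$ is an exact best-response potential on the restricted strategy spaces, sharpening the paper's footnote, which only claims that $\Phi$ increases at best responses; the paper's transfer-by-transfer computation, on the other hand, yields an explicit quantitative lower bound on the increase, and precisely that computation is reused in refined form ($\log(1+z)>\frac{z}{1+z}+\frac{z^2}{(2+z)^2}$) in the proof of \Cref{cor: leon-br-dyn-shift}, where a quadratic lower bound on the potential gain is needed for convergence. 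So your argument establishes the lemma, but the paper's extra explicitness is load-bearing downstream.
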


\begin{restatable}{theorem}{dynamics}
\label{thm:leon-dynamics}
    Given a profile $P$, let $\mathcal{S}=(i_0,i_1,i_2,\dots)$ be an infinite sequence of agents updating their individual distributions via best responses such that there is a bound $K \in \mathbb{N}$ on the maximal number of rounds an agent has to wait until she is allowed to redistribute.
    Then, the redistribution dynamics converges to the equilibrium distribution, i.e.,
    $\lim_{t \to \infty} \delta^t=\delta^*$. 
\end{restatable}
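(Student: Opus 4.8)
The plan is to prove convergence by exhibiting a potential function that is monotone along the dynamics and whose only fixed point is the equilibrium distribution. The natural candidate is the Nash welfare $\Nash(\deltat)$, since each best-response step maximizes the utility of the updating agent while leaving the other agents' contributions fixed, and by \Cref{lem:eqisnash,lem:nashiseq} the equilibrium distribution is exactly the Nash maximizer. First I would argue that each best-response update is weakly Nash-improving: when agent $i_t$ replaces $\deltat_{i_t}$ by her best response $\delta^{\mathit{best}}_{i_t}$, her own utility weakly increases by definition, and—crucially—the utilities of the other agents can only weakly increase as well. This second point needs care: moving $i_t$'s contribution onto her own critical charities adds funding to some charities and removes it from others, so a priori another agent $j$ could lose. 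I would establish instead the cleaner invariant that the sequence $\Nash(\deltat)$ is nondecreasing directly, by comparing $\deltat[t+1]$ against $\deltat$ as feasible points of the one-agent maximization and invoking that the weighted-product objective, restricted to agent $i_t$'s strategy while the rest is held fixed, is maximized at $\delta^{\mathit{best}}_{i_t}$; one must verify that maximizing $i_t$'s \emph{own} Leontief utility coincides with maximizing the Nash product over $i_t$'s strategy space, which holds because the other factors $u_j$ are concave and the single-agent problem is exactly the best-response problem of \Cref{lem:uniquebestresponse}.

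Next I would pass to the limit. Since $\Nash(\deltat)$ is bounded above (by $\Nash(\delta^*)$) and nondecreasing, it converges. The distributions $\deltat$ live in the compact set $\Delta(C_N)$, so by Bolzano--Weierstrass some subsequence converges to a limit point $\hat\delta$. The goal is to show every such limit point equals $\delta^*$; combined with compactness this forces $\lim_{t\to\infty}\deltat=\delta^*$. To identify $\hat\delta$, I would use the bounded-waiting hypothesis (the constant $K$): within any window of $K$ consecutive rounds, every agent gets to play a best response at least once. Passing to a subsequence along which the window structure stabilizes, I would argue that at $\hat\delta$ no agent has a profitable unilateral redistribution—i.e., each agent is already contributing only to her critical charities in $\hat\delta$—because otherwise the Nash welfare would have to strictly increase by a bounded-below amount infinitely often, contradicting convergence of $\Nash(\deltat)$. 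By \Cref{lem:eq-iff-critical}, a distribution in which every agent contributes only to her own critical charities is an equilibrium distribution, and by uniqueness (\Cref{thm:unique-eq-distribution}) it equals $\delta^*$.

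\textbf{The main obstacle} I anticipate is the discontinuity and nonuniqueness lurking in the best-response map across rounds, which makes a naive ``$\Nash$ is strictly increasing unless at equilibrium'' argument fail to give a quantitative gap. The subtle point is that $\Nash(\deltat[t+1])-\Nash(\deltat)$ can be arbitrarily small even when $\deltat$ is far from $\delta^*$, so monotonicity alone does not yield convergence; one genuinely needs the compactness-plus-limit-point argument together with the $K$-boundedness to rule out limit points that are not equilibria. Concretely, the delicate step is showing that at a limit point $\hat\delta$ the best response of each agent reproduces her current individual distribution (up to the fact that only the \emph{total} $\deltat$ converges, while the decompositions $(\deltat_j)$ need not). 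I would handle this by working with the totals: if some agent $i$ contributes in $\hat\delta$ to a charity outside $T_{\hat\delta,i}$, then by continuity of $u_i$ and of the critical-set structure on a neighborhood, her best response at every sufficiently late round when she plays would raise $\Nash$ by at least a fixed positive amount, and since she plays infinitely often within the $K$-windows, $\Nash(\deltat)\to\infty$—impossible. This continuity argument, establishing a uniform lower bound on the Nash improvement near a non-equilibrium limit point, is where the real work lies.
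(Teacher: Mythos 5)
Your proposed potential is wrong: the Nash welfare $\Nash(\deltat)$ is \emph{not} monotone along best-response sequences, and the paper explicitly notes this in the footnote accompanying the definition of its potential. The step where you claim that ``maximizing $i_t$'s own Leontief utility coincides with maximizing the Nash product over $i_t$'s strategy space'' does not hold. The best response maximizes the single factor $u_{i_t}$; the Nash product also contains the factors $u_j^{C_j}$ for $j\neq i_t$, and when agent $i_t$ pulls her contribution away from a charity that is critical for some other agent $j$, the factor $u_j$ strictly decreases. Concavity of the other factors in no way forces the argmax of $u_{i_t}$ over $i_t$'s strategy space to coincide with the argmax of the product, and indeed there are best-response steps that strictly decrease $\Nash$. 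Since the monotone quantity is the engine of your entire argument (both the ``no profitable deviation at a limit point'' step and the ``uniform improvement near a non-equilibrium limit point leads to $\Nash\to\infty$'' step), the proof does not go through as written.

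The paper repairs exactly this difficulty by introducing a different potential,
\begin{align*}
\Phi(\delta_1,\ldots,\delta_n)=\sum_{i\in N}\sum_{x\in A_i}\delta_i(x)\log\!\left(\frac{v_{i,x}}{\delta(x)}\right),
\end{align*}
which is a function of the \emph{decomposition} rather than of the total distribution alone, and which is shown (\Cref{lem:potential}) to strictly increase at every best-response step by decomposing the step into elementary transfers. The rest of the paper's argument is then quantitative: a ``triangle inequality'' on the amounts agents wish to shift (\Cref{lem: leon-br-dyn-shift}), combined with the $K$-bounded-waiting hypothesis and a second-order refinement of $\log(1+x)$, yields a lower bound on the potential gain in terms of the shifted amount, forcing $d_j(\deltat)\to 0$ for every agent; Bolzano--Weierstrass and uniqueness of the equilibrium then finish the proof. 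Your second half (compactness, limit points, uniqueness) mirrors this outline, but it cannot be salvaged without replacing Nash welfare by a genuinely monotone potential, and you would also need to confront the fact that the relevant quantity must be defined at the level of decompositions, since, as you yourself observe, only the totals $\deltat$ need converge.
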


For binary Leontief utilities, the function $\Phi$ from \eqref{eq:potential} simplifies to $\Phi(\delta_1,\ldots,\delta_n)=-\sum_{x \in A}\delta(x) \log(\delta(x))$.

Also, with binary Leontief utilities, $\widehat{\Phi}(\delta) \coloneqq -\sum \lvert\delta(x) - \delta(y)\rvert$, where the sum is taken over all (unordered) pairs of distinct charities $x,y\in A$, is an alternative choice for $\Phi$; it also increases via best responses. 
Moreover, $\widehat{\Phi}$ has the advantage that it increases linearly in the redistributed amounts:
if an agent redistributes some amount, then the function changes by exactly that amount. As $\widehat{\Phi}$ is upper-bounded, we know that the total amount of redistributed contributions is bounded.

It can then be shown that \Cref{thm:leon-dynamics} holds for \emph{any} best-response sequence~$\mathcal{S}$ in which each agent appears infinitely often (we do not need the uniform upper bound $K$): if an agent wants to redistribute an amount $d$ at round $t$, then $\widehat{\Phi}$ will increase by at least $d$ after that agent redistributes. So we again get convergence. 

\subsection{Round-robin spending dynamics}
\label{sub:spending-dynamics}

Let us now move on to a model in which there is a constant flow of donations, and each agent repeatedly donates her contribution $C_i$ when it is her turn. To this end, fix some order of the agents (say, $1,2,\dots, n$) and denote by $\delta^t_i(x)$ the total amount of contributions of agent $i$ to charity $x$ until round $t$. At each round $t \ge 0$, agent $i_t=1+ (t \mod n)$
donates $C_{i_t}$ in such a way that her utility is maximized with respect to the previous donation of every other agent, i.e.,
\begin{align*}
\delta^{\mathit{best}}_{t} &\coloneqq \arg\max_{\delta_{i_t} \in \Delta(C_{i_t})}u_{i_t}\left(\delta_{i_t} + \sum_{t-n<s<t}\delta^{\mathit{best}}_{s}\right);
\\
\deltat[t+1] &\coloneqq \delta^t + \delta^{\mathit{best}}_{t};
\\
\deltat[t+1]_{i_t} &\coloneqq \deltat_{i_t}+\delta^{\mathit{best}}_{t},
\end{align*}
where the distribution of the contribution of agent $i_t$ in round $t$ is denoted by $\delta^{\mathit{best}}_{t}$.\footnote{We here assume that the ``observation window'' of each agent is given by the last $n-1$ rounds. Computer simulations suggest that convergence also holds for larger observation windows.}

\Cref{lem:uniquebestresponse} still applies: each agent's best response is unique. To compare $\delta^t$ with the equilibrium distribution $\delta^*$ (where each agent only contributed once), we scale $\delta^t_i$ by the number of donations of agent $i$ until round $t$, which equals $\lfloor (t+n-i)/n \rfloor$. 

\begin{example}
To illustrate the process, consider \Cref{exm:2donors-4charities} from the introduction for the sequence $(2,1,2,1,\dots)$. First, Donor 2 splits her donation of \$100 between $C$ and $D$, resulting in $\delta^1=(0,0,50,50)$. Next, Donor 1 plays a best response, which splits the donation of \$900 unequally, giving $316.\overline{6}$ to $A$, $316.\overline{6}$ to $B$ and $266.\overline{6}$ to $C$ leading to $\delta^2=(316.\overline{6},316.\overline{6},316.\overline{6},50)$. Then, Donor 2 donates another \$100 to $D$ under her best response. The overall distribution becomes $\delta^3=(316.\overline{6},316.\overline{6},316.\overline{6},150)$. It is straightforward to see that from now on, Donor 1 will always split her contribution equally on $A$, $B$, and $C$, whereas Donor 2 will only donate to $D$. Thus, $\lim_{t \to \infty}2\delta^t_1/t=(300,300,300,0)$ and $\lim_{t \to \infty}2\delta^t_2/t=(0,0,0,100)$, showing convergence to the equilibrium distribution. 
\end{example}

\begin{restatable}{theorem}{contspenddynamics}
\label{thm:contspenddynamics}
Given a profile $P$, the continuous round-robin spending dynamics converges to the equilibrium distribution, i.e.,
\begin{align*}
\lim_{t \to \infty}\sum_{i \in N}\frac{1}{\lfloor (t+n-i)/n \rfloor} \delta^t_i=\delta^*\text.
\end{align*}
\end{restatable}

\begin{remark}\label{rem: sepadd-dynamics}
    In the spirit of \Cref{prop:sepaddutils,prop:otherutilsthanCD}, the convergence results also apply to other utility functions (e.g., Cobb-Douglas), as not only equilibrium distributions but also best responses coincide. 
\end{remark}

\begin{remark}
An interesting question is which profiles allow for the existence of sequences such that the dynamics not only converges to but even reaches the equilibrium distribution, and more generally, how optimal sequences can be found. 
\end{remark}

\section{Leontief utilities with binary weights}
\label{sec:binary}

In this section, we consider the special case of binary Leontief weights, i.e., $v_{i,x} \in \{0,1\}$ for all agents $i \in N$ and charities $x \in A$.
Equivalently, each agent $i$ has a non-empty set of \emph{approved charities} $A_i\subseteq A$ and her utility from a distribution $\delta$ is 
\begin{align*}
u_i(\delta) = \min_{x\in A_i} \delta(x).
\end{align*}
For each charity $x\in A$, we denote by $N_x\subseteq N$ the set of agents who approve charity~$x$.
For a subset of agents $N'\subseteq N$, we denote $A_{N'} \coloneqq \bigcup_{i \in N'} A_i$ as the set of charities approved by at least one member of $N'$.
Note that, for every charity $x\in A$ and every agent $i\in N_x$,
\begin{align}
\label{eq:min-based}
\delta(x)\geq u_i(\delta).
\end{align}
Binary weights allow for further insights into the structure of the equilibrium distribution, which in turn yield new interpretations and additional properties of \edr.

For linear utilities with binary weights, a vector of individual distributions is in equilibrium if and only if each agent contributes only to charities she approves. For the resulting collective distribution, \citet{BBPS21a} refer to this axiom as \emph{decomposability}. 

\begin{definition}[Decomposable distribution]
\label{def: decomposability}
Given a profile with binary weights ($v_{i,x} \in \{0,1\}$),
a distribution $\delta$ is \emph{decomposable} if it has a decomposition $(\delta_i)_{i \in N}$ such that $\delta_i(x)=0$ for every charity
$x\not\in A_i$.
Equivalently, it has a decomposition satisfying the following, instead of \eqref{eq:dec-ci-A}:
\begin{align*}
&\sum_{x \in A_i}\delta_i(x) = C_i && \text{ for all } i\in N.
\end{align*}
\end{definition}

The equivalence of decomposable distributions and equilibrium distributions no longer holds for Leontief utilities: there are decomposable distributions that are not equilibrium distributions even when there is only one agent.

Nevertheless, decomposability can be used to establish two appealing alternative interpretations of \edr for binary weights.

\subsection{Egalitarianism for charities}
Motivated by \Cref{exm:2donors-4charities}, we aim at a rule that distributes money on the charities as equally as possible
while still respecting the preferences of the donors.
One rule that comes to mind selects a distribution that, among all decomposable distributions, maximizes the smallest amount allocated to a charity. Subject to this, it maximizes the second-smallest allocation to a charity, and so on. We define it formally using the \emph{leximin} relation.

\begin{definition}
\label{def:leximin}
Given two vectors $\mathbf{x}, \mathbf{y}$ of the same size, we say that \emph{$\mathbf{x}$ is leximin-higher than $\mathbf{y}$} (denoted $\mathbf{x} \succ_\mathit{lex} \mathbf{y}$) if the smallest value in $\mathbf{x}$ is
larger than the smallest value in $\mathbf{y}$; or the smallest values are equal, and the second-smallest 
value in $\mathbf{x}$ is
larger than the second-smallest value in $\mathbf{y}$; and so on. 
$\mathbf{x} \succeq_\mathit{lex} \mathbf{y}$ means that either $\mathbf{x} \succ_\mathit{lex} \mathbf{y}$ or 
the multiset of values in $\mathbf{x}$ is the same as that in $\mathbf{y}$.
\end{definition} 

\begin{definition}
The \emph{charity egalitarian rule} selects the distribution $\delta^{*}$ that, among all decomposable distributions,
maximizes the distribution vector by the leximin order, that is: $\delta^{*}\succeq_\mathit{lex} \delta$ for every decomposable distribution $\delta$.
\end{definition}

The leximin order on the closed and convex set of decomposable distributions is connected, every two vectors are comparable, and there exists a unique maximal element (otherwise, any convex combination of two different maximal elements would be leximin-higher than the maximal elements).
Therefore, the charity egalitarian rule selects a unique distribution and is well-defined.
We prove that the returned distribution is the equilibrium distribution, resulting in an alternative characterization of \edr for binary weights.

\begin{restatable}{proposition}{edrisprojegal}
\label{thm:edrisprojegal}
With binary weights,
the charity egalitarian rule and \edr are equivalent.
\end{restatable}

Remarkably, this new interpretation of \edr ignores the Leontief utilities of the agents and does not directly take into account the different contributions. Instead, they enter indirectly through the constraints induced by decomposability. 

\begin{observation}
\label{cor: lex-app}
When new approvals are added, the equilibrium distribution becomes (weakly) leximin-higher with respect to the vector of contributions on the charities.
\end{observation}
\begin{proof}
This follows directly from \Cref{thm:edrisprojegal}.
When new approvals are added, the decomposability constraints on the vector of charity distributions become weaker: every distribution that was decomposable before the addition remains decomposable after the addition. Therefore, the new leximin-maximal distribution is at least as high, in the leximin order, as the previous distribution.
\end{proof}

\Cref{thm:edrisprojegal} implies that \edr can be computed by solving the following program, with variables
$\delta_x$ for all $x\in A$ and $\delta_{i,x}$ for all $i\in N,\, x\in A$:
\begin{align*}
\operatorname{lex} \operatorname{max} \operatorname{min} &&\{ \delta_x\}_{x\in A}
\\
\text{subject to}&&\delta_x &= \sum_{i\in N} \delta_{i,x}
&& 
\text{for all } x\in A
\\
&&\sum_{x\in A_i}\delta_{i,x} &= C_i
&&
\text{for all } i\in N
\\
&&\delta_{i,x} &\ge 0
 &&\text{for all } i\in N,\, x\in A_i
\end{align*}
where ``$\operatorname{lex} \operatorname{max} \operatorname{min}$'' refers to finding a solution vector that is maximal in the leximin order subject to the constraints, and the second constraint represents decomposability.
It is well-known that such leximin optimization with $k$ objectives 
and linear constraints can be solved by a sequence of $k$ linear programs \citep[see, e.g.,][Sect.~5.3]{Ehrg05a}.

\begin{corollary}
With binary weights, 
the equilibrium distribution can be computed by solving at most $m$ linear programs.
\end{corollary}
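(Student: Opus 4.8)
The plan is to derive this corollary almost immediately from \Cref{thm:edrisprojegal} together with the standard technique for solving leximin optimization problems over polyhedra. By \Cref{thm:edrisprojegal}, for binary weights the equilibrium distribution coincides with the output of the charity egalitarian rule, so it suffices to compute the leximin-maximal decomposable distribution, i.e., to solve the displayed $\operatorname{lex}\operatorname{max}\operatorname{min}$ program. The two structural observations that drive the argument are that (i) the feasible region---the set of decomposable distributions---is a polytope cut out by the linear equalities $\delta_x = \sum_{i\in N}\delta_{i,x}$ and $\sum_{x\in A_i}\delta_{i,x}=C_i$ together with the nonnegativity constraints, and (ii) the vector being compared in the leximin order is $\{\delta_x\}_{x\in A}$, which has exactly $m$ components (one per charity).

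First I would invoke the standard sequential-LP scheme for leximin maximization over a polytope \citep[see, e.g.,][Sect.~5.3]{Ehrg05a}. One solves an initial linear program maximizing a scalar $t$ subject to $\delta_x \ge t$ for all $x\in A$ and the decomposability constraints; its optimum $t_1$ equals the leximin-maximal value of the smallest coordinate. At least one charity is then forced to the value $t_1$ in every optimal solution. One fixes the values of all such saturated charities and repeats, each time maximizing the smallest still-unfixed coordinate over the residual polytope. Since every linear program in the sequence pins down the value of at least one additional charity, and there are only $m$ charities, the procedure terminates after solving at most $m$ linear programs and returns the unique leximin-maximal decomposable distribution, which by \Cref{thm:edrisprojegal} is the equilibrium distribution.

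The main (and essentially the only) obstacle is bookkeeping rather than conceptual: at each stage one must identify the set of charities whose value is pinned down in every optimal solution of the current program---for instance, by testing, for each not-yet-fixed charity, whether its coordinate admits a strict increase without lowering the guaranteed minimum---and verify that fixing these values preserves both feasibility and the polyhedral structure for the next stage. All of this is exactly the content of the cited leximin result, so beyond checking that our constraints are linear and that the objective has $m$ components, no new argument is required.
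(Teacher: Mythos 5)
Your proposal matches the paper's argument exactly: the corollary is obtained by combining \Cref{thm:edrisprojegal} with the standard sequential-LP method for leximin optimization over a polyhedron with $m$ linear objectives, as cited from \citet[Sect.~5.3]{Ehrg05a}. Your additional detail on how the sequential scheme pins down one coordinate per LP is a correct elaboration of that cited result, not a departure from it.
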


\subsection{Egalitarianism for agents}
\label{subsec: interpretations}

While \edr is egalitarian from the point of view of the charities, one could also consider a rule that is egalitarian from the point of view of the agents. 
The \emph{conditional egalitarian rule} aims to balance the agents' utilities without disregarding their approvals. 
It selects the decomposable distribution that, among all decomposable distributions, maximizes the utility vector by the leximin order, that is: $\mathbf{u}(\delta^*)\succeq_\mathit{lex} \mathbf{u}(\delta)$ for every decomposable distribution $\delta$.

\begin{definition}
The \emph{conditional egalitarian rule} selects the distribution $\delta^{*}$ that, among all decomposable distributions,
maximizes the utility vector by the leximin order, that is: $(u_1(\delta^{*}),\dots,u_n(\delta^*))\succeq_\mathit{lex} (u_1(\delta),\dots,u_n(\delta))$ for every decomposable distribution $\delta$.
\end{definition}

It is well-defined for similar reasons as the charity egalitarian rule.

\begin{restatable}{proposition}{edrisceg}
\label{thm:edrisceg}
With binary weights, the conditional egalitarian rule and \edr are equivalent.
\end{restatable}

\Cref{thm:edrisceg} implies that the equilibrium distribution can be computed by solving the following program with variables
 $u_i$ for all $i\in N$ and $\delta_{i,x}$ for all $i\in N,\, x\in A_i$.
\begin{align*}
\operatorname{lex} \operatorname{max} \operatorname{min} &&\{ u_i \}_{i\in N}
\\
\text{subject to}
&&u_i &\leq \sum_{j \in N} \delta_{j,x} 
&& 
\text{for all } i\in N,\, x\in A_i
\\
&&\sum_{x\in A_i}\delta_{i,x} &= C_i
&&
\text{for all } i\in N
\\
&&\delta_{i,x} &\ge 0
 &&\text{for all } i\in N,\, x\in A_i\\
&&u_i &\geq 0 &&\text{for all } i\in N
\end{align*}
Again, using standard algorithms for lexicographic max-min optimization, this program can be solved using at most $n$ linear programs.

Thus, we have three algorithms for computing the equilibrium distribution in the case of binary weights: one requires at most $m$ linear programs; one requires at most $n$ linear programs; and one requires a single convex (non-linear) program. It would be interesting to investigate which algorithms are most efficient in practice.

Note that, for \emph{general} Leontief utilities, equilibrium distributions may neither maximize the leximin vector of the charities nor of the agents.
\begin{example}
\label{exm:equilibrium-is-not-egalitarian}

Suppose there are three charities and two agents with the following valuations and contributions.
\[
\begin{array}{cccc@{\hskip 2em}c}
    \toprule
          & a & b & c & C_i \\ 
         \midrule
         v_1       & 1 & 2 & 0 & 30\\
         v_2       & 0 & 1 & 1 & 30\\
        \bottomrule
\end{array}
\]
The \emph{charity egalitarian rule} returns the leximin-maximal distribution for charities (subject to decomposability), which is $(20,20,20)$ with decomposition $(20,10,0)+(0,10,20)$.
It is not the equilibrium distribution since Agent 1 contributes to charity $a$, which is not critical.

The \emph{conditional egalitarian rule} returns the leximin-maximal distribution for agents (subject to decomposability), which is $(15,30,15)$, with utility vector $(15,15)$ and decomposition $(15,15,0)+(0,15,15)$.
It is also not the equilibrium distribution since Agent 2 contributes to charity $b$, which is not critical.

\edr returns $(12,24,24)$ with decomposition $(12,18,0)+(0,6,24)$ and utility vector $(12,24)$. 
\end{example}

\subsection{Welfare functions maximized by \edr}

Based on the observation that \edr coincides with both the Nash product rule and the conditional egalitarian rule for binary weights, a natural question to ask is which other welfare notions are maximized by \edr subject to decomposability.

For this, we take a closer look at \gwelfare (see \Cref{sec:existuniquecomp} and \Cref{sub:fwelfare}), but this time subject to decomposability. Clearly, every \gwelfare-maximizing distribution is efficient. Below, we prove that efficiency is retained even when maximizing among decomposable distributions.
\begin{restatable}{lemma}{fmaximizingisefficient}
\label{lem:fmaximizingisefficient}
Let $g$ be any strictly increasing function, and let $\delta$ be a distribution that maximizes \gwelfare among all decomposable distributions. Then, $\delta$ is unique and efficient.
\end{restatable}

Note that uniqueness holds only within the set of decomposable distributions; there might exist non-decomposable distributions with the same \gwelfare, as shown in the following example.
\begin{example} 
Let $g(x) = -x^{-1}$ (a strictly increasing function). Suppose there are two charities and two agents with the following valuations and contributions.
\[
\begin{array}{ccc@{\hskip 2em}c}
    \toprule
          & a & b & C_i \\ 
         \midrule
         v_1       & 1 & 0 & 2\\
         v_2       & 0 & 1 & 1\\
        \bottomrule
\end{array}
\]
Then, the unique decomposable distribution $(2,1)$ has \gwelfare $-2/2-1/1 = -2$, and the non-decomposable distribution $(1.5,1.5)$ also has \gwelfare $-2/1.5-1/1.5 = -2$.
\end{example}

The Nash product rule is often considered a compromise between maximizing utilitarian welfare ($\sum_{i \in N} C_i \cdot u_i$) and egalitarian welfare (maximizing the utility of the agent with the smallest utility; notice that the conditional egalitarian rule is a refinement). This can be seen when considering the family of \gwelfare functions $\sum_{i \in N}C_i \cdot \sgn(p) \cdot u^p$ for $p\neq 0$, where the limit $p \to 0$ corresponds to $\sum_{i \in N}C_i \cdot \log(u_i)$ and $p \to -\infty$ approaches egalitarian welfare \citep{Moul03a}. 

Interestingly, this family is characterized within the class of \gwelfare functions by one additional axiom called \emph{independence of the common utility scale} when demanding that each agent needs to receive positive utility \citep[see][]{Moul88a}. This axiom requires that the induced social welfare ordering does not change under scaling utilities. Assuming that the agents' utilities scale linearly (e.g., $u_i(2\delta)=2u_i(\delta)$), it is equivalent to demanding that the ordering, and thus also the mechanism that chooses one of its maximal elements, does not depend on the valuation of a unit of contribution as long as all agents agree on it.

The equivalence between conditional egalitarian welfare and Nash welfare extends to a larger class of \gwelfare functions. 
This is shown by the following theorem, proven in \Cref{app: equivalenceproof}.

\begin{restatable}{theorem}{maxfwelfareequivalence}
\label{thm: cd-is-fmaximal}
Let $g:\mathbb{R}_{\ge 0} \to \mathbb{R}\cup\{-\infty\}$ be a function that satisfies the following conditions:
{
\let\oldlabelenumi\labelenumi
\renewcommand{\labelenumi}{(\arabic{enumi})}
\begin{enumerate}
    \item $g$ is strictly increasing on $\mathbb{R}_{\ge 0}$ and differentiable on $\mathbb{R}_{>0}$, and
    \item $x g'(x)$ is non-increasing on $\mathbb{R}_{>0}$.
\end{enumerate}
\let\labelenumi\oldlabelenumi
}
Then, for any instance with binary Leontief valuations, the equilibrium distribution maximizes \gwelfare among all decomposable distributions.
\end{restatable}

Property (1) ensures that social welfare is indeed increasing when an individual's utility increases and small changes in individual utilities only cause small changes in the total social welfare. Property (2) implies that increasing utilities are discounted ``at least logarithmically'' when translated to welfare.

In particular, \Cref{thm: cd-is-fmaximal} holds for all \gwelfare functions $\sum_{i \in N}C_i \cdot \sgn(p) \cdot u^p$ with $p < 0$.
However, it ceases to hold when $p>0$, as the following proposition (whose proof is deferred to \Cref{app: p>0}) shows.

\begin{restatable}{proposition}{pwelfarelemma}\label{prop: p>0}
For each $p>0$, maximizing the \gwelfare with respect to $g(u)=u^p$ subject to decomposability does not always return the equilibrium distribution, even with binary Leontief valuations. 
\end{restatable}

\Cref{thm: cd-is-fmaximal} underscores the fact that \edr can be motivated not only from a game-theoretic and axiomatic point of view but also from a welfarist perspective.

\section{Discussion}
\label{sec: conclusion}

Under the assumption that donors' preferences can be modeled using Leontief utility functions, \edr is an exceptionally attractive rule for funding charitable organizations.
It is efficient, group-strategyproof, and monotonic in both preferences and contributions. Moreover, it can be computed via convex programming and returns the limit of natural spending dynamics. In the case of binary weights, \edr maximizes a wide range of welfare functions and can be computed via linear programming.
These results stand in sharp contrast to the previously studied case of linear utilities, where a far-reaching impossibility has shown the incompatibility of efficiency, strategyproofness, and a very weak form of fairness \citep{BBPS21a}. The literature in this stream of research has produced various rules such as the \emph{conditional utilitarian rule}, the \emph{Nash product rule}, the \emph{random priority rule}, or the \emph{sequential utilitarian rule}, which trade off these properties against one another \citep{BMS05a,Dudd15a,ABM20a,BBPS21a,BBP+19a}.

Equilibrium distributions can be interpreted as market equilibria for a pure public good market with unlimited supply. This perspective allows interesting comparisons to Fisher markets, arguably the simplest markets for divisible private goods. Equilibria in Fisher markets are connected to Nash welfare maximization under fairly general assumptions about individual utilities, whereas this connection appears to be more volatile in our public good markets. On the other hand, even for Leontief preferences, Fisher market equilibria cannot be computed exactly \citep{CoVa04a}, and the mechanism that returns the equilibrium is manipulable \citep{GZH+11a}.

An important question is to which extent our results carry over to other concave utility functions, which offer a natural middle ground between linear and Leontief utilities. 
\Cref{rem: sepadd-dynamics} as well as \Cref{prop:sepaddutils,prop:otherutilsthanCD} show that equilibrium existence, uniqueness, and convergence of the best-response-based spending dynamics also hold for other utility models. 
However, other axiomatic properties might break down, e.g., the equilibrium distribution may fail to be efficient (\Cref{rem:cobbdouglas}) and the rule that returns the equilibrium distribution is not strategyproof (see \Cref{sec:sp}). 

Leontief preferences can be refined by breaking ties between distributions lexicographically, similar to leximin utilities. More precisely, rather than only caring about the minimum of $\delta(x)/v_{i,x}$ for $x\in A$, agents can rank all distributions according to the leximin relation (\Cref{def:leximin}) among the vectors $(\delta(x)/v_{i,x})_{x\in A}$. Remarkably, all of our results for general Leontief valuations carry over to these utility functions by adapting the proofs accordingly. It should be noted, however, that lexicographic Leontief preferences are discontinuous.

\subsection*{Acknowledgements}
{\footnotesize

This material is based on work supported by the Deutsche Forschungsgemeinschaft under grants BR 2312/11-2 and BR 2312/12-1, by the Israel Science Foundation under grant numbers 712/20 and 1092/24, by the Singapore Ministry of Education under grant number MOE-T2EP20221-0001, and by an NUS Start-up Grant.
We are grateful to Florian Brandl for proposing the best response dynamics together with a proof idea, Igal Milchtaich for pointing out fruitful connections to nonatomic congestion games, and Ido Dagan for suggesting lexicographic Leontief preferences. We further thank 
editor Faruk Gul, the anonymous referees, 
Ronen Gradwohl, Ilan Kremer, Somdeb Lahiri, Hervé Moulin, Noam Nisan, Danisz Okulicz, Marcus Pivato, Clemens Puppe, Marek Pycia, Ella Segev, Vijay Vazirani, Rakesh Vohra
as well as the participants of the 
3rd Ariel Conference on the Political Economy of Public Policy (September 2022), 
the joint Microeconomics Seminar of ETH Zurich and the University of Zurich (March 2023),  
the Bar-Ilan University Computer Science Seminar (April 2023),
the Hebrew University of Jerusalem Econ-CS seminar (May 2023),
the Bar-Ilan University Game Theory seminar (June 2023), 
the 9th International Workshop on Computational Social Choice in Beersheba (July 2023), 
the 24th ACM Conference on Economics and Computation (July 2023), 
the KIT Conference on Voting Theory and Preference Aggregation  (October 2023), 
the Online Social Choice and Welfare Seminar (January 2024), 
the Second Vienna-Graz Workshop on (Computational) Social Choice (February 2024), 
the 17th Meeting of the Society for Social Choice and Welfare (July 2024),
the Equity and Access in Algorithms, Mechanisms, and Optimization Online Colloquium (May 2025), and
the Oxford AGT-Seminar (July 2025) for their insightful comments, stimulating discussions, and encouraging feedback.\par}

\appendix

\newpage
\section*{APPENDIX}

\section{Alternative utility models}\label{app:otherutilityfunctions}

\subsection{Cobb-Douglas utilities}
Given values for charities, the \emph{Cobb-Douglas utility function} of agent $i$ is defined as $u_i(\delta)=\prod_{x \in A} \delta(x)^{v_{i,x}}$.
\footnote{
Both Cobb-Douglas and \emph{binary} Leontief utility functions belong to the class of utility functions with \emph{constant elasticity of substitution} \citep[see, e.g.,][]{Vari92a,MWG95a}. 
Similarly to the case of maximizing the Nash product of individual utilities, maximizing a single Cobb-Douglas utility function is equivalent to maximizing $\sum_{x \in A} v_{i,x}\cdot \log(\delta(x))$.
}
It turns out that the equilibrium distribution is unaffected if the agents' Leontief utility functions are replaced with Cobb-Douglas utility functions for the same values for charities.

\begin{proposition}\label{prop:sepaddutils}
Given values $(v_{i,x})_{i\in N,\, x\in A}$, a vector of individual distributions is an equilibrium for Leontief utility functions if and only if it is an equilibrium for the corresponding Cobb-Douglas utility functions.
\end{proposition}

\begin{proof}
We show that \Cref{lem:eq-iff-critical}, with the same definition of critical charities  $T_{\delta,i} \coloneqq 
\arg \min_{x\in A_i} \frac{\delta(x)}{v_{i,x}}=\arg \max_{x\in A_i} \frac{v_{i,x}}{\delta(x)}$, 
also holds for Cobb-Douglas utilities $u_i(\delta)=\sum_{x \in A} v_{i,x}\cdot \log(\delta(x))$.
 
$\Rightarrow$: 
Given a vector of individual distributions $(\delta_i)_{i \in N}$, suppose some agent~$i$ (with Cobb-Douglas utilities) contributes to a charity $y\not\in T_{\delta,i}$. By definition of critical projects, 
$v_{i,y}/\delta(y) < v_{i,x} / \delta(x)$ for any $x \in T_{\delta,i}$.
Let agent $i$ move a sufficiently small amount $\epsilon$ from $\delta_i(y)$ to a charity $x \in T_{\delta,i}$; denote the resulting individual distribution by $\delta^\epsilon_i$. This move strictly increases $i$'s utility, since
\begin{align*}
    &\; \lim_{\epsilon \to 0} \frac{u_i(\delta-\delta_i+\delta^\epsilon_i)-u_i(\delta)}{\epsilon}
    \\
    =&\; \lim_{\epsilon \to 0} \left[v_{i,x}\cdot \frac{\log(\delta(x)+\epsilon)-\log(\delta(x))}{\epsilon}+v_{i,y}\cdot \frac{\log(\delta(y)-\epsilon)-\log(\delta(y))}{\epsilon}\right]
    \\
    =&\; \frac{v_{i,x}}{\delta(x)}-\frac{v_{i,y}}{\delta(y)}>0.
\end{align*}
Therefore, $\delta$ is not an equilibrium distribution for Cobb-Douglas utilities.

$\Leftarrow$:
Given a vector of individual distributions $(\delta_i)_{i \in N}$, suppose that each agent $i$ only contributes to charities in $T_{\delta,i}$.
In every other strategy of agent $i$, she must contribute $\epsilon$ less to at least one such charity $y\in T_{\delta,i}$ and $\epsilon'$ more to another charity $x$. 
Similar to the other direction, it can be shown that $v_{i,y}\cdot(\log(\delta(y))-\log(\delta(y)-\min\{\epsilon,\epsilon'\}))>v_{i,x}\cdot(\log(\delta(x)+\min\{\epsilon,\epsilon'\})-\log(\delta(x)))$. 
Therefore, the utility of agent $i$ is smaller than $u_i(\delta)$ and the deviation is not beneficial.
\end{proof}

As a consequence, existence and uniqueness of the equilibrium distribution carry over to Cobb-Douglas utility functions. However, other axiomatic properties from \Cref{sec:equilibrium-rule} break down. For example, the equilibrium distribution does not need to be efficient for Cobb-Douglas utilities.

\begin{example}\label{rem:cobbdouglas}
The equilibrium distribution can violate efficiency for Cobb-Douglas utilities. Suppose there are three charities and two agents with the following valuations and contributions.
\[
\begin{array}{cccc@{\hskip 2em}c}
    \toprule
          & a & b & c & C_i \\ 
         \midrule
         v_1       & 1 & 1 & 0 & 6\\
         v_2       & 0 & 1 & 1 & 6\\
        \bottomrule
\end{array}
\]
Then, the equilibrium distribution $\delta^*=(4,4,4)$ results in utilities $u_1(\delta^*)=u_2(\delta^*)=4\cdot 4=16$. However, the distribution $\delta=(3,6,3)$ provides higher Cobb-Douglas utility to both agents: $u_1(\delta)=u_2(\delta)=3\cdot 6=18$.

Furthermore, the mechanism that returns the equilibrium distribution can be manipulated. 
If the first agent only reports a positive value for charity $a$, her utility increases from $4\cdot 4=16$ to $6\cdot 3=18$.
\end{example}

An interesting question to ask is which other utility functions have the same equilibria as Leontief utilities. It turns out that, when all valuations are binary ($v_{i,x} \in \{0,1\}$), \Cref{prop:sepaddutils} also holds when the logarithms in the additive representation of Cobb-Douglas utility functions are replaced with an arbitrary strictly concave, strictly increasing function.

\begin{proposition}\label{prop:otherutilsthanCD}
    Let $h \colon \mathbb{R}_{\ge 0} \to \mathbb{R}$ be a strictly concave and strictly increasing function. Assume agent $i$'s utility function is given by $u_i(\delta)=\sum_{x \in A}v_{i,x}\cdot h(\delta(x))$ with binary valuations $v_{i,x} \in \{0,1\}$. Then, a vector of individual distributions is in equilibrium if and only if it is in equilibrium for the corresponding Leontief utilities.
\end{proposition}

\begin{proof}
We show that \Cref{lem:eq-iff-critical} applies.
Let $(\delta_i)_{i \in N}$ be a vector of individual distributions. It is straightforward to see that in equilibrium, $v_{i,x}=0$ implies $\delta_i(x)=0$ for every pair of agents and charities. Moreover, for any agent $i$ and two charities $x$ and $y$ with $v_{i,x}=v_{i,y}=1$ and $\delta(x)>\delta(y)$, agent $i$ would like to move $\epsilon=(\delta(x)-\delta(y))/2$ of contribution from $x$ to $y$, as $h(\delta(x)-\epsilon)+h(\delta(y)+\epsilon) > h(\delta(x))+h(\delta(y))$ by strict concavity of $h$. Hence, $(\delta_i)_{i \in N}$ is in equilibrium if and only if each agent contributes only to approved charities that receive the minimal total contribution under $\delta$. This set of charities coincides with $T_{\delta,i}$.
\end{proof}

In summary, only some of our results for Leontief utilities extend to other utility models. While axiomatic properties might break down (see, e.g., \Cref{rem:cobbdouglas}), equilibria and best response dynamics coincide (see  \Cref{rem: sepadd-dynamics}). In particular, the unique equilibrium distribution for Cobb-Douglas utility functions can still be computed using linear programming in the case of binary valuations.

\subsection{Generalizing Leontief utilities}

The remarkable axiomatic properties of the equilibrium distribution generalize beyond Leontief utilities. It turns out that the main ingredient for some of our results is that preferences are representable as a minimum over a set of strictly increasing functions. More specifically, assume that agent $i$'s utility function is given by
\begin{align*}
    u_i(\delta)=\min_{x \in A_i}f_{i,x}(\delta(x))
\end{align*}
where $A_i \subseteq A$ is the set of charities that ``matter'' for agent $i$ (for Leontief utilities, $v_{i,x}>0$) and $f_{i,x} \colon \mathbb{\mathbb{R}}_{\ge 0} \to \mathbb{R}$ are continuous and strictly increasing functions (for Leontief utilities, $f_{i,x}(\delta(x))=\delta(x)/v_{i,x}$).
These utility functions are still quasi-concave, as for two distributions $\delta,\delta'$ and arbitrary $\lambda \in [0,1]$,
\begin{align*}
    u_i(\lambda\cdot\delta+(1-\lambda)\cdot \delta')&=\min_{x \in A_i}f_{i,x}(\lambda\cdot\delta(x)+(1-\lambda)\cdot \delta'(x))\\
    &\geq \min_{x \in A_i} \min\{f_{i,x}(\delta(x)),f_{i,x}(\delta'(x))\}\\
    &=\min\{u_i(\delta),u_i(\delta')\}
\end{align*}
where the inequality follows from the fact that all $f_{i,x}$'s are strictly increasing. 
Existence of an equilibrium distribution (in the distribution game) then follows from classical results in equilibrium theory \citep[e.g.,][]{Debr52a}.

Adapting the definition of critical charities to
$\displaystyle
T_{\delta,i} \coloneqq 
\arg \min_{x\in A_i} f_{i,x}(\delta(x))
$, \Cref{lem:eq-iff-critical} still holds.
The proof of \Cref{thm:contribution-monotonicity} holds almost as is 
(where the ``Leontief quotients'' are replaced by the corresponding $f_{i,x}$'s); therefore, the ensuing \Cref{rem: alternative-uniqueness-proof} implies uniqueness.
In the same manner, the proof of \Cref{thm:strategyproof} can be adapted to show group-strategyproofness in this more general setting. 

On the negative side, apart from losing the connection to Nash welfare, allowing for these general utility functions might introduce asymmetries among charities. It is also unclear how agents can report their utility functions effectively and accurately to the mechanism infrastructure.

\section{Proofs omitted from \Cref{sec:equilibrium-distribution}}\label{app:equilibriumsection}

\eqiffcritical*

\begin{proof}
Let us write for brevity $f_{i,x}(t) := t / v_{i,x}$,
for all $i\in N$ and $x\in A$, 
so that $u_i(x) = \min_{x\in A_i} f_{i,x}(\delta(x))$,
and note that $f_{i,x}$ is a strictly increasing function. 

$\Rightarrow$: 
Suppose that some agent~$i$ contributes to a charity $y\not\in T_{\delta,i}$. 
So either $y\not\in A_i$ or
$\delta(y) > f_{i,y}^{-1}(u_i(\delta))$.
In both cases,
agent $i$ can reduce a small amount from $\delta_i(y)$ and distribute it equally among all charities in $T_{\delta,i}$. As 
all functions $f_{i,x}$ are strictly increasing, 
 this move strictly increases the utility of $i$. Therefore, $\delta$ is not an equilibrium distribution.

$\Leftarrow$:
Suppose each agent $i$ only contributes to charities in $T_{\delta,i}$.
In every other strategy of agent $i$, she must contribute less money to at least one such charity, $y\in T_{\delta,i}$. 
By definition of a critical charity, 
the original allocation to charity $y$ was exactly $f_{i,y}^{-1}(u_i(\delta))$, so 
the new allocation to $y$ is smaller than
$f_{i,y}^{-1}(u_i(\delta))$.
As $f_{i,y}$ is strictly increasing, 
the new utility of agent $i$ is smaller than $u_i(\delta)$ and the deviation is not beneficial.
\end{proof}

\efficientiffcritical*
\begin{proof}
As in the proof of \Cref{lem:eq-iff-critical}, we write $f_{i,x}(t) := t / v_{i,x}$.

$\Rightarrow$: Suppose that some charity $x \in \supp(\delta)$ is not critical for any agent. 
This means that, 
for each agent $i\in N$,
either $x\not\in A_i$
or $\delta(x) > f_{i,x}^{-1}(u_i(\delta))$.
Denote
\begin{align*}
D \coloneqq \delta(x) - \max_{i\in N,\, x\in A_i} \left(f_{i,x}^{-1}(u_i(\delta))\right)
\end{align*}
where our assumptions imply that $D>0$.
Construct a new distribution $\delta'$ by removing $D/2$ from charity $x$ and distributing it equally among all other charities.
We claim that $u_i(\delta')>u_i(\delta)$ for every agent $i\in N$.
Indeed, if $x\not\in A_i$, then $u_i$ does not decrease by the removal from $\delta(x)$, and strictly increases by the addition to all other charities. Otherwise,
\begin{align*}
u_i(\delta') = \min\left(f_{i,x}(\delta'(x)), \min_{y\in A_i\setminus x}
f_{i,y}(\delta'(y))
\right).
\end{align*}
Both terms are larger than $u_i(\delta)$:
\begin{itemize}
\item The former term is 
\begin{align*}
f_{i,x}(\delta(x)-D/2) 
&> f_{i,x}(\delta(x)-D) \\
&= f_{i,x}\left(\max_{j\in N,\,x\in A_j}[f_{j,x}^{-1}(u_j(\delta))]\right) \geq f_{i,x}(f_{i,x}^{-1}(u_i(\delta))) = u_i(\delta)
\end{align*}
by construction, as the $f_{i,x}$ are strictly increasing.

\item For the latter term, 
the fact that $u_i(\delta)<f_{i,x}(\delta(x))$ implies that 
$u_i(\delta)=\min_{y\in A_i\setminus x}\left(f_{i,y}(\delta(y))\right)$, and 
$\min_{y\in A_i\setminus x}\left(f_{i,y}(\delta'(y))\right)$ is strictly larger than that since each charity $y\in A\setminus x$ receives additional funding in $\delta'$.
\end{itemize}
Hence, $\delta$ is not efficient.

$\Leftarrow$: Suppose that every charity $x\in \supp(\delta)$ is critical for some agent.
Let $\delta'$ be any collective distribution different than $\delta$. Since the sum of both distributions is the same ($C_N$), there exists a charity $y\in \supp(\delta)$ with $\delta'(y)<\delta(y)$.
Let $i_y\in N$ be an agent for whom $y$ is critical in $\delta$. Then
the utility of $i_y$ is strictly smaller in $\delta'$:
\begin{align*}
u_{i_y}(\delta') 
&\leq f_{i_y,y}(\delta'(y)) && \text{(by definition of Leontief utilities, as $y\in A_{i_y}$)}
\\
&< f_{i_y,y}(\delta(y))  && \text{($\delta'(y)<\delta(y)$ by definition of $y$, and $f_{i_y,y}$ is strictly increasing)}
\\
&= u_{i_y}(\delta)  && \text{(since 
$y$ is critical for $i_y$ in $\delta$)
}
\end{align*}
so $\delta'$ does not dominate $\delta$.
Hence, $\delta$ is efficient.
\end{proof}

\efficientunique*

\begin{proof}
By Lemma \ref{lem:efficient-iff-critical}, for each $x \in \supp(\delta)$, there is an agent for whom $x$ is critical. Denote one such agent by $i_x$. Then,
\begin{align*}
\delta(x)&= v_{i_x,x}\cdot u_{i_x}(\delta) 
&& \text{(by \eqref{eq:critical-implications}, since 
$x$ is critical for $i_x$)}
\\
&=v_{i_x,x}\cdot u_{i_x}(\delta') 
&& \text{(by the lemma assumption)}
\\
&\leq \delta'(x)
 && \text{(by definition of Leontief utilities)}.
\end{align*}
The same inequality $\delta(x)\leq \delta'(x)$ trivially holds also for all $x\not \in\supp(\delta)$.
Since both distributions sum up to $C_N$, this implies $\delta=\delta'$.
\end{proof}

\subsection{Welfare-maximizing distributions}
\label{sub:fwelfare}

Let $g$ be a strictly increasing function. 
The \emph{$g$-welfare} of a distribution $\delta$ is defined as the following weighted sum:
\begin{align*}
\gwelfare(\delta) \coloneqq \sum_{i\in N} C_i\cdot g( u_i(\delta)).
\end{align*}
Quantifying welfare enables us to compare and rank all possible utility vectors, which induces 
a social welfare ordering over all distributions $\delta \in \Delta(C_N)$ by $\gwelfare(\delta)$.

Inversely, every continuous social welfare ordering without any ``welfare dependencies'' between the agents' utilities can be represented by a \gwelfare function; see Chapter~2 in the book by \citet{Moul88a} for a detailed discussion. Additionally weighting agents by their contributions, we arrive at the very expressive class of \gwelfare functions.

A distribution is called \emph{\gwelfare-maximizing} if it maximizes 
the \gwelfare, i.e., it always chooses a maximal element of the corresponding social welfare ordering.
Clearly, every \gwelfare-maximizing distribution is efficient.
When $g$ is concave (equivalently: when the induced social welfare ordering satisfies the Pigou-Dalton principle),
 a \gwelfare-maximizing distribution can be found by solving a convex program where the variables are $(u_i)_{i\in N}$ and $(\delta_x)_{x\in A}$:
 
\begin{align*}
\text{maximize} &&\sum_{i\in N} C_i\cdot g(u_i) \notag
\\\text{subject to}
&&\sum_{x\in A} \delta_x &\leq C_N
\\
 &&u_i&\leq \delta_x /v_{i,x} && \text{~for all~} i\in N,\, x\in A_i
\\
 &&u_i&\geq 0&& \text{~for all~} i\in N
 \\
&&\delta_x&\geq 0&& \text{~for all~} x\in A.
\end{align*}

The following technical lemmas prove uniqueness of the welfare-maximizing distribution when $g$ is additionally strictly concave.
\begin{lemma}
\label{lem:mixing-strictly-concave}
For every strictly concave, increasing function $g$, every constant $t\in(0,1)$, and every two distributions $\delta\neq \delta'$,
\begin{align*}
\gwelfare\left(t \delta+(1-t)\delta'\right) ~>~ \min(\gwelfare(\delta'), \gwelfare(\delta)).
\end{align*}
\end{lemma}

\begin{proof}
For every agent $i\in N$, by the concavity of the minimum operator,
\begin{align*}
u_i\left(t \delta+(1-t)\delta' \right)
\geq
t u_i(\delta)+(1-t)u_i(\delta).
\end{align*}
Therefore,
\begin{align*}
\gwelfare\left(t \delta+(1-t)\delta'\right) 
&\geq
\sum_{i \in N}C_i \cdot  g\left(t\cdot u_i(\delta)+(1-t)\cdot u_i(\delta')\right) 
\\
&>
t\sum_{i \in N}C_i \cdot  g(u_i(\delta))+(1-t)\sum_{i \in N}C_i \cdot  g(u_i(\delta')) 
\\
&=
t\cdot \gwelfare(\delta) + (1-t)\cdot \gwelfare(\delta')
\\
&\geq 
\min(\gwelfare(\delta'), \gwelfare(\delta))
\end{align*}
where the first inequality follows from monotonicity and the second one from strict concavity.
\end{proof}

\begin{lemma}
\label{lem:fmaximizing-is-unique}
For every strictly concave, increasing function $g$, there is a unique \gwelfare-maximizing distribution.
\end{lemma}
\begin{proof}
Assume for contradiction that there exist two different \gwelfare-maximizing distributions $\delta$ and $\delta'$. Since both distributions are efficient, by Lemma \ref{lem:efficientunique} they induce two different utility vectors $(u_i(\delta))_{i \in N}$ and $(u_i(\delta'))_{i \in N}$. 
By Lemma \ref{lem:mixing-strictly-concave}, for any $t\in(0,1)$,
\begin{align*}
\gwelfare\left(t \delta+(1-t)\delta'\right) &> \min(\gwelfare(\delta'), \gwelfare(\delta))
\\
&=
\gwelfare(\delta') = \gwelfare(\delta).
\end{align*}
This contradicts the assumption that $\delta$ and $\delta'$ are \gwelfare-maximizing. 
\end{proof}

\subsection{Proof of \Cref{thm:unique-eq-distribution}}\label{app:proofuniqueeq}

\uniqueeqdistribution*

We show equivalence of the equilibrium distribution and the Nash-optimal distribution by proving the two directions in \Cref{lem:nashiseq,lem:eqisnash}, respectively.

\begin{restatable}{lemma}{nashiseq}
\label{lem:nashiseq}
Every distribution that maximizes Nash welfare is an equilibrium distribution.
\end{restatable}

One way to prove \Cref{lem:nashiseq} is to analyze the KKT conditions of the constrained maximization problem corresponding to maximizing Nash welfare. 
Below, we give a more intuitive proof, which helps to illustrate the ``social'' aspect of the equilibrium distribution.
We first show that, in any other distribution, there is a set of agents who ``waste'' some of their contribution on charities that are only critical for other agents.

\begin{lemma}
\label{lem:not-eq}
If $\delta$ is an efficient distribution that is not an equilibrium distribution,
then $N$ can be partitioned into two disjoint nonempty groups of agents, $N_+$ and $N_- = N\setminus N_+$, such that
\begin{align}
\label{eq: N-}
\delta(T_{\delta, N_-}) < C_{N_-};
\\
\label{eq: N+}
\delta(T_{\delta, N_+} \setminus T_{\delta, N_-}) > C_{N_+}.
\end{align}
\end{lemma}
\begin{proof}
Let $(\delta_i)_{i\in N}$ be any decomposition of $\delta$. 
Construct a directed graph $G$ in which
the nodes correspond to agents, and there is an arc $i\to j$ if and only if $\delta_i(T_{\delta,j})>0$, that is, agent $i$ contributes to a critical charity of $j$.
We call the arc $i\to j$ \emph{strong} if  $\delta_i(T_{\delta,j}\setminus T_{\delta,i})>0$, that is, agent $i$ contributes to a charity that is critical for $j$ but not for~$i$. Otherwise, we call the arc $i\to j$  \emph{weak}.
Since $\delta$ is not an equilibrium distribution, by \Cref{lem:eq-iff-critical}, there is an agent, say Agent 1,  who contributes to a charity $x\not\in T_{\delta,1}$. 
Since $\delta$ is efficient, by \Cref{lem:efficient-iff-critical}, $x$ is critical to some other agent, say Agent 2, so $G$ contains a strong arc $1\to 2$.

If the strong arc is part of a directed cycle, then we can move a sufficiently small amount $\epsilon$  along the cycle without changing $\delta$. 
In detail, suppose without loss of generality that the cycle is $1\to 2\to\cdots\to k \to 1$, 
where the involved charities are 
$
x_1\in T_{\delta,1},~
x_2\in T_{\delta,2}\setminus T_{\delta,1},~
x_3\in T_{\delta,3},~
x_4\in T_{\delta,4},~
\dots,~
x_k\in T_{\delta,k}.$
We assume that $x_2$ 
is in $T_{\delta,2}\setminus T_{\delta,1}$ since the arc $1\to 2$ is strong; in particular, $x_2$ 
must be different than~$x_1$. 
The other arcs may be strong or weak, and some of the $x_i$ may coincide.
For every $i\in\{1,\ldots,k-1\}$,
move a small amount $\epsilon>0$ 
from $\delta_i(x_{i+1})$
to $\delta_i(x_i)$;
move the same $\epsilon$ from 
$\delta_k(x_{1})$ to $\delta_k(x_{k})$.
Note that the decomposition changes, but the total $\delta$ remains the same.
Increase~$\epsilon$ until one arc of the cycle disappears or the strong arc becomes weak; note that we may need to change the involved charities in the cycle during the process.
Repeat this cycle-removal procedure until all strong arcs are not part of any directed cycle.
This process is guaranteed to terminate since in each cycle removal, either the respective strong arc becomes weak, or the cycle it is part of is removed. Furthermore, no new (strong) arcs are created as agents do not contribute to additional charities, and the overall distribution $\delta$ together with the set of critical charities does not change.

Let $G$ be the graph of the resulting decomposition. Since the total distribution is still~$\delta$, which is efficient but not an equilibrium distribution, $G$ still has at least one strong arc, say $j\to k$.
Let $N_+$ be the set of agents accessible from $k$ via a directed path (where $k\in N_+$), and let $N_-\coloneqq N\setminus N_+$.
Since $j\to k$ is not part of any directed cycle,
$j\in N_-$.

Due to the strong arc $j\to k$,
agents of $N_-$ waste some of their own contributions on critical charities of $N_+$ that are not critical for themselves. 
Moreover, the critical charities of $N_-$ do not receive any donations from agents of $N_+$,
since they are not accessible from $N_+$.
This proves \eqref{eq: N-}.

In contrast, the agents in $N_+$ spend all their contributions on their own critical charities that are not critical charities of agents outside $N_+$. In addition, they receive some donations from agents of $N_-$. This proves \eqref{eq: N+}.
\end{proof}

\begin{proof}[Proof of \Cref{lem:nashiseq}]
Let $\delta$ be an efficient distribution that is not an equilibrium distribution. We prove that $\delta$ is not Nash-optimal.

Let $N_-$ and $N_+$ be the subsets of agents defined in \Cref{lem:not-eq}.
If $\delta(T_{\delta, N_-}) = 0$, then $\Nash(\delta) = -\infty$ and $\delta$ is clearly not Nash-optimal, so we may assume that $\delta(T_{\delta, N_-}) > 0$.
We construct a new distribution $\delta'$ in the following way.
\begin{itemize}
\item Remove a small amount $\epsilon$ from $\delta(T_{\delta, N_+} \setminus T_{\delta, N_-})$, such that each charity loses proportionally to its current distribution. That is, for each charity $x\in T_{\delta, N_+} \setminus T_{\delta, N_-}$,
the new distribution is $\delta'(x) \coloneqq \delta(x)\cdot [1-\epsilon/\delta(T_{\delta, N_+} \setminus T_{\delta, N_-})]$.
\item Add this $\epsilon$ to $\delta(T_{\delta, N_-})$ such that each charity gains proportionally to its current distribution. 
That is, for each charity $y\in T_{\delta, N_-}$,
the new distribution is $\delta'(y) \coloneqq \delta(y)\cdot [1+\epsilon/\delta(T_{\delta, N_-})]$.
\end{itemize}
Choose $\epsilon$ sufficiently small such that the sets of critical charities of agents in $N_-$ do not change (that is, no new charities become critical for them).
This redistribution has the following effect on the agents' utilities:
\begin{itemize}
\item The utility of each agent $i\in N_+$ may decrease by a factor of up to $[1-\epsilon/\delta(T_{\delta, N_+} \setminus T_{\delta, N_-})]$. Therefore, the contribution to Nash welfare may decrease by at most
$
\Delta_{N_+}(\epsilon) \coloneqq 
C_{N_+}\cdot \log[1-\epsilon/\delta(T_{\delta, N_+} \setminus T_{\delta, N_-})].
$
We have $\lim_{\epsilon\to 0}\Delta_{N_+}(\epsilon)/\epsilon = -C_{N_+}/\delta(T_{\delta, N_+} \setminus T_{\delta, N_-})$, which is larger than $-1$ by inequality \eqref{eq: N+}.

\item The utility of each agent $i\in N_-$  increases by a factor of $[1+\epsilon/\delta(T_{\delta, N_-})]$. 
Therefore, the contribution to Nash welfare increases by 
$
\Delta_{N_-}(\epsilon) 
\coloneqq
C_{N_-}\cdot \log[1+\epsilon/\delta(T_{\delta, N_-})].
$
We have $\lim_{\epsilon\to 0}\Delta_{N_-}(\epsilon)/\epsilon = C_{N_-}/\delta(T_{\delta, N_-})$, which is larger than $1$ by inequality~\eqref{eq: N-}.
\end{itemize}
The overall difference in Nash welfare is $\Delta(\epsilon) \coloneqq \Delta_{N_+}(\epsilon)+\Delta_{N_-}(\epsilon)$, and we have 
$\lim_{\epsilon\to 0}\Delta(\epsilon)/\epsilon > -1+1 = 0$, so $\Delta(\epsilon)>0$ for sufficiently small $\epsilon$.
Therefore, $\Nash(\delta')> 
\Nash(\delta)$, so $\delta$ was not Nash-optimal, 
completing the proof.
\end{proof}

For the other direction, the following lemma proves that every equilibrium distribution is Nash-optimal.

\begin{restatable}{lemma}{eqisnash}
\label{lem:eqisnash}
Every equilibrium distribution maximizes Nash welfare.
\end{restatable}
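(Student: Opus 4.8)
The plan is to exploit concavity of the Nash welfare objective together with the first-order optimality conditions implied by the equilibrium decomposition. Since each $u_i$ is a minimum of linear functions it is concave, and because $\log$ is concave and increasing, each $\log u_i(\cdot)$ is concave; hence $\Nash(\cdot)=\sum_{i\in N}C_i\log u_i(\cdot)$ is concave on the convex domain $\Delta(C_N)$. For a concave function, a point is a global maximizer exactly when its one-sided directional derivative toward every feasible point is non-positive. I would therefore fix an equilibrium distribution $\delta$ with a decomposition $(\delta_i)_{i\in N}$ supported on critical charities (\Cref{lem:eq-iff-critical}), take an arbitrary $\delta'\in\Delta(C_N)$, set $d:=\delta'-\delta$, and show that the directional derivative $D$ of $\Nash$ at $\delta$ in direction $d$ satisfies $D\le 0$.

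Before differentiating I would record that $u_i(\delta)>0$ for every agent: since $\sum_{x\in T_{\delta,i}}\delta_i(x)=C_i>0$, some critical charity of $i$ receives a positive amount, so $u_i(\delta)=\delta(x)/v_{i,x}>0$ for $x\in T_{\delta,i}$. This makes $\Nash(\delta)$ finite and the logarithms differentiable at $\delta$. Using the standard active-set formula for the directional derivative of a minimum of linear functions, the derivative of $u_i$ at $\delta$ along $d$ equals $\min_{x\in T_{\delta,i}}d(x)/v_{i,x}$, so by the chain rule
\[
D=\sum_{i\in N}\frac{C_i}{u_i(\delta)}\min_{x\in T_{\delta,i}}\frac{d(x)}{v_{i,x}}.
\]

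The crucial simplification uses criticality: by \eqref{eq:critical-implications}, $v_{i,x}=\delta(x)/u_i(\delta)$ for every $x\in T_{\delta,i}$, so $\frac{C_i}{u_i(\delta)}\cdot\frac{d(x)}{v_{i,x}}=C_i\cdot\frac{d(x)}{\delta(x)}$ and each summand becomes $C_i\min_{x\in T_{\delta,i}}d(x)/\delta(x)$. I would then bound this using the decomposition: since $C_i=\sum_{x\in T_{\delta,i}}\delta_i(x)$ with $\delta_i(x)\ge 0$, and the minimum is at most each individual ratio, $C_i\min_{x\in T_{\delta,i}}d(x)/\delta(x)\le\sum_{x}\delta_i(x)\,d(x)/\delta(x)$. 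Summing over $i$, interchanging the order of summation, and using $\sum_{i}\delta_i(x)=\delta(x)$ collapses the bound to $\sum_{x:\delta(x)>0}d(x)$, which is at most $\sum_{x\in A}d(x)=0$ because the omitted charities (with $\delta(x)=0$) contribute $d(x)=\delta'(x)\ge 0$. Hence $D\le 0$, and concavity yields $\Nash(\delta')-\Nash(\delta)\le D\le 0$, proving that $\delta$ is Nash-optimal.

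The main obstacle is making the directional-derivative step fully rigorous: I must justify the active-set formula for the one-sided derivative of $u_i$ (which holds because for sufficiently small $t>0$ the minimizing charities of $u_i(\delta+td)$ lie in $T_{\delta,i}$) and handle the boundary so that no division by a vanishing $\delta(x)$ occurs — this is why the reduction to charities with $\delta(x)>0$ and the separate treatment of $\delta(x)=0$ matter. The degenerate case $u_i(\delta')=0$ for some $i$ is immediate, since then $\Nash(\delta')=-\infty\le\Nash(\delta)$. An alternative, less self-contained route would invoke uniqueness of the Nash maximizer (from strict concavity of $\log$) together with \Cref{lem:nashiseq} and \Cref{lem:efficientunique}, but it still requires showing that the two efficient distributions induce the same utility vector, which essentially reduces to the same first-order reasoning; I would therefore prefer the direct concavity argument above.
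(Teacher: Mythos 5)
Your proof is correct, but it takes a genuinely different route from the paper's. The paper never differentiates the (non-smooth) utilities: it bounds $\Nash(\delta)$ from above by the smooth concave surrogate $\sum_{x\in A}\delta^*(x)\log\delta(x)-\mathrm{const}(\delta^*)$ --- obtained by replacing each $\log u_i(\delta)$ with $\log(\delta(x)/v_{i,x})$ for $x\in T_{\delta^*,i}$ and regrouping via the equilibrium decomposition --- and then shows by a short Lagrangian computation that this surrogate is maximized at $\delta=\delta^*$, where the bound is tight. Your argument is the local, first-order version of the same idea: concavity of $\Nash$ reduces global optimality to non-positivity of every one-sided directional derivative, the active-set (Danskin-type) formula turns that derivative into $\sum_i C_i\min_{x\in T_{\delta,i}}d(x)/\delta(x)$, and the same regrouping (first $C_i=\sum_{x\in T_{\delta,i}}\delta_i(x)$, then $\sum_i\delta_i(x)=\delta(x)$) collapses it to $\sum_{x:\delta(x)>0}d(x)\le 0$. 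What your route buys is a self-contained KKT-flavoured argument with no auxiliary optimization problem (the paper itself remarks, for the converse direction \Cref{lem:nashiseq}, that such an analysis is possible); what it costs is exactly the care you flag --- justifying the one-sided derivative formula and avoiding division by $\delta(x)=0$ --- both of which you handle correctly, since $u_i(\delta)>0$ forces $\delta(x)>0$ for every $x\in T_{\delta,i}$. One small repair: the model allows $C_i=0$, so your assertion that $C_i>0$ for every agent is not literally guaranteed; simply restrict all sums to agents with $C_i>0$ (agents with $C_i=0$ contribute nothing to $\Nash$ or to $D$, and the paper notes they can be treated as non-participants).
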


\begin{proof}
Let $\delta^*$ be an equilibrium distribution.
For any distribution $\delta$, we derive an upper bound for $\Nash(\delta)$ in terms of $\delta^*$.
We show that this upper bound is maximized when $\delta=\delta^*$ 
and is equal to $\Nash(\delta)$ for $\delta=\delta^*$.
Thus, $\Nash(\delta)\leq \Nash(\delta^*)$ so $\delta^*$ maximizes the Nash welfare.

Formally, 
let $(\delta^*_i)_{i\in N}$ be any Nash-equilibrium decomposition of $\delta^*$ (satisfying \Cref{lem:eq-iff-critical}), and let $N_{\delta^*,x} \coloneqq \{i \colon x\in T_{\delta^*,i}\}$ be the set of agents for whom $x$ is critical in $\delta^*$.
For every distribution $\delta$ with $\Nash(\delta) > -\infty$, we have
\begin{align*}
\Nash(\delta)&=\sum_{i\in N} C_i \log(u_i(\delta))
\\
&= 
\sum_{i\in N}
\left(
\sum_{x \in T_{\delta^*,i}} \delta^*_i(x)
\right)
\log(u_i(\delta))
&&\text{(by \eqref{eq:dec-ci-critical})}
\\
&\leq 
\sum_{i\in N} \sum_{x \in T_{\delta^*,i}} \delta^*_i(x) \cdot \log\left(\frac{\delta(x)}{v_{i,x}}\right) 
\\
&=
\sum_{x\in A} 
\left(\sum_{i \in N_{\delta^*,x}} \delta^*_i(x)
 \cdot \log\left(\frac{\delta(x)}{v_{i,x}}\right)\right)
\\
&=
\sum_{x\in A} 
\left(\sum_{i \in N_{\delta^*,x}} \delta^*_i(x)
\right) \log(\delta(x))
- \sum_{x\in A} 
\left(\sum_{i \in N_{\delta^*,x}} \delta^*_i(x)
\log(v_{i,x})\right) 
\\
&=
\sum_{x\in A} 
\delta^*(x) \log(\delta(x))
- \sum_{x\in A} 
\left(\sum_{i \in N_{\delta^*,x}} \delta^*_i(x)
\log(v_{i,x})\right)\text.
&&\text{(by \eqref{eq:dec-dx})}
\end{align*}
We claim that, for every fixed $\delta^*$, the latter expression is 
maximized for $\delta=\delta^*$.
This follows from Gibbs' inequality; we provide the proof here for completeness.
Note that the second term is independent of $\delta$.
As for the first term
$\sum_{x\in A} 
\delta^*(x) \log(\delta(x))$,
consider the optimization problem of maximizing 
$\sum_{x\in A} 
\delta^*(x) \log(\delta(x))$ 
subject to 
$\sum_{x \in A}\delta(x) = \sum_{x \in A}\delta^*(x)$ (note that $\delta^*$ is a constant in this problem).
Its Lagrangian is 
\begin{align*}
\sum_{x\in A} 
\delta^*(x) \log(\delta(x))
+
\lambda\cdot\left(
\sum_{x \in A}\delta^*(x) - 
\sum_{x \in A}\delta(x) 
\right).
\end{align*}
Setting the derivative with respect to $\delta(x)$ to $0$ gives
$\delta^*(x)/\delta(x) = \lambda$ for all $x\in A$. Since $\sum_{x \in A}\delta(x) = \sum_{x \in A}\delta^*(x)$, we must have $\lambda=1$, so $\delta=\delta^*$.
This means that
\begin{align*}
\Nash(\delta)\leq 
\sum_{x\in A} 
\delta^*(x) \log(\delta^*(x))
-
\text{const}(\delta^*).
\end{align*}
For $\Nash(\delta^*)$, the same derivation holds, but the inequality becomes an equality, since in equilibrium, 
$\delta^*_i(x)>0$ only if 
$u_i(\delta^*) = \delta^*(x)/v_{i,x}$.
Therefore, 
\begin{align*}
\Nash(\delta)\leq 
\Nash(\delta^*),
\end{align*}
so $\delta^*$ is Nash-optimal.
\end{proof}

Since the logarithm function is strictly concave,
\Cref{lem:fmaximizing-is-unique} implies that there is a unique distribution that maximizes Nash welfare.
Hence, \Cref{lem:nashiseq,lem:eqisnash} entail that there is a unique equilibrium distribution.

\subsection{Proof of \Cref{prop:Lindahleq}}\label{app:lindahl}

\Lindahleq*

\begin{proof}
We first show that every equilibrium distribution is a Lindahl equilibrium.
Let $\delta^*$ be an equilibrium distribution and $(\delta^*_i)_{i\in N}$ a corresponding Nash equilibrium.

Define personalized price functions 
\[
  p_{i}(x) \coloneqq 
  \begin{cases}
  \frac{\delta^*_i(x)}{\delta^*(x)} &\text{if $\delta^*(x)>0$,} \\
  0 & \text{if $\delta^*(x)=0$.}
  \end{cases}
\]
Note that $\delta^*(x)>0$ implies that charity $x$ is valuable to at least one agent.

We now show that, with these price functions, $\delta^*$ satisfies both conditions in the Lindahl equilibrium definition.

\emph{Condition 1}:
For each $i\in N$, 
\begin{align*}
 &\; \sum_{x\in A} p_i(x)\delta^*(x)  
 \\
 =&\;
 \sum_{x\in \supp(\delta^*)} p_i(x)\delta^*(x)  
 +  \sum_{x\not\in \supp(\delta^*)} p_i(x)\delta^*(x)   
 \\
 =&\;
 \sum_{x\in \supp(\delta^*)} \delta^*_i(x) + 0
 \\
 =&\;
 \sum_{x\in A} \delta^*_i(x)  && \text{(as $x\not\in \supp(\delta^*)$ implies $\delta^*_i(x)=0$)}
\\
 =&\; C_i.
 \end{align*}
To prove maximality, 
let $y \in \mathbb{R}_{\geq 0}^A$ be any other vector with 
$\sum_{x\in A} p_i(x) y(x) = C_i$.
By \Cref{lem:eq-iff-critical}, $p_{i}(x)=0$ for every project $x$ not critical for $i$. Therefore, 
\begin{align*}
\sum_{x\in T_{\delta^*,i}} p_i(x) y(x) = C_i = \sum_{x\in T_{\delta^*,i}} p_i(x) \delta^*(x).    
\end{align*}
There must be at least one $x\in T_{\delta^*,i}$ with $\delta^*_i(x)>0$, which implies $p_i(x)>0$. Hence,
there must be at least one $x\in T_{\delta^*,i}$ with 
$y(x)\leq \delta^*(x)$.
By \Cref{lem:twodistributions}, 
$u_i(y) \leq u_i(\delta^*)$.

\emph{Condition 2}:
By our definition of prices, $\sum_{i \in N}p_i(x)=\sum_{i \in N}\delta^*_i/\delta^*(x) = \delta^*(x)/\delta^*(x)= 1$ for all $x \in \supp(\delta^*)$ and $\sum_{i \in N}p_i(x)=0$ for all $x \not \in \supp(\delta^*)$.
Hence, $\delta^*$ is a Lindahl equilibrium as claimed.

 Next, we show that every Lindahl equilibrium is an equilibrium distribution.
Let $\delta$ be a Lindahl equilibrium and let $p_1,\ldots,p_n\in \mathbb{R}_{\geq 0}^A$ be corresponding personalized price functions satisfying Conditions 1 and 2.

We first prove that $\delta\in \Delta(C_N)$. We have
\begin{align*}
    \sum_{x\in A} \delta(x) &= \sum_{x\in \supp(\delta)} \delta(x)
    \\
    &=
    \sum_{x\in \supp(\delta)} \left(\sum_{i\in N}p_i(x)\right) \delta(x)
     && \text{(by Condition 2)}
    \\
    &=
    \sum_{i\in N} \sum_{x\in \supp(\delta)} p_i(x) \delta(x)
    \\
    &=
    \sum_{i\in N} C_i 
         && \text{(by Condition 1)}
    \\
    &=
    C_N.
\end{align*}

Note that $p_{i}(x)=0$ for $x \in \supp(\delta)\setminus T_{\delta,i}$; otherwise, as $\delta(x)>0$, agent $i$ spends a positive amount $p_i(x)\delta(x)$ on $x$, and could transfer parts of it to her critical charities which improves her utility, in contradiction to Condition 1 of Lindahl equilibrium.

We claim that for any group of agents $N_- \subseteq N$, it holds that $C_{N_-} \leq \delta\left(T_{\delta,N_-}\right)$. 
To see this, note that
\begin{align*}
C_{N_-}&=\sum_{i \in N_-}C_i=\sum_{i \in N_-} \sum_{x \in A}p_{i}(x) \delta(x)
&&\text{(as $C_i = \sum_{x\in A} p_i(x) \delta(x)$ by Condition 1)}
\\
&=\sum_{i \in N_-} \sum_{x \in T_{\delta,i}}p_{i}(x) \delta(x)
&&\text{(as either $\delta(x)=0$ or $p_{i}(x)=0$ for $x \not\in T_{\delta,i}$)}
\\
&=\sum_{x \in T_{\delta,N_-}}\sum_{i \in N_-}p_{i}(x) \delta(x)
\\ 
&\leq \sum_{x \in T_{\delta,N_-}}\delta(x)
&&\text{(as $\sum_{i \in N_-}p_{i}(x) \leq 1$ for all $x \in A$ by Condition 2)}
\\
&=\delta\left(T_{\delta,N_-}\right).
\end{align*}
By (\ref{eq: N-}) in \Cref{lem:not-eq}, $\delta$ is the equilibrium distribution.
\end{proof}

\section{Proofs omitted from \Cref{sec:equilibrium-rule}}
\label{app:axioms}

\subsection{Strategyproofness and participation}
We prove that \edr is group-strategyproof by leveraging the following lemma.

\begin{restatable}{lemma}{twodistributions}
\label{lem:twodistributions}
Let $\delta^1$ and $\delta^2$ be two distributions, and $i\in N$ an agent.

(a) If $u_i(\delta^2) \geq u_i(\delta^1)$, 
then every charity in $T_{\delta^1,i}$ receives at least as much funding in $\delta^2$, that is:
$\delta^2(y) \geq \delta^1(y)$ for all $y\in T_{\delta^1,i}$.

(b) Similarly, if $u_i(\delta^2) > u_i(\delta^1)$, 
then $\delta^2(y) > \delta^1(y)$ for all $y\in T_{\delta^1,i}$.
\end{restatable}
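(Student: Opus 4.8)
The plan is to reduce both parts to the pointwise identity that characterizes critical charities, combined with the universal lower bound $\delta(x)\geq v_{i,x}\cdot u_i(\delta)$ noted immediately after the definition of Leontief utilities. The key observation is that a charity which is \emph{critical} for $i$ under $\delta^1$ receives exactly the minimal possible funding relative to its value, whereas under $\delta^2$ the same charity is still constrained from below by agent $i$'s utility; comparing these two expressions through the assumed inequality on utilities will immediately yield the claim.

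First I would fix an arbitrary charity $y\in T_{\delta^1,i}$. Being critical, $y$ lies in $A_i$, so $v_{i,y}>0$, and by the characterization in~\eqref{eq:critical-implications} the funding it receives under $\delta^1$ is exactly $\delta^1(y)=v_{i,y}\cdot u_i(\delta^1)$, since the minimizing ratio is attained at $y$.

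Second, I would apply the general lower bound to $\delta^2$: by the definition of $u_i$ as a minimum over the ratios $\delta^2(x)/v_{i,x}$, we have $\delta^2(y)\geq v_{i,y}\cdot u_i(\delta^2)$. Chaining these two facts gives, for part~(a),
\begin{align*}
\delta^2(y)\geq v_{i,y}\cdot u_i(\delta^2)\geq v_{i,y}\cdot u_i(\delta^1)=\delta^1(y),
\end{align*}
where the middle inequality uses the hypothesis $u_i(\delta^2)\geq u_i(\delta^1)$ together with $v_{i,y}>0$. For part~(b), the hypothesis $u_i(\delta^2)>u_i(\delta^1)$ turns the middle inequality strict (again because $v_{i,y}>0$), yielding $\delta^2(y)>\delta^1(y)$.

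There is no serious obstacle here; the argument is a short chain of inequalities once the right facts are invoked. The only point requiring care is that the exact equality $\delta^1(y)=v_{i,y}\cdot u_i(\delta^1)$ is specific to charities critical for $i$ in $\delta^1$: for a non-critical charity,~\eqref{eq:critical-implications} gives only the strict inequality $\delta^1(y)>v_{i,y}\cdot u_i(\delta^1)$, and the desired conclusion may then fail. Thus the hypothesis $y\in T_{\delta^1,i}$ is used in an essential way, while no assumption on $\delta^2$ beyond the utility comparison is needed.
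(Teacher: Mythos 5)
Your proof is correct and essentially identical to the paper's: both arguments chain the equality $\delta^1(y)=v_{i,y}\cdot u_i(\delta^1)$ from \eqref{eq:critical-implications} with the bound $\delta^2(y)\geq v_{i,y}\cdot u_i(\delta^2)$ from the definition of Leontief utilities, using $v_{i,y}>0$ to preserve (strict) inequalities. Nothing is missing.
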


\begin{proof}
As in the proof of \Cref{lem:eq-iff-critical}, we write $f_{i,x}(t) := t / v_{i,x}$ for all $x\in A_i$.

For (a),
for every charity $y \in T_{\delta^1,i}$,
we have 
\begin{align*}
\delta^1(y)&=f_{i,y}^{-1}(u_i(\delta^1)) && \text{(as $y$ is critical for $i$ in $\delta^1$)}
\\
&\leq f_{i,y}^{-1}(u_i(\delta^{2})) && \text{(by assumption $u_i(\delta^1)\leq u_i(\delta^2)$, and $f_{i,y}$ is increasing)}
\\
&=f_{i,y}^{-1}\left(\min_{x \in A_i}f_{i,x}(\delta^{2}(x))\right) && \text{(by definition of Leontief utilities)}
\\
& \leq f_{i,y}^{-1}(f_{i,y}(\delta^{2}(y))) && \text{(since $y \in T_{\delta^1,i}\subseteq A_i$)}
\\
& = \delta^{2}(y).
\end{align*}
For (b), the first inequality 
becomes strict.
\end{proof}

\strategyproof*

\begin{proof}
Given a profile $P$, we define a \emph{manipulation} for group $G$ as a profile $P'\neq P$ in which $v_{i,x}=v_{i,x}'$ for all $i\in N\setminus G$.
A manipulation is \emph{profitable} if $u_j(f(P'))\geq u_j(f(P))$ for all $j\in G$, and $u_i(f(P'))> u_i(f(P))$ for at least one $i\in G$.
Group-strategyproofness is equivalent to the lack of profitable manipulations.

Suppose by contradiction that some group of agents has a profitable manipulation. Let $G \subseteq N$ be an inclusion-maximal such group. Denote by  $\delta^P$ and $\delta^{P'}$ the outcomes of \edr before and after the manipulation.
Since the manipulation is profitable, $u_j(\delta^{P'}) \ge u_j(\delta^P)$ for all $j \in G$ and $u_i(\delta^{P'}) > u_i(\delta^P)$ for at least one $i \in G$.
By \Cref{lem:twodistributions}, 
$\delta^{P'}(x) \geq \delta^P(x)$ for every charity $x$ that belongs to $T_{\delta^P,j}$ for some $j\in G$, and $\delta^{P'}(x) > \delta^P(x)$ for every charity $x$ in $T_{\delta^P,i}$.
This implies 
\begin{align}
\label{eq:strategyproofness}
\delta^{P'}\left(\bigcup_{j \in G}T_{\delta^P,j}\right)>\delta^{P}\left(\bigcup_{j \in G}T_{\delta^P,j}\right).
\end{align}

We decompose both outcomes as Nash equilibria $(\delta^P_i)_{i \in N}$ and $(\delta^{P'}_i)_{i \in N}$, i.e., $\delta^{P} = \sum_{i \in N}\delta^{P}_i$ and $\delta^{P'} = \sum_{i \in N}\delta^{P'}_i$
satisfying \Cref{lem:eq-iff-critical}.
Since $C'_G \leq C_G$, inequality \eqref{eq:strategyproofness} above must hold for the individual distribution of at least one agent $k\in N \setminus G$, that is, 
\begin{align}
\label{eq:strategyproofness-2}
\delta_k^{P'}\left(\bigcup_{j \in G}T_{\delta^P,j}\right)>\delta_k^P\left(\bigcup_{j \in G}T_{\delta^P,j}\right).
\end{align}
Consequently, at least one charity  $x_G \in \cup_{j \in G}T_{\delta^P,j}$ has $\delta_k^{P'}(x_G)>\delta_k^P(x_G)$.
By \Cref{lem:eq-iff-critical}, $x_G$ must be critical for $k$ in $\delta^{P'}$.
Therefore,
\begin{align*}
    u_k(\delta^{P'})
    &=f_{k,x_G}(\delta^{P'}(x_G)) && \text{(as $x_G$ is critical for $k$ in $\delta^{P'}$)}
    \\
    &\geq f_{k,x_G}(\delta^P(x_G)) && \text{(by \Cref{lem:twodistributions}, as $x_G\in T_{\delta^P,j}$ for some $j\in G$)}\\
    &\geq u_k(\delta^P)  && \text{(by definition of Leontief utilities)},
\end{align*}
so agent $k$'s utility is not decreased by the group's manipulation. 
Consequently, $k$ could be added to $G$---contradicting the maximality of $G$.

We conclude that no group of agents has a profitable manipulation and thus \edr is group-strategyproof.
\end{proof}

The above proof shows also that, if the total contribution decreases ($C_G'  <C_G$), then the utility of at least one agent in $G$ has to \emph{strictly} decrease under \edr.
Otherwise, a weak inequality analogous to \eqref{eq:strategyproofness} holds,
but since $\sum_{i \in G}\delta^{P'}_i\left(\bigcup_{j \in G}T_{\delta^P,j}\right)<\sum_{i \in G}\delta^{P}_i\left(\bigcup_{j \in G}T_{\delta^P,j}\right)$,
this weak inequality still implies the strict inequality \eqref{eq:strategyproofness-2}, and the rest of the argument applies.
In particular, an agent receives \emph{strictly} more utility when she increases her contribution.

\participation*

\begin{proof}
Let $P'$ be the profile where, compared to $P$, one agent $j$ increased her contribution by $Z>0$. 
Let $\delta^{P} \in \Delta(C_N)$ and $\delta^{P'} \in \Delta(C_N+Z)$ be the respective outcomes of \edr.

We claim that $\frac{u_j(\delta^{P'})}{u_j(\delta^P)}\geq \frac{C_N+Z}{C_N}$.
To see this, 
define $\delta'=\frac{C_N+Z}{C_N} \cdot \delta^{P}$
and $\delta^{''}=\frac{C_N}{C_N+Z} \cdot \delta^{P'}$
 such that $\delta' \in \Delta(C_N+Z)$ and $\delta^{''} \in \Delta(C_N)$. 
 Denote by $\nash_P(\delta)$ the weighted product of agents' utilities in profile $P$ and distribution $\delta$ (the exponent of the Nash welfare as previously defined).
 Then,
\begin{align*}
    1
    &\leq \frac{\nash_{P'}(\delta^{P'})}{\nash_{P'}(\delta')} &&\text{(by maximality of $\delta^{P'}$ in $\Delta(C_N+Z)$)}
    \\
    &=\frac{\nash_P(\delta^{P'})}{\nash_P(\delta')}\cdot \frac{u_j(\delta^{P'})^{Z}}{u_j(\delta')^{Z}} &&\text{(as agent $j$ increased contribution by $Z$)}
    \\
    &=\left(\frac{C_N+Z}{C_N}\right)^{C_N} \cdot \frac{\nash_P(\delta^{''})}{\nash_P(\delta')} \cdot \frac{u_j(\delta^{P'})^{Z}}{u_j(\delta')^{Z}} &&\text{(as $\delta^{P'}=\frac{C_N+Z}{C_N} \cdot \delta^{''}$)}
    \\
    &=\frac{\nash_P(\delta^{''})}{\nash_P(\delta^P)} \cdot \frac{u_j(\delta^{P'})^{Z}}{u_j(\delta')^{Z}} &&\text{(as $\delta'=\frac{C_N+Z}{C_N} \cdot \delta^{P}$)}
    \\
    &\leq \frac{u_j(\delta^{P'})^{Z}}{u_j(\delta')^{Z}} &&\text{(by maximality of $\delta^P$ in $\Delta(C_N)$)}.
\end{align*}
Thus, $u_j(\delta^{P'}) \geq u_j(\delta')=\frac{C_N+Z}{C_N} \cdot u_j(\delta^P)$.
\end{proof}

\subsection{Monotonicity conditions}

\prefmon*

\begin{proof}
Let $P$ be a profile and $P'$ a modified profile where one agent $i$ increases her valuation for one charity $x$
(that is, $v'_{i,x}>v_{i,x}$ and $v'_{i,y}=v_{i,y}$ for all $y \in A\setminus x$).
Let $\delta^P$ and $\delta^{P'}$ be the respective outcomes of \edr. We need to show that $\delta^{P'}(x) \geq \delta^P(x)$.

Let $u_i$ and $u_i'$ be agent $i$'s Leontief utility functions in the two profiles. 
By definition of Leontief utilities, $u_i'(\delta^P) = \min (u_i(\delta^P), \delta^P(x)/v_{i,x}')$.
We consider two cases, depending on which of the two expressions within the minimum is larger.

\emph{Case 1}: $u_{i}(\delta^P)<\delta^P(x)/v_{i,x}'$.
Then 
$u_i'(\delta^P)=u_i(\delta^P)$, and all charities in $T_{\delta^P,i}$ remain critical for~$i$ in the new profile.
Therefore, by \Cref{lem:eq-iff-critical}, 
$\delta^P$ is still an equilibrium distribution for $P'$. By uniqueness of the equilibrium distribution, $\delta^{P'}(x)=\delta^P(x)$.

\emph{Case 2}: $u_{i}(\delta^P)\geq  \delta^P(x)/v_{i,x}'$.
By definition of Leontief utilities, 
\begin{align*}
\frac{\delta^{P'}(x)}{v'_{i,x}} \geq u_i'(\delta^{P'}).
\end{align*}
By strategyproofness (Theorem \ref{thm:strategyproof}),
\begin{align*}
u_i'(\delta^{P'})
~\geq~
u_i'(\delta^P).
\end{align*}
By definition of Leontief utilities,
\begin{align*}
u_i'(\delta^P) &= \min\left(u_i(\delta^P),~ \frac{\delta^P(x)}{v'_{i,x}}\right)
= \frac{\delta^P(x)}{v_{i,x}'},
\end{align*}
since by assumption $u_{i}(\delta^P)\geq  \delta^P(x)/v_{i,x}'$.
Combining these three inequalities yields $\delta^{P'}(x)\geq \delta^{P}(x)$, as desired.
\end{proof}

\contributionmon*

\begin{proof}
Let us write for brevity $f_{i,x}(t) := t / v_{i,x}$ for all $i\in N$ and $x\in A_i$, and note that $f_{i,x}$ is a strictly increasing function.

Let $P$ and $P'$ be two such profiles,
so that $C_i'\geq C_i$ for all $i\in N$.
Let $\delta$ and $\delta'$ be the outcomes of \edr corresponding to profiles $P$ and $P'$, respectively. 
Choose arbitrary decompositions of $\delta$ and $\delta'$ into Nash equilibria, i.e., choose vectors of individual distributions $(\delta_i)_{i \in N}$ and $(\delta'_i)_{i \in N}$ that satisfy \Cref{lem:eq-iff-critical}, such that $\delta = \sum_i \delta_i$ and $\delta' = \sum_i \delta'_i$.

Let $A^-$, $A^=$, and $A^+$ be the sets of all charities $x\in A$ with 
$\delta'(x)<\delta(x)$, $\delta'(x)=\delta(x)$, and $\delta'(x)>\delta(x)$, respectively.
Assume for contradiction that $A^-$ is not empty.
Thus, $\sum_{i \in N}\delta'_i(A^-)<\sum_{i \in N}\delta_i(A^-)$, 
so there must be an agent $i\in N$ with
$\delta'_i(A^-)<\delta_i(A^-)$,
and a charity $y \in A^-$  
with $\delta'_i(y)<\delta_i(y)$.
But $\delta'_i(A) = C'_i\geq C_i = \delta_i(A)$,
so $\delta'_i(A^= \cup A^+)>\delta_i(A^= \cup A^+)$,
and so there must be a charity $z \in A^= \cup A^+$ with $\delta'_i(z)>\delta_i(z)\ge 0$.
By  \Cref{lem:eq-iff-critical},
charities $z$ and $y$ are critical for $i$ under $\delta'$ and $\delta$, respectively.
This, in particular, implies that $v_{i,z}>0$ and $v_{i,y}>0$. Therefore, 
\begin{align*}
f_{i,z}(\delta'(z))
\le 
f_{i,y}(\delta'(y))
<
f_{i,y}(\delta(y))
\le 
f_{i,z}(\delta(z)),
\end{align*}
where the first and last inequalities follow from the definition of critical charities, and the middle inequality follows from $f_{i,y}$ being strictly increasing and $y\in A^-$.
As $f_{i,z}$ is strictly increasing,
this implies $\delta'(z)<\delta(z)$, 
a contradiction to $z \in A^= \cup A^+$.
\end{proof}

\begin{samepage}
\begin{remark}\label{rem: alternative-uniqueness-proof} 
\Cref{thm:contribution-monotonicity} yields an alternative proof of the uniqueness of equilibrium distributions, which does not rely on the equivalence with Nash welfare optimality.
If $\delta$ and $\delta'$ are equilibrium distributions for the same profile, then both $\delta'(x)\geq \delta(x)$ and 
$\delta(x)\geq \delta'(x)$ must hold for every charity $x\in A$, which implies $\delta'=\delta$.
\end{remark}
\end{samepage}

\section{Proofs omitted from \Cref{sec: br-dynamics}}
\label{app: dynamics-leontief}

\uniquebestresponse*
\begin{proof}
Since a best response corresponds to a solution of a maximization problem over the closed and bounded set of possible distributions $\delta_{i_t} + \sum_{j \neq i_t}\deltat_j$ 
with the continuous objective function $u_{i_t}$, existence is guaranteed.

To show uniqueness, observe that for the distribution in round $t+1$ (which for simplified notation we denote by $\delta\coloneqq\deltat[t+1]$), we have
$\delta_{i_t}(T_{\delta,i_t})=C_{i_t}$, that is, agent $i_t$ distributes all her contribution on her critical charities in $\delta$.
In any other response $\delta_{i_t}'$, agent $i_t$ must contribute less to at least one charity of $T_{\delta,i_t}$. Therefore, her utility must be lower than $u_{i_t}(\delta)$, so $\delta_{i_t}'$ cannot be a best response.
\end{proof}

\lempotentialpairwise*
\begin{proof}
The increase in potential is given by
\begin{align*}
    &\;\Phi(\delta_1,\ldots,\delta'_i,\ldots,\delta_n) 
    - \Phi(\delta_1,\ldots,\delta_i,\ldots,\delta_n)
    \\
    =&\;(\delta_i(x)-\epsilon)\log\left(\frac{v_{i,x}}{\delta(x)-\epsilon}\right)
    +(\delta_i(y)+\epsilon)\log\left(\frac{v_{i,y}}{\delta(y)+\epsilon}\right)
    \\
    &\;+\sum_{j \in N\setminus i:\,\delta_j(x)>0}\delta_j(x)\log\left(\frac{v_{j,x}}{\delta(x)-\epsilon}\right)
    +\sum_{j \in N\setminus i:\,\delta_j(y)>0}\delta_j(y)\log\left(\frac{v_{j,y}}{\delta(y)+\epsilon}\right)
    \\
    &\;-\delta_i(x)\log\left(\frac{v_{i,x}}{\delta(x)}\right)
    -\delta_i(y)\log\left(\frac{v_{i,y}}{\delta(y)}\right)
    \\
    &\;-\sum_{j \in N\setminus i:\,\delta_j(x)>0}\delta_j(x)\log\left(\frac{v_{j,x}}{\delta(x)}\right)
    -\sum_{j \in N\setminus i:\,\delta_j(y)>0}\delta_j(y)\log\left(\frac{v_{j,y}}{\delta(y)}\right)
    \\
    =&\;\sum_{j \in N:\,\delta_j(x)>0}\delta_j(x) \log\left(\frac{\delta(x)}{\delta(x)-\epsilon}\right)
    +\sum_{j \in N:\,\delta_j(y)>0}\delta_j(y) \log\left(\frac{\delta(y)}{\delta(y)+\epsilon}\right)
    \\
    &\;+\epsilon \left(\log \left(\frac{v_{i,y}}{\delta(y)+\epsilon}\right)-\log \left(\frac{v_{i,x}}{\delta(x)-\epsilon}\right)\right)
    \\
    =&\;\delta(x) \log\left(\frac{\delta(x)}{\delta(x)-\epsilon}\right)
    +\delta(y) \log\left(\frac{\delta(y)}{\delta(y)+\epsilon}\right)
    \\
    &\;+\epsilon \left(\log \left(\frac{v_{i,y}}{\delta(y)+\epsilon}\right)-\log \left(\frac{v_{i,x}}{\delta(x)-\epsilon}\right)\right)
    \\
    >&\;0,
\end{align*}
where the last term is nonnegative because $\frac{\delta(x)-\epsilon}{v_{i,x}} \geq \frac{\delta(y)+\epsilon}{v_{i,y}}$, and the first two terms sum up to something strictly positive which can be seen by using $\log(1+x)>\frac{x}{1+x}$ for $x> -1$ and $x \neq 0$:
\begin{align*}
    \delta(x) \log\left(\frac{\delta(x)}{\delta(x)-\epsilon}\right)
    +\delta(y) \log\left(\frac{\delta(y)}{\delta(y)+\epsilon}\right)
    &>\delta(x)\cdot \frac{\frac{\epsilon}{\delta(x)-\epsilon}}{1+\frac{\epsilon}{\delta(x)-\epsilon}}
    +\delta(y)\cdot \frac{\frac{-\epsilon}{\delta(x)+\epsilon}}{1+\frac{-\epsilon}{\delta(x)+\epsilon}}
    \\
    &=\delta(x)\cdot\frac{\epsilon}{\delta(x)}+\delta(y)\cdot\frac{-\epsilon}{\delta(y)}
    \\
    &=0. 
\end{align*}
This completes the proof.
\end{proof}

\subsection{Proof of \Cref{thm:leon-dynamics}}\label{sec:proofofthm:leondynamics}

\dynamics*
The proof will proceed in two steps: First, we will show that the amount an arbitrary
agent wants to redistribute converges to 0. Then, we will conclude that this can only be
the case if the dynamics converges to the equilibrium distribution.

The function $\Phi$ defined by \eqref{eq:potential}
is bounded on $\Delta(C_N)$, since
\begin{align*}
    \Phi(\delta_1,\ldots,\delta_n)&=\sum_{i \in N}\sum_{x \in A_i}\delta_i(x)\log \left(\frac{v_{i,x}}{\delta(x)}\right)
    \\
    &=\sum_{i \in N}\sum_{x \in A_i}\delta_i(x)\log(v_{i,x})-\sum_{x \in A}\delta(x)\log(\delta(x))
    \\
    &\le \sum_{i \in N}\sum_{x \in A_i}\delta_i(x)\log(v_{i,x})+\frac{m}{\mathrm{e}} < \infty.
\end{align*}
Therefore, the sequence $(\Phi(\deltat[t]))_{t \in \mathbb{N}}$ has to converge to some limit. We denote this limit by $\phi^*$.

We will now show that the amount an arbitrary agent wants to redistribute converges to $0$. 
By assumption, there exists a round $T \le K$ by which all agents have already appeared at least once in $\mathcal{S}$.
It is sufficient to prove the theorem for the subsequence starting at $T$. Therefore,
from now on, we assume without loss of generality that at round $t=0$, all agents have already appeared at least once in $\mathcal{S}$, 
and thus, have contributed
the entire amount $C_i$.

Denote the amount of shifted contributions in round $t$ by $c_t$:
\begin{align*}
    c_t\coloneqq\frac{1}{2}\lVert\deltat-\deltat[t+1]\rVert_1.
\end{align*} 
When moving from $\deltat$ to $\deltat[t+1]$ in round $t$, agent $i_t$ redistributes $c_t$ from a set of charities $A^-_{i_t}$ to another set $A^+_{i_t}$ with $A^+_{i_t} \cap A^-_{i_t} = \emptyset$. Since the agent is only allowed to redistribute her individual distribution, $c_t\leq \deltat_{i_t}(A^-_{i_t})$. Furthermore, since she redistributes according to her best response, she gives money only to charities that are  critical to her in the new distribution, so $\deltat[t+1]_{i_t}(x)=0$ for all $x \in A$ with $\deltat[t+1](x)/v_{i_t,x}>u_{i_t}(\deltat[t+1])$
and $u_{i_t}(\deltat[t+1])=\deltat[t+1](x^+)/v_{i_t,x^+}$ for every  $x^+ \in A^+_{i_t}$. 
An illustrative example is given in \Cref{fig:bestresponsebinary}. In particular, $\deltat[t+1](x^-)/v_{i_t,x^-} \geq u_{i_t}(\deltat[t+1])=\deltat[t+1](x^+)/v_{i_t,x^+}$ for all $x^- \in A^-_{i_t}$ and $x^+ \in A^+_{i_t}$.
\begin{figure}
\begin{center}
\begin{tikzpicture}
    \draw(0,0)--(11,0);
    \draw(0,3pt)--(0,-3pt);
    \node at (0,15pt) {0};

    \draw(3,3pt)--(3,-3pt) node[below] {\textcolor{gray}{$w$}};
    \node at (3,15pt) {3};

    \draw(2,3pt)--(2,-3pt) node[below] {\textcolor{blue!50}{$x$}};
    \node at (2,15pt) {2};
    \draw(5,3pt)--(5,-3pt) node[below] {\textcolor{blue}{$x,y$}};
    \node at (5,15pt) {5};
    \draw[->,color=green!50!black] (2,5pt)--(5,5pt);
    
    \draw(6,3pt)--(6,-3pt) node[below] {\textcolor{blue!50}{$y$}};
    \node at (6,15pt) {6};
    \draw[->,color=red!50!black] (6,5pt)--(5,5pt);
    
    \draw(9,3pt)--(9,-3pt) node[below] {\textcolor{blue!50}{$z$}};
    \node at (9,15pt) {9};
    \draw(7,3pt)--(7,-3pt) node[below] {\textcolor{blue}{$z$}};
    \node at (7,15pt) {7};
    \draw[->,color=red!50!black] (9,5pt)--(7,5pt);
\end{tikzpicture}   
\end{center}
\caption{An instance with four charities (named $w,x,y,z$), $\deltat=(3,2, 6, 9)$, and an agent $i_t$ with $\deltat_{i_t}=(0,2,2,2)$ and Leontief utilities with binary weights $v_{i_t} = (0,1,1,1)$.
 Then, $\delta^\mathit{best}_{i_t}=(0,5,1,0)$, $\deltat[t+1] = (3, 5, 5,7)$,  $c_t=3$, $A^-_{i_t} = \{y,z\}$, $A^+_{i_t} = \{x\}$.}
\label{fig:bestresponsebinary}
\end{figure}

Define $d_i(\delta)$ as the amount of contribution that would be shifted by an agent $i$ if the current distribution (along with its associated decomposition) were $\delta$ and it was her turn to respond. 
Note that we define $d_i(\delta)$
for all agents, not only the one who actually plays her best response; in particular, $d_{i_t}(\deltat)=c_t$ for all $t$.
Note also that $\delta$ is the equilibrium distribution if and only if $d_i(\delta)=0$ for all $i\in N$.

\begin{restatable}{lemma}{lemleonbrdynshift}
\label{lem: leon-br-dyn-shift}
For any sequence $\mathcal{S}$, round $t\ge 0$, and agent $j\in N$,
\begin{align*}
	d_j(\delta^{t}) \le 
	d_{i_{t}}(\delta^{t}) + 
	d_j(\deltat[t+1]).
\end{align*}
\end{restatable}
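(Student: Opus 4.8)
The plan is to reduce the inequality to a non-expansiveness property of agent $j$'s best-response operator and to prove that property by a short continuous-interpolation argument. First I would dispatch the trivial case $j=i_t$: there $\deltat[t+1]$ already is the distribution in which $j$ plays her best response, so $d_j(\deltat[t+1])=0$ and the claim collapses to the identity $d_j(\deltat)=d_{i_t}(\deltat)$. Hence assume $j\neq i_t$ (and $C_j>0$, else $d_j\equiv 0$ and the statement is immediate). The key structural fact is that $d_j(\delta)$ depends only on $j$'s own individual distribution $\delta_j$ and on the \emph{floor} $\sum_{k\neq j}\delta_k$ provided by the remaining agents: writing $b(s)$ for $j$'s unique best response (\Cref{lem:uniquebestresponse}) against a fixed floor $s$, the amount she shifts is $d_j(\delta)=\tfrac12\lVert\delta_j-b(s)\rVert_1$. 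Setting $s_t:=\sum_{k\neq j}\deltat_k$ and $s_{t+1}:=\sum_{k\neq j}\deltat[t+1]_k$, and noting that only $i_t$ changes her individual distribution between the two rounds (so $\deltat_j=\deltat[t+1]_j=:\delta_j$), the triangle inequality gives
\begin{align*}
d_j(\deltat)&=\tfrac12\lVert\delta_j-b(s_t)\rVert_1\\
&\le\tfrac12\lVert\delta_j-b(s_{t+1})\rVert_1+\tfrac12\lVert b(s_t)-b(s_{t+1})\rVert_1\\
&=d_j(\deltat[t+1])+\tfrac12\lVert b(s_t)-b(s_{t+1})\rVert_1,
\end{align*}
so it remains to bound the last term by $c_t=d_{i_t}(\deltat)$.

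Next I would measure the movement of the floor. Since only $i_t$'s contribution changes, $s_{t+1}-s_t=\deltat[t+1]_{i_t}-\deltat_{i_t}$, whence $\tfrac12\lVert s_{t+1}-s_t\rVert_1=c_t$; moreover both floors have total mass $C_{N\setminus\{j\}}$, so their difference $\Delta:=s_{t+1}-s_t$ satisfies $\sum_x\Delta(x)=0$. Thus the whole lemma follows once we establish that the water-filling map $b(\cdot)$ is non-expansive in $\ell_1$ on floors of equal total mass, that is, $\lVert b(s_t)-b(s_{t+1})\rVert_1\le\lVert s_t-s_{t+1}\rVert_1=2c_t$.

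To prove non-expansiveness I would use the explicit water-filling form $b(s)(x)=\max\{0,\,L\cdot v_{j,x}-s(x)\}$ for $x\in A_j$, where the level $L$ is pinned down by $\sum_{x\in A_j}b(s)(x)=C_j$, and interpolate linearly via $s(\tau)=s_t+\tau\Delta$ for $\tau\in[0,1]$. The active set $\mathcal A(\tau)=\{x:L(\tau)v_{j,x}>s(\tau)(x)\}$ and the level $L(\tau)$ are piecewise constant/linear with finitely many breakpoints, and $b(s(\tau))$ is continuous; on each piece, differentiating $\sum_{x\in\mathcal A}(Lv_{j,x}-s(\tau)(x))=C_j$ yields $L'(\tau)=\big(\sum_{x\in\mathcal A}\Delta(x)\big)/\big(\sum_{x\in\mathcal A}v_{j,x}\big)$, so the allocation moves at speed
\begin{align*}
\sum_x\Big|\tfrac{d}{d\tau}b(s(\tau))(x)\Big|
&=\sum_{x\in\mathcal A}\big|L'(\tau)v_{j,x}-\Delta(x)\big|\\
&\le|L'(\tau)|\sum_{x\in\mathcal A}v_{j,x}+\sum_{x\in\mathcal A}|\Delta(x)|
=\Big|\sum_{x\in\mathcal A}\Delta(x)\Big|+\sum_{x\in\mathcal A}|\Delta(x)|.
\end{align*}
Because $\sum_x\Delta(x)=0$, we have $\big|\sum_{x\in\mathcal A}\Delta(x)\big|=\big|\sum_{x\notin\mathcal A}\Delta(x)\big|\le\sum_{x\notin\mathcal A}|\Delta(x)|$, so the speed is at most $\lVert\Delta\rVert_1$. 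Integrating over $\tau\in[0,1]$ and bounding the endpoint distance by the path length gives $\lVert b(s_{t+1})-b(s_t)\rVert_1\le\lVert\Delta\rVert_1=2c_t$, exactly as needed.

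The main obstacle is this non-expansiveness step: the triangle-inequality reduction is routine, but a change in the others' contributions shifts the water level $L$ \emph{globally} across all of $j$'s active charities, so a priori $j$'s reaction could be amplified. The decisive point that rules this out is that the global correction $|L'(\tau)|\sum_{x\in\mathcal A}v_{j,x}=\big|\sum_{x\in\mathcal A}\Delta(x)\big|$ equals precisely the mass gained or lost on the \emph{inactive} charities, which is available only because the floor's total mass is conserved when $i_t$ merely redistributes her own budget.
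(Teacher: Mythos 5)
Your proof is correct, but it takes a genuinely different route from the paper's. The paper argues by contradiction directly on the combinatorial structure of best responses: it fixes the sets $A_j^-$ and $A_j^+$ of charities that $j$'s hypothetical best response at $\delta^t$ would touch, and shows that if the total amount moved by $i_t$ and then by $j$ fell short of $d_j(\delta^t)$, some charity in $A_j^-$ would retain a strictly higher weighted allocation than a charity in $A_j^+$ while still receiving money from $j$, contradicting the optimality conditions of $j$'s best response at round $t+1$. You instead isolate a cleaner and strictly stronger statement --- that $j$'s best-response map $s \mapsto b(s)$ is non-expansive in $\ell_1$ over floors of equal total mass --- and derive the lemma from it by a routine triangle inequality; the non-expansiveness itself is proved via the explicit water-filling formula and a path-length bound along a linear interpolation of the floor. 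Your reduction is sound (only $i_t$'s individual distribution changes between rounds, so $\delta^t_j=\delta^{t+1}_j$ and $\tfrac12\lVert s_{t+1}-s_t\rVert_1=c_t$), and the key cancellation $\lvert L'(\tau)\rvert\sum_{x\in\mathcal A}v_{j,x}=\bigl\lvert\sum_{x\in\mathcal A}\Delta(x)\bigr\rvert\le\sum_{x\notin\mathcal A}\lvert\Delta(x)\rvert$ correctly exploits mass conservation. What your approach buys is a reusable quantitative Lipschitz property and a transparent explanation of \emph{why} the global shift of the water level cannot amplify $j$'s reaction; what it costs is reliance on the closed-form water-filling characterization (which the paper never states explicitly, though it follows easily from \Cref{lem:uniquebestresponse}) and some piecewise-linearity bookkeeping at the breakpoints of the active set --- in particular you should note that a charity sitting exactly at the water level over an interval receives allocation $0$ there and contributes nothing to the speed, so degenerate intervals do not affect the bound. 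The paper's argument is more elementary and calculus-free, but arguably less illuminating.
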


\begin{proof}
If $d_j(\delta^{t}) \le d_{i_t}(\delta^{t})$, the statement holds trivially.
Hence, assume that $d_j(\delta^{t}) > d_{i_t}(\delta^{t})$.
In particular, $j \ne i_t$.

Let $\Tilde{\delta}_j^{t+1}$ and $\Tilde{\delta}^{t+1}$ be the (hypothetical) individual distribution of agent $j$ and the collective distribution had she been able to implement her best response at round $t$.

Denote the sets of charities that would be affected by agent $j$'s best response at $\deltat$ by $A^-_j\coloneqq\{x^- \in A_j: \Tilde{\delta}_j^{t+1}(x^-)<\delta_j^{t}(x^-)\}$ and $A^+_j\coloneqq\{x^+ \in A_j: \Tilde{\delta}_j^{t+1}(x^+)>\delta_j^{t}(x^+)\}$. Then,
\begin{equation}
\label{cond: leon-br-dyn1}
\begin{split}
\frac{\Tilde{\delta}^{t+1}(x^-)}{v_{j,x^-}} 
\ge \frac{\Tilde{\delta}^{t+1}(x^+)}{v_{j,x^+}} \text{ for all } x^- \in A_j^- \text{ and } x^+ \in A_j^+; \text{ and}
\end{split}
\end{equation}
\begin{align}
\label{cond: leon-br-dyn2}
\Tilde{\delta}^{t+1}_j(x)=0 \text{ for all } x \in A \text{ with } u_j(\Tilde{\delta}^{t+1})<\frac{\Tilde{\delta}^{t+1}(x)}{v_{j,x}} 
\end{align}
hold by definition of best responses.

Now, a lower bound for $d_j(\deltat[t+1])$ is given by the amount shifted from charities in $A^-_j$ under $j$'s best response in round $t+1$. 
Again, denote by $\Tilde{\delta}_j^{t+2}$ and $\Tilde{\delta}^{t+2}$ agent $j$'s best response in round $t+1$ and the corresponding collective distribution;
note that both \eqref{cond: leon-br-dyn1} and \eqref{cond: leon-br-dyn2} hold also with $t+1$ replaced by $t+2$.

Consider first the special case in which
agent $i_t$ did not change her contribution to charities in 
$A_j^- \cup A_j^+$, that is,  $\delta^{t}(x)=\deltat[t+1](x)$ for all $x \in A_j^- \cup A_j^+$. If $d_j(\deltat[t+1])<d_j(\deltat[t])$, then a smaller amount is transferred from charities in $A_j^-$ and to charities in $A_j^+$ in $j$'s best response at $\deltat[t+1]$ than in $j$'s best response at $\deltat$, so by \eqref{cond: leon-br-dyn1}, there exist charities $x^- \in A^-_j$ and $x^+ \in A^+_j$ such that $\Tilde{\delta}^{t+2}(x^-)>\Tilde{\delta}^{t+2}(x^+) 
\ge 
v_{j,x^+} \cdot u_i(\Tilde{\delta}^{t+2})$, and thus $\Tilde{\delta}^{t+2}_j(x^-)>0$.
This contradicts (\ref{cond: leon-br-dyn2}) with $t+2$ instead of $t+1$.
Therefore, $d_j(\deltat[t+1]) \ge d_j(\delta^{t})$ and the claim follows.

Consider now the general case, in which agent $i_t$ may have changed her contribution to some charities in $A_j^- \cup A_j^+$. 
We claim that the total transfer of $i_t$ and then $j$ (i.e., $d_{i_{t}}(\delta^{t}) + d_j(\deltat[t+1])$) cannot be less than the transfer if $j$ were to act alone (i.e., $d_j(\deltat)$).
The reason is similar to the previous paragraph: 
if this total transfer is less than $d_j(\delta^t)$, then there exist charities $x^- \in A^-_j$ and $x^+ \in A^+_j$ such that $\Tilde{\delta}^{t+2}(x^-)>\Tilde{\delta}^{t+2}(x^+) \ge v_{j,x^+}\cdot  u_j(\Tilde{\delta}^{t+2})$ and $\Tilde{\delta}^{t+2}_j(x^-)>0$, which is a contradiction.
Hence, $d_{i_{t}}(\delta^{t}) + d_j(\deltat[t+1]) \ge d_j(\delta^{t})$, as desired.
\end{proof}

Intuitively, the lemma can be seen as a ``triangle inequality'': the left-hand side denotes the direct distance from $\delta^t$ towards $j$'s optimal redistribution, while the right-hand side denotes the distance along an indirect path that first goes to $\delta^{t+1}$ and then proceeds from there towards $j$'s optimal redistribution.

For any agent $j\in N$ and round $t$, we know that $j$ will get the chance to redistribute her contribution in at most $K$ rounds by assumption. Denote this next round by $t \leq t' \le t+K-1$. We have
\begin{align*}
\sum_{\ell=t}^{t'} c_\ell &= \sum_{\ell=t}^{t'} d_{i_\ell}(\delta^\ell)
\\
& \geq 
\sum_{\ell=t}^{t'}\left(d_j(\deltat[\ell])-d_j(\deltat[\ell+1])\right)   && \text{(by \Cref{lem: leon-br-dyn-shift})}
\\
&= d_j(\deltat) - d_j(\deltat[t'+1])  
\\
&= d_j(\deltat) &&
\text{(as $d_j(\deltat[t'+1])=0$ after agent $j$'s best response).}
\end{align*}
Thus, we have an upper bound on the maximum amount any agent would like to shift at any given round $t$.
\begin{corollary}
\label{cor: leon-br-dyn-max}
For all rounds $t$,
$\displaystyle
    \sum_{\ell=t}^{t+K-1} c_\ell 
    \geq 
    \max_{i\in N} d_i(\delta^t).
$
\end{corollary}

\begin{restatable}{lemma}{corleonbrdynshift}
\label{cor: leon-br-dyn-shift-sum}
For any best-response sequence $\mathcal{S}$,
$\lim_{t\to\infty} \sum_{\ell=t}^{t+K-1} c_\ell = 0$.
\end{restatable}

\begin{proof}
Assume for contradiction that there exists $\gamma>0$ such that, for all $T>0$, there exists $T' \ge T$ with $\sum_{\ell=T'}^{T'+K-1} c_\ell \ge \gamma$.
Let $\epsilon:=\gamma/(K(m-1))$.

Recall that $\phi^*$ is the limit of $\Phi(\delta^t)$ as $t\to\infty$.
Choose some $T$ such that $\phi^*-\Phi(\delta^T)<\frac{\epsilon^2}{4 C_N}$ and $T' \ge T$ with $\sum_{\ell=T'}^{T'+K-1} c_\ell \ge \gamma$.
Thus, there exists some $t \in \{T',\ldots, T'+K-1\}$ with $c_{t} \ge \gamma/K$. Consequently, in round $t$, agent $i_t$ transfers at least $\epsilon$ from some charity~$x$ to some other charity $y$.   

The upper bound on $\log(1+x)$ from \Cref{lem:potential-pairwise} can be refined to $\log(1+x)>\frac{x}{1+x}+\frac{x^2}{(2+x)^2}$ for $x>-1$ and $x \neq 0$, so we get

    \begin{align*}
        \Phi(\delta^{t+1})-\Phi(\delta^{T'}) &\ge \Phi(\delta^{t+1})-\Phi(\deltat)
        \\
        &>\deltat(x) \frac{\left(\frac{\epsilon}{\deltat(x)-\epsilon}\right)^2}{\left(2+\frac{\epsilon}{\deltat(x)-\epsilon}\right)^2}
        +\deltat(y) \frac{\left(\frac{-\epsilon}{\deltat(y)+\epsilon}\right)^2}{\left(2+\frac{-\epsilon}{\deltat(y)+\epsilon}\right)^2}
        \\
        &= \deltat(x) \frac{\epsilon^2}{\left(2\deltat(x)-\epsilon \right)^2}
        +  \deltat(y) \frac{\epsilon^2}{\left(2\deltat(y)+\epsilon \right)^2}
        \\
        &>\deltat(x) \frac{\epsilon^2}{\left(2\deltat(x)-\epsilon \right)^2}
        \\
        &> \frac{\epsilon^2}{4\deltat(x)}
        > \frac{\epsilon^2}{4C_N}
        > \phi^*-\Phi(\delta^T)
        \\
        &> \phi^*-\Phi(\delta^{T'}).     
    \end{align*}
It follows that $\Phi(\delta^{t+1}) > \phi^*$. 
However, this is impossible, since $\Phi(\delta^t)$ is increasing with $t$ and converges to $\phi^*$. 
\end{proof}

Combining \Cref{cor: leon-br-dyn-shift-sum} with \Cref{cor: leon-br-dyn-max}, we get that the amount an agent wants to redistribute converges to $0$.

\begin{restatable}{lemma}{corleonbrdynshift2}
\label{cor: leon-br-dyn-shift}
For any best-response sequence $\mathcal{S}$ and agent $j \in N$, it holds that $\lim_{t \to \infty}d_j(\delta_1,\ldots,\delta_n)=0$.
\end{restatable}

We can now complete the proof of \Cref{thm:leon-dynamics}.

\begin{proof}[Proof of \Cref{thm:leon-dynamics}]
The sequence $\mathcal{S}$ is an infinite sequence of vectors of individual distributions, 
so it is a vector in the closed and bounded polytope $\Delta(C_1)\times \cdots \times \Delta(C_n)$.
Hence,  
the Bolzano-Weierstrass theorem states that it has a convergent subsequence $(\delta^{t_k}_1,\ldots,\delta^{t_k}_n)_{k \in \mathbb{N}}$ with limit $(\delta^L_1,\ldots,\delta^L_n)$.
Furthermore, by \Cref{cor: leon-br-dyn-shift}, $\lim_{k \to \infty}d_i(\delta^{t_k}_1,\ldots,\delta^{t_k}_n)=0$ for \emph{any} convergent subsequence, implying $d_i(\lim_{k \to \infty}(\delta^{t_k}_1,\ldots,\delta^{t_k}_n))=0$ for every agent $i \in N$,
and so $d_i(\delta^L_1,\ldots,\delta^L_n) = 0$.
This means that $(\delta^L_1,\ldots,\delta^L_n)$ is a Nash equilibrium.

Therefore, for every convergent subsequence, the sum of individual distributions in the limit is the same and equals the unique equilibrium distribution $\delta^*$.
It follows that the sequence of sums (i.e., collective distributions) in $\mathcal{S}$ itself must converge to the same limit $\delta^*$---otherwise, by the Bolzano-Weierstrass theorem, we would have a subsequence that converges to a different sum.
\end{proof}

\begin{remark}
    \Cref{thm:leon-dynamics} only shows that the \emph{collective} distributions generated by $\mathcal{S}$ converge to the equilibrium distribution.
    We do not know whether the vector of \emph{individual} distributions converges to a Nash equilibrium for arbitrary $\mathcal{S}$.
    In theory, we might have different convergent subsequences that converge to different Nash equilibria (which all yield the collective equilibrium distribution).
\end{remark}

\begin{remark}
As explained in \Cref{ftn:better-response}, \Cref{lem:potential} does not hold for better-response sequences. However, the statement remains true when restricting ourselves to transfers from one charity to another as in \Cref{lem:potential-pairwise}. For Cobb-Douglas utilities (or lexicographic Leontief utilities), such transfers correspond to better responses.
\end{remark}

\subsection{Proof of \Cref{thm:contspenddynamics}}

\contspenddynamics*

\begin{proof}
   For every $t$, note that $\delta^\mathit{best}_{t}$ is the same distribution as the best response of agent~$i_t$ under the redistribution dynamics of \Cref{sub:redistribution-dynamics} with round-robin sequence $\mathcal{S}$. 
   Thus, \Cref{thm:leon-dynamics} implies that the sum of the last $n$ individual distributions (one per agent) converges to the equilibrium distribution, i.e., $\lim_{t \to \infty}\sum_{k=t-n+1}^{t}\delta^\mathit{best}_{k}=\delta^*$.
    Consequently, for $t$ being a multiple of $n$, the sum
    \begin{align*}
    \sum_{i \in N}\frac{1}{\lfloor (t+n-i)/n \rfloor} \delta^t_i=\sum_{i \in N}\frac{n}{t}\delta^t_i=
    \frac{n}{t}\sum_{\ell=1}^{t/n}\sum_{k=(\ell-1) n+1}^{\ell n}\delta^\mathit{best}_{k}
    \end{align*}
    converges to $\delta^*$ as $t\to\infty$.
    
    For arbitrarily large $t$ not being a multiple of $n$, donations from rounds $\lfloor t/n \rfloor n, \dots,t-1$ 
    only have a vanishing impact on $\sum_{i \in N}\frac{1}{\lfloor (t+n-i)/n \rfloor} \delta^t_i$
for $t\to\infty$, as the size of each donation is constant and the denominator increases with $t$.
    Hence, convergence to $\delta^*$ holds for the whole sequence. 
\end{proof}

\section{Proofs omitted from Section \ref{sec:binary}}
\label{app: proofs}

\edrisprojegal*

\begin{proof}
By uniqueness of the equilibrium distribution (\Cref{thm:unique-eq-distribution}), it is sufficient to show that it coincides with the charity egalitarian distribution.
Let $\delta^\mathit{CHEG}$ be the decomposable charity egalitarian distribution, with decomposition $\delta^\mathit{CHEG} = \sum_{i\in N} \delta^\mathit{CHEG}_i$. 
Suppose for contradiction that $\delta^\mathit{CHEG}$ is not the equilibrium distribution. 
By \Cref{lem:eq-iff-critical}, there is an agent $i\in N$ who contributes to a non-critical charity $x \in A_i$, that is, $\delta^\mathit{CHEG}_i(x)>0$ and $\delta^\mathit{CHEG}(x) > u_i(\delta^\mathit{CHEG})$. 
Let $y\in A_i$ be a critical charity of agent~$i$, that is,
$\delta^\mathit{CHEG}(y) = u_i(\delta^\mathit{CHEG})$.

If agent $i$ now moves $\nicefrac{1}{2}(\delta^\mathit{CHEG}(x) - \delta^\mathit{CHEG}(y))$ from $x$ to $y$, the resulting distribution is still decomposable, as both $x$ and $y$ are in $A_i$. It is leximin-higher than $\delta^\mathit{CHEG}$, contradicting the leximin-maximality of $\delta^\mathit{CHEG}$.
\end{proof}

\edrisceg*

\begin{proof}
By uniqueness of the equilibrium distribution (\Cref{thm:unique-eq-distribution}), it is sufficient to show that it coincides with the conditional egalitarian distribution.
Let $\delta^{\ceg}$ be the conditional egalitarian distribution with decomposition $\delta^{\ceg} = \sum_{i\in N} \delta^{\ceg}_i$. 
Suppose for contradiction that $\delta^{\ceg}$ is not the equilibrium distribution. 
Then, some agent $i\in N$ contributes to a non-critical charity $x\in A_i$, that is, $\delta^{\ceg}_i(x)>0$ and $\delta^{\ceg}(x)>u_i(\delta^{\ceg})$.

Let $D \coloneqq \min\big(\delta^{\ceg}_i(x),~ \delta^{\ceg}(x)-u_i(\delta^{\ceg})\big)$; our assumptions imply that $D>0$.
Construct a new distribution $\delta'$ from $\delta^{\ceg}$ by changing only $\delta^{\ceg}_i$: remove 
$D$ from charity $x$, and add 
$D / |A_i|$ to every charity in $A_i$ (including $x$).
The utility of $i$ increases by 
$D /|A_i|$, since:
\begin{itemize}
\item $\delta'(x) = \delta^{\ceg}(x)-D+D/|A_i| 
\geq u_i(\delta^{\ceg}) + D/|A_i|$ by definition of $D$;
\item $\delta'(y) = \delta^{\ceg}(y)+D/|A_i| \geq u_i(\delta^{\ceg})+D/|A_i|$ for all $y\in A_i\setminus x$, by \eqref{eq:min-based} with equality for $y \in T_{\delta^{\ceg},i}$;
\item So $u_i(\delta') = \min(\delta'(x), \min_{y\in A_i\setminus x}\delta'(y)) = u_i(\delta^{\ceg}) + D/|A_i| > u_i(\delta^{\ceg})$.
\end{itemize}
Moreover, if the utility of some agent $j$ decreases---that is, $u_j(\delta')<u_j(\delta^{\ceg})$---then this must be because of the decrease in the distribution to $x$, so $x$ must be a critical charity for agent $j$ in $\delta'$, i.e., 
$u_j(\delta')=\delta'(x) \geq u_i(\delta') > u_i(\delta^{\ceg})$.

Thus, moving from $\delta^{\ceg}$ to $\delta'$, the number of agents with utility larger than $u_i(\delta^{\ceg})$ strictly increases, and the utility of each agent with utility at most $u_i(\delta^{\ceg})$ in $\delta^{\ceg}$ does not decrease. 
Therefore, $\mathbf{u}(\delta') \succ_\mathit{lex} \mathbf{u}(\delta^{\ceg})$.
Since $\delta'$ is decomposable, this contradicts the optimality of $\delta^{\ceg}$.
\end{proof}

\fmaximizingisefficient*

\begin{proof}[Proof sketch]
Suppose for contradiction that $\delta$ is not efficient.
By Lemma \ref{lem:efficient-iff-critical}, there is a charity $x \in \supp(\delta)$ which is not critical for any agent. Then, one agent who contributes to $x$ would be able to shift a small amount uniformly to the set of her critical charities such that $x$ is still not critical for any agent. The resulting distribution is still decomposable, and Pareto dominates $\delta$, contradicting the maximality of $\delta$ in \gwelfare.

Uniqueness is proved similarly to \Cref{lem:fmaximizing-is-unique}, using the fact that the set of decomposable distributions is convex, i.e., mixing decomposable distributions results in another decomposable distribution. 
\end{proof}

\subsection{Proof of \Cref{thm: cd-is-fmaximal}}\label{app: equivalenceproof}

\maxfwelfareequivalence*
The proof requires some additional definitions and lemmas and proceeds as follows. First, we show that it is sufficient to prove the statement for \emph{reduced} profiles (\Cref{def: reduced-profiles} and \Cref{lem: reduced-profiles}), which are profiles in which each agent approves only charities that receive the same amount in the equilibrium distribution.
Then we prove that, in any reduced profile, the equilibrium distribution $\delta^*$ maximizes \gwelfare, not only in the set of decomposable distributions but even in a larger set of \emph{weakly decomposable} distributions (\Cref{def:weakly-decomposable}).
To do this, we prove that, for any weakly decomposable distribution $\delta \neq \delta^*$, there exists a modification $\delta'$, which is  weakly decomposable
but has a higher \gwelfare{} than $\delta$.

Recall that $[z] \coloneqq \{1,2,\dots,z\}$ for each positive integer $z$.

\begin{definition}
\label{def: charity-partition}
Given any distribution $\delta$,
define $\mathcal{P}(\delta)$ as a partition of the charities into subsets allocated the same amount.
That is, $\mathcal{P}(\delta)\coloneqq(X_1,\ldots,X_p)$ for some integer $p\geq 1$,
where $\cup_{k=1}^p X_k = A$,
and for each $k\in [p]$,
all charities in $X_k$ receive the same amount,
$\delta(x)=w_k$ for all $x\in X_k$,
and the amounts are ordered such that $0\leq w_1< \dots < w_k$.
\end{definition}
Note that $w_1=0$ if and only if there exist charities that receive no funding.
\begin{lemma}
\label{lem: charity-partition}
Let $\delta^*$ be the equilibrium distribution, and $(X^*_1,\ldots,X^*_p) = \mathcal{P}(\delta^*)$ be its charity partition.
For each $k\geq 1$, let $N^*_k$ be the set of agents who approve one or more charities of $X^*_k$, but do not approve any charity of $\cup_{\ell<k}X^*_{\ell}$.
Then in equilibrium, the agents of $N^*_k$ contribute only to charities of $X^*_k$, and:
\begin{align*}
\delta^*(X^*_k) &= C_{N^*_k} \text{, and}
\\
w^*_k &= C_{N^*_k}/|X^*_k| = \delta^*(X^*_k)/|X^*_k|.
\end{align*}
\end{lemma}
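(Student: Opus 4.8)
The plan is to fix an equilibrium decomposition $(\delta^*_i)_{i\in N}$ satisfying \Cref{lem:eq-iff-critical} and to read off, for every agent, both the value of her utility and her set of critical charities directly from the partition $\mathcal{P}(\delta^*)=(X^*_1,\ldots,X^*_p)$. The entire argument then reduces to a double-counting of where money flows between agents and the blocks $X^*_k$.

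First I would pin down the utility of each agent in $N^*_k$. Since an agent $i\in N^*_k$ approves at least one charity in $X^*_k$ (which receives $w^*_k$) and approves no charity in $\bigcup_{\ell<k}X^*_\ell$ (the only charities receiving strictly less than $w^*_k$), the minimum amount over her approved charities is exactly $w^*_k$; that is, $u_i(\delta^*)=\min_{x\in A_i}\delta^*(x)=w^*_k$. By \eqref{eq:critical-implications} this makes her critical set precisely $T_{\delta^*,i}=A_i\cap X^*_k\subseteq X^*_k$, so by \Cref{lem:eq-iff-critical} she contributes her entire budget $C_i$ inside $X^*_k$.

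Next I would prove the converse containment: any agent who contributes a positive amount to some charity $x\in X^*_k$ must belong to $N^*_k$. Indeed, if $\delta^*_j(x)>0$ then \Cref{lem:eq-iff-critical} forces $x$ to be critical for $j$, so $u_j(\delta^*)=\delta^*(x)=w^*_k$; hence $j$ approves $x\in X^*_k$ and approves no charity receiving less than $w^*_k$, i.e.\ none in $\bigcup_{\ell<k}X^*_\ell$, which is exactly the defining condition for $j\in N^*_k$. Combining the two directions, the charities of $X^*_k$ receive contributions from the agents of $N^*_k$ and from nobody else, while each such agent spends all of $C_i$ there; summing the decomposition gives $\delta^*(X^*_k)=\sum_{j\in N^*_k}\delta^*_j(X^*_k)=\sum_{j\in N^*_k}C_j=C_{N^*_k}$. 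Dividing by $|X^*_k|$ and using that every charity in $X^*_k$ receives the same amount $w^*_k$ yields $w^*_k=C_{N^*_k}/|X^*_k|=\delta^*(X^*_k)/|X^*_k|$.

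The one subtlety, which I expect to require careful handling rather than being a genuine obstacle, is the possible block with $w^*_1=0$. A participating agent (with $C_i>0$) cannot approve a charity receiving $0$: her critical set would then consist of charities of total value $0$, contradicting that she spends $C_i>0$ on them. Hence when $w^*_1=0$ the set $N^*_1$ contains only non-participating (zero-contribution) agents, so both sides of the claimed identities vanish and the statement holds vacuously there. As a consistency check, assigning each agent to the smallest-index block she approves shows that the sets $N^*_1,\ldots,N^*_p$ partition $N$, so that $\sum_k C_{N^*_k}=C_N=\sum_k\delta^*(X^*_k)$ is automatically respected.
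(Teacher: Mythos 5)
Your proof is correct and follows essentially the same route as the paper's (much terser) argument: establish that $u_i(\delta^*)=w^*_k$ for every $i\in N^*_k$, conclude via \Cref{lem:eq-iff-critical} that these agents spend only inside $X^*_k$, and count. Your extra care with the converse containment and the degenerate $w^*_1=0$ block is sound but not strictly necessary, since the sets $N^*_k$ partition $N$ and the forward direction alone already forces $\delta^*(X^*_k)=C_{N^*_k}$.
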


\begin{proof}
The utility of all agents in $N^*_k$ is $w^*_k$, so the set of their critical charities is contained in $X^*_k$. 
In equilibrium, they contribute only to charities in $X^*_k$ by \Cref{lem:eq-iff-critical}.

All charities in $X^*_k$ receive the same amount, so this amount must be $C_{N^*_k}/|X^*_k|$.
\end{proof}
Note that if there are charities 
not approved by any agent (or approved only by agents who contribute $0$), then all these charities will be in $X^*_1$, and we will have $w^*_1=C_{N^*_1}=0$.

\begin{definition}\label{def:weakly-decomposable}
A distribution $\delta$ is called \emph{weakly decomposable} if it has a decomposition in which each agent $i$ only contributes to charities $x$ with $\delta^*(x)\geq u_i(\delta^*)$, where $\delta^*$ denotes the equilibrium distribution.
\end{definition}
With binary weights, 
$x \in A_i$ implies 
$\delta^*(x)\geq u_i(\delta^*)$, so every decomposable distribution is weakly decomposable.
Therefore, it is sufficient to prove that $\delta^*$ maximizes \gwelfare among all weakly decomposable distributions.

The set of weakly decomposable distributions is again convex and can be characterized as follows.

\begin{lemma}\label{lem: weak-dec-part}
A distribution $\delta$ is weakly decomposable if and only if, 
for every $\ell \in [p]$,
\begin{align}
\label{eq:weak-dec-part}
\delta\left(\cup^p_{k=\ell}X^*_k\right) \geq \delta^*\left(\cup^p_{k=\ell}X^*_k\right).
\end{align}
\end{lemma}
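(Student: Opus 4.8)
The plan is to reformulate weak decomposability in terms of the groups $N^*_k$ from \Cref{lem: charity-partition} and then recognize the equivalence as a statement about a transportation (bipartite flow) problem with a nested feasibility structure. Throughout, write $S_\ell := \bigcup_{k=\ell}^{p} X^*_k$ for the set of charities receiving at least $w^*_\ell$ under $\delta^*$. The first thing I would record is that, since weights are binary, an agent $i$ lies in $N^*_k$ exactly when $u_i(\delta^*)=w^*_k$ (its cheapest approved charity sits in $X^*_k$), so the $N^*_k$ partition $N$; moreover, for such an agent the weak-decomposability constraint $\delta^*(x)\ge u_i(\delta^*)$ is equivalent to $x\in S_k$, because $\delta^*(x)\ge w^*_k$ holds precisely for $x\in S_k$ (the values $w^*_1<\dots<w^*_p$ being strictly increasing). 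Hence $\delta$ is weakly decomposable if and only if it admits a decomposition in which every agent of $N^*_k$ spends only on $S_k$. Finally, by \Cref{lem: charity-partition}, $\delta^*(X^*_k)=C_{N^*_k}$, so $\delta^*(S_\ell)=\sum_{k=\ell}^{p} C_{N^*_k}$, which is exactly the right-hand side of \eqref{eq:weak-dec-part}.

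For the forward direction, suppose such a decomposition exists and fix $\ell$. Every agent in a group $N^*_k$ with $k\ge\ell$ is restricted to $S_k\subseteq S_\ell$, so the entire contribution $\sum_{k=\ell}^{p} C_{N^*_k}$ of all these groups is allocated within $S_\ell$. Therefore $\delta(S_\ell)\ge\sum_{k=\ell}^{p} C_{N^*_k}=\delta^*(S_\ell)$, which is \eqref{eq:weak-dec-part}. This direction is immediate once the reformulation above is in place.

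The converse is the substantive part. Assuming \eqref{eq:weak-dec-part} for all $\ell$, I must construct a decomposition of $\delta$ in which each group $N^*_k$ spends only on $S_k$. Since all agents within a group share the same feasibility set $S_k$, and all charities within a class $X^*_\ell$ are interchangeable, it suffices to solve the aggregated transportation problem with supplies $s_k:=C_{N^*_k}$, demands $d_\ell:=\delta(X^*_\ell)$, and the staircase adjacency ``group $k$ may ship to class $\ell$ iff $\ell\ge k$''; a feasible aggregated flow can then be split arbitrarily among the individual agents of each group (respecting their budgets $C_i$) and among the charities of each class. Total supply and demand both equal $C_N$. Because the adjacency is nested, the only binding feasibility requirements come from the source sets $\{k:k\ge\ell\}$, whose common neighborhood is exactly the classes meeting $S_\ell$; the corresponding condition, that the supply $\sum_{k\ge\ell}s_k=\delta^*(S_\ell)$ of these groups not exceed the demand $\delta(S_\ell)$, is precisely \eqref{eq:weak-dec-part}. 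Invoking Gale's supply–demand theorem (equivalently, Hall's condition via max-flow/min-cut) then yields a feasible flow, completing the construction.

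I expect the main obstacle to be the sufficiency claim in this last paragraph: arguing that the hypothesized inequalities are not merely necessary but cover \emph{all} binding cuts. I would handle this by observing that, for a set $U$ of source nodes, the neighborhood depends only on $\min U$ (it is the set of classes with index $\ge\min U$), so the tightest condition for each threshold $\ell$ is obtained by taking the full final segment $U=\{\ell,\dots,p\}$; these reduce exactly to \eqref{eq:weak-dec-part}, and by total balance the complementary sink-side conditions coincide with them as well. If a fully self-contained argument is preferred over citing Gale's theorem, I would instead give an explicit greedy water-filling that serves the highest-index classes first from the lowest-indexed admissible groups, whose feasibility is preserved inductively precisely by \eqref{eq:weak-dec-part}.
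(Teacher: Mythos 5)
Your proof is correct and follows essentially the same route as the paper: reformulate weak decomposability as ``each agent of $N^*_k$ spends only on $\bigcup_{j\ge k}X^*_j$,'' observe that the forward direction is immediate, and reduce the converse to feasibility of a nested transportation problem whose only binding cuts are the final segments $\{\ell,\dots,p\}$. The paper simply asserts this flow-feasibility equivalence in one line, whereas you justify it via Gale's theorem; that added detail is welcome but does not change the argument.
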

\begin{proof}
A distribution $\delta$ is weakly decomposable if and only if there exists a decomposition of~$\delta$ where for every $\ell \in [p]$, agents of $N^*_\ell$ only contribute to charities of $\cup^p_{k=\ell}X^*_k$. 
This holds if and only if 
$\delta\left(\cup^p_{k=\ell}X^*_k\right) \geq \sum_{k=\ell}^{p} C_{N^*_\ell}$ for every $\ell \in [p]$. 
By \Cref{lem: charity-partition},
this is equivalent to the condition $\delta\left(\cup^p_{k=\ell}X^*_k\right) \geq \delta^*\left(\cup^p_{k=\ell}X^*_k\right)$ for every $\ell \in [p]$. 
\end{proof}

To simplify the proof of \Cref{thm: cd-is-fmaximal}, we introduce the following class of profiles.

\begin{definition}\label{def: reduced-profiles}
A profile is called \emph{reduced} if, in its equilibrium distribution $\delta^*$, for every agent $i$, there exists a $k \in [p]$ such that $A_i \subseteq X_k^*$, that is, all charities approved by an agent belong to the same class in the partition induced by $\delta^*$.
\end{definition}
Note that, in a reduced profile, 
all charities approved by agent $i$ receive the same amount $u_i(\delta^*)$ under the equilibrium distribution,
and therefore 
are all critical for $i$, that is, $T_{\delta^*,i} = A_i$ for all $i\in N$.
\begin{lemma}
\label{lem: reduced-profiles}
If \Cref{thm: cd-is-fmaximal} is true for reduced profiles, then it is true for all profiles.
\end{lemma}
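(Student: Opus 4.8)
The plan is to reduce an arbitrary profile $P$ to a reduced profile $\hat P$ that shares the same equilibrium distribution $\delta^*$, and then transport the (strengthened) reduced-profile statement back to $P$. Given $P$ with equilibrium distribution $\delta^*$, I would define $\hat P$ to have the same agents and contributions $C_i$, but with approval sets $\hat A_i := T_{\delta^*,i}$. Since all charities critical for $i$ receive the same amount $u_i(\delta^*)$ under $\delta^*$, each $\hat A_i$ is contained in a single class of $\mathcal P(\delta^*)$; hence $\hat P$ is reduced.

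First I would check that $\delta^*$ is also the equilibrium distribution of $\hat P$. Take the equilibrium decomposition $(\delta^*_i)_{i\in N}$ guaranteed by \Cref{lem:eq-iff-critical}, in which each agent contributes only to $T_{\delta^*,i}$. In $\hat P$ the charities critical for $i$ at $\delta^*$ are $\arg\min_{x\in\hat A_i}\delta^*(x)=\hat A_i=T_{\delta^*,i}$, because every charity of $\hat A_i$ receives the same amount. Thus the very same decomposition certifies, via \Cref{lem:eq-iff-critical}, that $\delta^*$ is in equilibrium for $\hat P$; uniqueness (\Cref{thm:unique-eq-distribution}) makes it the equilibrium distribution of $\hat P$. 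In particular $\mathcal P(\delta^*)$ is unchanged, and the equilibrium utilities agree: $\hat u_i(\delta^*)=\min_{x\in\hat A_i}\delta^*(x)=\min_{x\in A_i}\delta^*(x)=u_i(\delta^*)$, the last equality holding because the minimum over $A_i$ is attained on $T_{\delta^*,i}$.

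The two remaining ingredients are monotonicity of welfare under shrinking approval sets and the coincidence of the weakly decomposable sets. Since $\hat A_i\subseteq A_i$, for every distribution $\delta$ we have $u_i(\delta)=\min_{x\in A_i}\delta(x)\le\min_{x\in\hat A_i}\delta(x)=\hat u_i(\delta)$, so, as $g$ is increasing, $\sum_{i\in N}C_i\,g(u_i(\delta))\le\sum_{i\in N}C_i\,g(\hat u_i(\delta))$ for all $\delta$, with equality at $\delta=\delta^*$ by the previous paragraph. Moreover, by \Cref{def:weakly-decomposable} the set of weakly decomposable distributions is determined entirely by $\delta^*$ and the common values $u_i(\delta^*)=\hat u_i(\delta^*)$; hence $P$ and $\hat P$ have exactly the same weakly decomposable distributions (this can also be read off the purely $\delta^*$-dependent characterization of \Cref{lem: weak-dec-part}). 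Recall too that every decomposable distribution is weakly decomposable.

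I would then conclude as follows. The reduced case supplies (in the strengthened form established for reduced profiles) that $\delta^*$ maximizes $\sum_{i\in N}C_i\,g(\hat u_i(\cdot))$ over all weakly decomposable distributions. Now let $\delta$ be any decomposable distribution of $P$; it is weakly decomposable, so
\[
\sum_{i\in N}C_i\,g(u_i(\delta))\ \le\ \sum_{i\in N}C_i\,g(\hat u_i(\delta))\ \le\ \sum_{i\in N}C_i\,g(\hat u_i(\delta^*))\ =\ \sum_{i\in N}C_i\,g(u_i(\delta^*)),
\]
which shows $\delta^*$ maximizes the $g$-welfare among decomposable distributions of $P$, i.e.\ \Cref{thm: cd-is-fmaximal} holds for $P$. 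As $P$ was arbitrary, the theorem holds for all profiles. The main obstacle is precisely the direction of set inclusion: restricting to $\hat A_i$ \emph{shrinks} the set of decomposable distributions, so the reduced-profile statement over decomposable distributions alone does not directly cover $P$'s decomposable distributions. The fix is to route the comparison through weakly decomposable distributions, whose defining set depends only on $\delta^*$ and is therefore common to $P$ and $\hat P$, combined with the welfare monotonicity $u_i\le\hat u_i$ that is tight exactly at $\delta^*$.
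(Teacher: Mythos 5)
Your proof is correct and follows essentially the same route as the paper's: you shrink each approval set to the critical charities $T_{\delta^*,i}$ (equivalently, drop every $x$ with $\delta^*(x)>u_i(\delta^*)$), observe that $\delta^*$ and hence the set of weakly decomposable distributions is unchanged, and combine welfare monotonicity under shrinking approvals (tight at $\delta^*$) with the strengthened reduced-profile statement over weakly decomposable distributions. Your explicit remark that the literal ``decomposable'' version of the reduced case would not suffice, and that one must route the comparison through weak decomposability, is precisely the mechanism the paper's own (terser) proof relies on.
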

\begin{proof}
Let $P$ be any profile, and $\delta^*$ its equilibrium distribution. Let $P'$ be its reduced profile where, compared to $P$, every agent $i$ has removed her approval from every charity $x$ with $\delta^*(x)>u_i(\delta^*)$.
Then, $\delta^*$ is the equilibrium distribution for $P'$ as well (by the same decomposition). By assumption, \Cref{thm: cd-is-fmaximal} is true for $P'$, so $\delta^*$  maximizes \gwelfare among all distributions that are weakly decomposable with respect to $P'$.
Since the equilibrium distribution is the same in $P$ and $P'$, the set of weakly decomposable distributions is the same too.

The profile~$P$ differs from $P'$ by having additional approvals, which could only decrease the maximal possible \gwelfare.
But $\delta^*$ yields the same welfare in $P$ and $P'$.
Therefore, $\delta^*$ necessarily maximizes \gwelfare among all distributions that are weakly decomposable with respect to $P$, too. 
\end{proof}

\begin{proof}[Proof of \Cref{thm: cd-is-fmaximal}]
Based on \Cref{lem: reduced-profiles}, we assume without loss of generality that we are given a reduced profile.
Let $X^*_1,\dots, X^*_p$ and $N^*_1,\dots,N^*_p$ be the partitioning of charities and agents induced by the equilibrium distribution $\delta^*$,
and $w^*_1 < \dots < w^*_p$ the corresponding allocations.
By \Cref{lem: charity-partition},
each charity in $X^*_k$ receives $w^*_k = \delta^*(X^*_k)/|X^*_k|$, and every  agent $i\in N^*_k$ has utility $w^*_k$. 
Since the profile is reduced, $T_{\delta^*,i} = A_i\subseteq X^*_k$ for all $i\in N^*_k$.

Let $\delta$ be any weakly decomposable distribution different than $\delta^*$.
We prove that $\delta$ does not maximize \gwelfare among weakly decomposable distributions by showing a modification $\delta'$ of $\delta$, which is weakly decomposable
but has a higher \gwelfare{} than $\delta$.

Since $\delta\neq \delta^*$ and both distributions sum up to $C_N$,
there must be charities $x^-,x^+\in A$  with $\delta(x^-)<\delta^*(x^-)$ and $\delta(x^+)>\delta^*(x^+)$. 
Consequently, one of the following two cases has to apply:
\begin{itemize}
\item If $\delta(X^*_k)=\delta^*(X^*_k)$ for all $k \in [p]$, let $X^*_r=X^*_s$ $(r=s)$ be a class that contains a charity $x^-$ with $\delta(x^-)<\delta^*(x^-)$. 
\item Otherwise, 
let $r$ be the largest index in $[p]$ for which $\delta(X^*_r)\neq \delta^*(X^*_r)$.
Weak decomposability of $\delta$
and \Cref{lem: weak-dec-part} imply that
$\delta(X^*_r)>\delta^*(X^*_r)$.
As $\delta(X^*_k)=\delta^*(X^*_k)$ for all $k>r$, there must be an $s\leq r$ 
such that there exists a charity $x^-$ in $X^*_s$ with $\delta(x^-)<\delta^*(x^-)$;
choose $s \leq r$ to be the largest index with this property.
\end{itemize}
In both cases, we define $X^- \subseteq X^*_s$ as the set of all charities $x$ in $X^*_s$ with $\delta(x)<\delta^*(x)$, and $X^+ \subseteq X^*_r$ as the set of all charities $x$ in $X^*_r$ with $\delta(x)>\delta^*(x)$; both sets must be non-empty by construction.
The case $r>s$ is depicted in \Cref{fig:fwelfare}.

Starting from $\delta$,
transfer a sufficiently small amount $\epsilon$ uniformly from $X^+$ to $X^-$; call the resulting distribution $\delta'$.
We choose $\epsilon$ small enough such that it does not change the order relations between charities inside and outside $X^+$ and $X^-$, that is, for all $x^- \in X^-$ and $x^+ \in X^+$: 
$\delta'(x^+)>\delta'(x)$
for all 
$x \in A$ with $\delta(x^+)>\delta(x)$, 
and analogously, $\delta'(x^-)<\delta'(x)$ for all 
$x \in A$ with $\delta(x^-)<\delta(x)$. 
In particular, since $\delta(x^+)>\delta^*(x^+)\ge \delta^*(x^-) > \delta(x^-)$, we have  $\delta'(x^+) > \delta'(x^-)$.

We claim that $\delta'$ is weakly decomposable. 
By \Cref{lem: weak-dec-part}, it suffices to show that \eqref{eq:weak-dec-part} 
holds for 
$\delta'$, that is,
$\delta'\left(\cup^p_{k=\ell}X^*_k\right) \geq \delta^*\left(\cup^p_{k=\ell}X^*_k\right)$
for every $\ell \in [p]$.
Note that
$\delta'\left(\cup^p_{k=\ell}X^*_k\right)=\delta\left(\cup^p_{k=\ell}X^*_k\right)$ for all $\ell \leq s$ and all $\ell \geq r+1$, so for these indices, 
\eqref{eq:weak-dec-part} for $\delta'$ follows from the weak-decomposability of~$\delta$.
It therefore remains to prove 
\eqref{eq:weak-dec-part} for $\ell\in \{s+1,\ldots, r\}$.
This set is non-empty only when $s<r$, which is possible only in the second case above. 

Our choices of $r$ and $s$ ensure that $\delta\left(\cup^p_{k=r}X^*_k\right) > \delta^*\left(\cup^p_{k=r}X^*_k\right)$ and $\delta\left(\cup^s_{k=1}X^*_k\right) < \delta^*\left(\cup^s_{k=1}X^*_k\right)$. 
For $\epsilon$ sufficiently small, the same inequalities hold between $\delta'$ and $\delta^*$. 
Moreover, for $s<\ell\le r$,
\begin{align*}
     \delta'\left(\cup^p_{k=\ell}X^*_k\right)= \delta'\left(\cup^p_{k=r}X^*_k\right)+
     \delta'\left(\cup^{r-1}_{k=\ell}X^*_k\right)>
\delta^*\left(\cup^p_{k=r}X^*_k\right)+
\delta\left(\cup^{r-1}_{k=\ell}X^*_k\right) \geq
\delta^*\left(\cup^p_{k=\ell}X^*_k\right)
\end{align*}
where the first inequality holds because $\delta'\left(\cup^p_{k=r}X^*_k\right) > \delta^*\left(\cup^p_{k=r}X^*_k\right)$ and $\delta'(X^*_k)=\delta(X^*_k)$ for all $k \not\in\{r,s\}$, and the second inequality holds because, for each $k\in\{s+1,\dots,r-1\}$, all charities $x$ in $X^*_k$ satisfy $\delta(x) \ge \delta^*(x)$ by definition of $s$.
Therefore, by \Cref{lem: weak-dec-part}, $\delta'$ is still weakly decomposable.

\begin{figure}
\begin{center}
\begin{tikzpicture}[scale=0.8]
\draw[->] (-1,0) -- (8.3,0);
\draw[->] (0,-0.5) -- (0,8);
\node at (8,-0.4) {\large $\delta^*$};
\node at (-0.3,7.7) {\large $\delta$};
\node at (1,-0.5) {\large $X_1^*$};
\node at (3,-0.5) {\large $X_s^*$};
\node at (5,-0.5) {\large $X_r^*$};
\node at (7,-0.5) {\large $X_p^*$};
\draw [blue,fill=blue!10] (0.5,0.5) rectangle (1.5,7.5);
\draw [blue,fill=blue!10] (2.5,0.5) rectangle (3.5,7.5);
\draw [blue,fill=blue!10] (4.5,0.5) rectangle (5.5,7.5);
\draw [blue,fill=blue!10] (6.5,0.5) rectangle (7.5,7.5);
\draw[thick] (2.5,2) -- (3.5,2);
\node at (3,1.2) {$X^-$};
\draw[fill=black] (2.9,2) circle  [radius=0.08];
\node at (3,2.4) {$x^-_{\max}$};
\draw[thick] (4.5,5) -- (5.5,5);
\node at (5,6.2) {$X^+$};
\draw[fill=black] (4.9,5) circle  [radius=0.08];
\node at (5,4.6) {$x^+_{\min}$};
\end{tikzpicture}
\end{center}
\caption{Charity sets in the proof of \Cref{thm: cd-is-fmaximal}, for the case $r>s$.
The horizontal position of a charity denotes its allocation in $\delta^*$;
the vertical position denotes its allocation in $\delta$.
}
\label{fig:fwelfare}
\end{figure}

We now analyze the effect of this redistribution on the agents' utilities. For this, we prove an auxiliary claim on critical charities of agents under $\delta$.
Define $x^+_{\min}\in\argmin_{x^+ \in X^+}\delta(x^+)$ as a charity from $X^+$ with minimal allocation in $\delta$ and $x^-_{\max}\in\argmax_{x^- \in X^-}\delta(x^-)$ as a charity from $X^-$ with maximal contribution in $\delta$.

\paragraph{Claim.}
For every agent $i\in N$, either $T_{\delta,i}\cap X^- = \emptyset$ or $T_{\delta,i}\subseteq X^-$. 
Similarly, 
either $T_{\delta,i}\cap X^+ = \emptyset$ or $T_{\delta,i}\subseteq X^+$. 

\paragraph{Proof of claim.}
We prove the claim for $X^-$; the proof for $X^+$ is analogous.
By definition of critical charities, $T_{\delta,i}\subseteq A_i$. 
Since the profile is reduced, $A_i$ is contained in a single partition class.
If this partition class is not the one that contains $X^-$, namely $X^*_s$, then $T_{\delta,i}\cap X^- = \emptyset$. Otherwise, 
$T_{\delta,i}\subseteq X_s^*$.
Now, if $u_i(\delta) > \delta(x^-_{\max})$,
then $\delta(x)  > \delta(x^-_{\max})$ 
for every $x\in T_{\delta,i}$,
so $T_{\delta,i}\cap X^- = \emptyset$;
and if $u_i(\delta) \leq \delta(x^-_{\max})$,
then $\delta(x) \leq \delta(x^-_{\max})$ for every $x\in T_{\delta,i}$,
so $T_{\delta,i}\subseteq X^-$.

\paragraph{Back to proof of theorem.} 
Denote by ``losers'' the agents who lose utility from the redistribution. The claim implies that all the losers have $T_{\delta,i} \subseteq X^+$; each of them loses  $\epsilon/|X^+|$.
Moreover, all losers have $A_i\subseteq X^+$:
this is because $A_i\subseteq X_r^*$ (since the profile is reduced),
and $\delta(x_A)\geq \delta(x_T) \geq \delta(x^+_{\min})$ for all $x_A\in A_i$ and $x_T\in T_{\delta,i}$.
Therefore, under the equilibrium distribution, all losers give all their contributions to charities in $X^+$. 
This implies that the contributions of all losers sum up to at most  $\delta^*(X^+) = w^*_r\cdot |X^+|$.
Then, for every loser $i$,
\begin{align}
\label{ineq: losers}
g\left(u_i(\delta)\right)-g\left(u_i(\delta')\right)\leq 
g\left(\delta(x^+_{\min}) \right)- g\left(\delta(x^+_{\min})-\frac{\epsilon}{|X^+|} \right)
\end{align}
by concavity of $g$ (which follows from the assumption that $x g'(x)$ is non-increasing).

Denote by ``gainers'' the agents who gain utility from the redistribution.
The claim implies that 
every agent with $T_{\delta,i} \cap X^-\neq\emptyset$ is a gainer;
each of them gains $\epsilon/|X^-|$.
Moreover, every agent with $A_i \cap X^-\neq\emptyset$ is a gainer:
this is because $A_i \cap X^-\neq\emptyset$ implies 
$\delta(x_A) \leq \delta(x^-_{\max})$ for at least one charity $x_A\in A_i$,
and $\delta(x_T)\leq \delta(x_A)$ for all charities $x_T\in T_{\delta,i}$.
Therefore, under the equilibrium distribution, every agent who contributes a positive amount to at least one charity in $X^-$ must be a gainer. 
So the contributions of all gainers must sum up to at least $\delta^*(X^-) = w^*_s\cdot |X^-|$.
Then, for every gainer $i$,
\begin{align}
\label{ineq: winners}
g\left(u_i(\delta')\right)-g\left(u_i(\delta)\right)\geq 
g \left(\delta(x^-_{\max})+\frac{\epsilon}{|X^-|} \right)-g \left(\delta(x^-_{\max}) \right)
\end{align}
by concavity of $g$.

Therefore, by (\ref{ineq: losers}) and (\ref{ineq: winners}), the increase in \gwelfare from $\delta$ to $\delta'$ is at least
\begin{align}
\label{ineq: eps-shift}
&
w^*_s\cdot|X^-| \cdot\left[ g \left(\delta(x^-_{\max})+\frac{\epsilon}{|X^-|} \right)-g \left(\delta(x^-_{\max}) \right) \right]
\\ \notag
&-
w^*_r \cdot|X^+| \cdot\left[g\left(\delta(x^+_{\min}) \right)- g\left(\delta(x^+_{\min})-\frac{\epsilon}{|X^+|} \right)\right]. 
\end{align}
Since $g$ is strictly concave,
\begin{align*}
    g \left(\delta(x^-_{\max})+\frac{\epsilon}{|X^-|} \right)-g (\delta(x^-_{\max})) &>\frac{\epsilon}{|X^-|}\cdot g' \left(\delta(x^-_{\max})+\frac{\epsilon}{|X^-|} \right);
    \\
    g(\delta(x^+_{\min}))-
    g\left(\delta(x^+_{\min})-\frac{\epsilon}{|X^+|}\right) &< 
    \frac{\epsilon}{|X^+|}\cdot g'\left(\delta(x^+_{\min})-\frac{\epsilon}{|X^+|} \right).
\end{align*}
Plugging this into (\ref{ineq: eps-shift}), 
we get that 
the increase in \gwelfare  is larger than
\begin{align*}
w^*_s\cdot|X^-|\cdot \frac{\epsilon}{|X^-|}\cdot g' \left(\delta(x^-_{\max})+\frac{\epsilon}{|X^-|} \right)
-
w^*_r\cdot|X^+|\cdot\frac{\epsilon}{|X^+|}\cdot g'\left(\delta(x^+_{\min})-\frac{\epsilon}{|X^+|} \right).
\end{align*}

By our choice of $\epsilon$,
we have 
$w^*_r=\delta^*(x^+_{\min}) < \delta(x^+_{\min})$,
so
$w^*_r<\delta(x^+_{\min})-\epsilon/|X^+|$
for sufficiently small $\epsilon$.
Similarly, 
 $w^*_s=\delta^*(x^-_{\max}) > \delta(x^-_{\max})+\epsilon/|X^-|$ for sufficiently small~$\epsilon$.
Therefore, the increase in \gwelfare is larger than
\begin{align}
\begin{split}
\label{eq:fwelfare-increase}
&\epsilon\cdot\left(\delta(x^-_{\max})+\frac{\epsilon}{|X^-|}\right)\cdot g' \left(\delta(x^-_{\max})+\frac{\epsilon}{|X^-|} \right)
\\
&-\epsilon\cdot\left(\delta(x^+_{\min})-\frac{\epsilon}{|X^+|}\right)\cdot g'\left(\delta(x^+_{\min})-\frac{\epsilon}{|X^+|} \right).
\end{split}
\end{align}
By our choice of $\epsilon$, $\delta(x^-_{\max})+\epsilon/|X^-| < \delta(x^+_{\min})-\epsilon/|X^+|$.
By the assumption on $g$, $xg'(x)$ is non-increasing in $x$.
Therefore, the expression in \eqref{eq:fwelfare-increase} is at least $0$, 
so the increase in \gwelfare from $\delta$ to~$\delta'$ is larger than $0$. 
This means that $\delta$ does not maximize \gwelfare.

Since $\delta$ was any weakly decomposable distribution different than $\delta^*$, we conclude that $\delta^*$ maximizes \gwelfare subject to weak decomposability in any reduced profile.
By \Cref{lem: reduced-profiles}, 
the same is true in any profile.
\end{proof}

\subsection{Proof of \Cref{prop: p>0}}\label{app: p>0}

\pwelfarelemma*
\begin{proof}
For a fixed $p>0$, consider a profile consisting of two agents with binary weights and approval sets $\{a\}$ and $\{a,b\}$, and respective contributions $C_1=\max\left({(2^{p-1}\cdot p)}^{\nicefrac{-1}{p}},2\right)$ and $C_2=1$. 
Since $C_1 \ge 2$, the equilibrium distribution is $(C_1,1)$.
We claim that the decomposable distribution $(C_1+1,0)$ yields a higher \gwelfare, that is,
\begin{align*}
&    C_1\cdot g(C_1+1) + 1\cdot g(0) >C_1 \cdot g(C_1) +1\cdot g(1)
    \\
\Leftrightarrow \quad &    
 C_1 \cdot \left(g(C_1+1) - g(C_1)\right) > 1.
\end{align*}
For every $p \ge 1$, $g$ is convex, so 
\begin{align*}
&g(C_1+1) - g(C_1) \geq g'(C_1)\cdot 1 = p\cdot C_1^{p-1}
\\
\Rightarrow \quad &
C_1\cdot \left(g(C_1+1) - g(C_1)\right) \geq p\cdot C_1^{p} \geq p\cdot 2^p \geq 2 > 1.
\end{align*}
For every $0<p<1$, $g$ is strictly concave, so 
\begin{align*}
g(C_1+1) - g(C_1) > g'(C_1+1)\cdot 1 &= p\cdot (C_1+1)^{p-1}
\\
& > p \cdot (2 C_1)^{p-1}  && \text{(since $p-1<0$ and $C_1 > 1$)}
\end{align*}
and therefore
\begin{align*}
C_1\cdot \left(g(C_1+1) - g(C_1)\right) &> 2^{p-1}\cdot p\cdot C_1^{p} 
\\
&\geq 
 2^{p-1} \cdot p \cdot \left(\left(\frac{1}{2^{p-1}p}\right)^{\frac{1}{p}}\right)^p = 1.
\end{align*}
In both cases, the equilibrium distribution does not maximize \gwelfare. 
\end{proof}

\end{document}